\newcommand{\ssi}{\subseteq_i}
\newcommand{\si}{\supseteq_i}
\newcommand{\NP}{{\sf NP}}
\DeclareMathOperator{\cw}{cw}
\newcounter{ctrclaim}[theorem]
\newcounter{ctrobs}[theorem]
\newcounter{ctrcase}[theorem]
\newcounter{ctrsubcase}[ctrcase]
\newcounter{ctrfake}
\renewcommand{\thectrsubcase}{\thectrcase\alph{ctrsubcase}}
 \newcommand\displaycase[1]{{\bf #1}}
 \newcommand{\qedllncs}{\qed}
\newenvironment{enumeratei}{\begin{enumerate}[(i)]

}{\end{enumerate}}
\newenvironment{enumerate1}
    {\begin{enumerate}[1.]

    }
    {\end{enumerate}}
\newcommand{\clm}[1]{\setcounter{ctrcase}{0}\medskip\phantomsection\refstepcounter{ctrclaim}\noindent\displaycase{Claim \thectrclaim. }{\em #1}\\}
\newcommand{\smallclm}[1]{\setcounter{ctrcase}{0}\phantomsection\refstepcounter{ctrclaim}\noindent\displaycase{Claim \thectrclaim. }{\em #1}\\}
\newcommand{\lastsmallclm}[1]{\setcounter{ctrcase}{0}\phantomsection\refstepcounter{ctrclaim}\noindent\displaycase{Claim \thectrclaim. }{\em #1}}
\newcommand{\thmcase}[1]{\medskip\phantomsection\refstepcounter{ctrcase}\noindent\displaycase{Case \thectrcase: }{\em #1}\\}
\newcommand{\thmsubcase}[1]{\medskip\phantomsection\refstepcounter{ctrsubcase}\noindent\displaycase{Case \thectrsubcase: }{\em #1}\\}
\spnewtheorem{oproblem}{Open Problem}{\bfseries}{\itshape}
\title{Colouring Diamond-free Graphs\thanks{First and last author supported by EPSRC (EP/K025090/1). An extended abstract of this paper appeared in the proceedings of SWAT 2016~\cite{DDP15}.}}
\author{Konrad K. Dabrowski\inst{1}, Fran\c{c}ois Dross\inst{2} \and Dani\"el Paulusma\inst{1}}
\institute{School of Engineering and Computing Sciences, Durham University,\\ Science Laboratories, South Road, Durham DH1 3LE, United Kingdom
\texttt{\{konrad.dabrowski,daniel.paulusma\}@durham.ac.uk}
\and
Universit\'e de Montpellier, France - Laboratoire d'Informatique, de Robotique et de Micro\'electronique de Montpellier, France
\texttt{francois.dross@ens-lyon.fr}
}
\begin{document}
\maketitle
\setcounter{footnote}{0}

\begin{abstract}
The {\sc Colouring} problem is that of deciding, given a graph~$G$ and an integer~$k$, whether~$G$ admits a (proper) $k$-colouring.
For all graphs~$H$ up to five vertices, we classify the computational complexity of {\sc Colouring} for $(\mbox{diamond},H)$-free graphs. Our proof is based on combining known results together with proving that the clique-width is bounded for $(\mbox{diamond},\allowbreak P_1+\nobreak 2P_2)$-free graphs.
Our technique for handling this case is to reduce the graph under consideration to a $k$-partite graph that has a very specific decomposition. As a by-product of this general technique we are also able to prove boundedness of clique-width for four other new 
classes of $(H_1,H_2)$-free graphs.
As such, our work also continues a recent systematic study into the (un)boundedness of clique-width of $(H_1,H_2)$-free graphs, and our five new classes of bounded clique-width reduce the number of open cases from~13 to~8.
\end{abstract}

\section{Introduction}\label{s-intro}

The {\sc Colouring} problem is that of testing whether a given graph can be coloured with at most~$k$ colours for some given integer~$k$, such that any two adjacent vertices receive different colours. The complexity of {\sc Colouring} is fully understood for general graphs: it is \NP-complete even if $k=3$~\cite{Lo73}.
Therefore it is natural to study its complexity when the input is restricted. A classic result in this area is due to Gr\"otschel, Lov\'asz, and Schrijver~\cite{GLS84}, who proved that {\sc Colouring} is polynomial-time solvable for perfect graphs.

As surveyed in~\cite{C14,DLRR12,GJPS,RS04b}, {\sc Colouring} has been well studied for hereditary graph classes, that is, classes that can be defined by a family~${\cal H}$ of forbidden induced subgraphs. For a family~${\cal H}$ consisting of one single forbidden induced subgraph~$H$, the complexity of {\sc Colouring} is completely classified: the problem is polynomial-time solvable if~$H$ is an induced subgraph of~$P_4$ or $P_1+\nobreak P_3$ and \NP-complete otherwise~\cite{KKTW01}.
Hence, many papers (e.g.~\cite{BGPS12a,DGP14,HL,KKTW01,LM15,Ma13,Ma,Sc05}) have considered the complexity of {\sc Colouring} for 
bigenic hereditary graph classes, that is, graph classes defined by families~${\cal H}$ consisting of two forbidden graphs~$H_1$ and~$H_2$; such classes of graphs are also called $(H_1,H_2)$-free. This classification is far from complete (see~\cite{GJPS} for the state of art).
In fact there are still an infinite number of open cases, including cases where both~$H_1$ and~$H_2$ are small. For instance, Lozin and Malyshev~\cite{LM15} determined the computational complexity of {\sc Colouring} for $(H_1,H_2)$-free graphs for all graphs~$H_1$ and~$H_2$ up to four vertices except when $(H_1,H_2) \in \{(K_{1,3},4P_1),\allowbreak (K_{1,3},2P_1+\nobreak P_2),\allowbreak (C_4,4P_1)\}$
(we refer to Section~\ref{s-prelim} for notation and terminology).

The {\em diamond} is the graph $\overline{2P_1+P_2}$, that is, the graph obtained from the complete graph on four vertices by removing an edge.
Diamond-free graphs are well studied in the literature. For instance, Tucker~\cite{Tu87} gave an $O(kn^2)$ time algorithm for
{\sc Colouring} for perfect diamond-free graphs. 
It is also known that that {\sc Colouring} is polynomial-time solvable for diamond-free graphs
that contain no induced cycle of even length~\cite{KMV09} as well as for
diamond-free graphs that contain no induced cycle of length at least~5~\cite{BGM12}.
Diamond-free graphs also played an important role in proving that the class of $P_6$-free graphs contains 24 minimal obstructions for 4-{\sc Colouring}~\cite{CGSZ15} (that is, the {\sc Colouring} problem for $k=4$).

\subsection{Our Main Result}

In this paper we focus on {\sc Colouring} for $(\mbox{diamond},H)$-free graphs where~$H$ is a graph on at most five vertices. 
It is known that {\sc Colouring} is \NP-complete for $(\mbox{diamond},H)$-free graphs when~$H$ contains a cycle or a claw~\cite{KKTW01} and
polynomial-time solvable for $H=sP_1+\nobreak P_2$ ($s\geq 0$)~\cite{DGP14}, $H=2P_1+\nobreak P_3$~\cite{BDHP15}, $H=P_1+\nobreak P_4$~\cite{BLM04}, $H=P_2+\nobreak P_3$~\cite{DHP0} and $H=P_5$~\cite{AM02}.
Hence, the only graph~$H$ on five vertices that remains is $H=P_1+\nobreak 2P_2$, for which we prove polynomial-time solvability in this paper.
This leads to the following result.

\begin{sloppypar}
\begin{theorem}\label{t-maincolouring}
Let~$H$ be a graph on at most five vertices. Then {\sc Colouring} is polynomial-time solvable for $(\mbox{diamond},H)$-free graphs if~$H$ is a linear forest and
\NP-complete otherwise.
\end{theorem}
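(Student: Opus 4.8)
The proof splits along the dichotomy in the statement. For the \NP-complete direction, suppose $H$ is not a linear forest. If $H$ contains a cycle then it contains an induced cycle $C_\ell$ for some $\ell\ge 3$; otherwise $H$ is a forest with a vertex of degree at least three, whose neighbours are pairwise non-adjacent and hence induce, together with that vertex, a $K_{1,3}$. In either case $(\mbox{diamond},C_\ell)$-free graphs, respectively $(\mbox{diamond},K_{1,3})$-free graphs, form a subclass of $(\mbox{diamond},H)$-free graphs, so \NP-completeness is inherited from the known \NP-completeness of {\sc Colouring} on those two classes~\cite{KKTW01}. This part requires nothing new.

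For the polynomial direction, let $H$ be a linear forest on at most five vertices; up to isomorphism there are eighteen of these. For each, I would point to a (possibly larger) linear forest $H'$ with $H$ an induced subgraph of $H'$ for which {\sc Colouring} is already known --- or, for $P_1+2P_2$, established here --- to be polynomial-time solvable on $(\mbox{diamond},H')$-free graphs; since $H\ssi H'$ forces every $(\mbox{diamond},H)$-free graph to be $(\mbox{diamond},H')$-free, this is enough. Explicitly: $rP_1$ with $r\le 5$ is an induced subgraph of $rP_1+P_2$~\cite{DGP14}; $P_3$ and $P_1+P_3$ are induced subgraphs of $2P_1+P_3$~\cite{BDHP15}; $P_4$ is an induced subgraph of $P_5$~\cite{AM02}; $2P_2$ is an induced subgraph of $P_2+P_3$~\cite{DHP0}; while $sP_1+P_2$ for $0\le s\le 3$~\cite{DGP14}, $2P_1+P_3$~\cite{BDHP15}, $P_2+P_3$~\cite{DHP0}, $P_1+P_4$~\cite{BLM04} and $P_5$~\cite{AM02} are themselves among the cited cases. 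This reduces everything to the single remaining case $H=P_1+2P_2$.

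For $H=P_1+2P_2$ the plan is to prove that $(\mbox{diamond},P_1+2P_2)$-free graphs have bounded clique-width and then to invoke the well-known fact that {\sc Colouring} is polynomial-time solvable on any graph class of bounded clique-width (a clique-width expression of bounded width can be computed in polynomial time). To bound the clique-width I would exploit that diamond-freeness makes every neighbourhood a disjoint union of cliques, and then, via clique-width-preserving operations, bring an arbitrary $(\mbox{diamond},P_1+2P_2)$-free graph down to a $k$-partite graph carrying a very specific decomposition whose clique-width can be bounded directly. I expect essentially all of the difficulty to lie here: the diamond condition is purely local whereas forbidding $P_1+2P_2$ is a global restriction, and reconciling the two --- understanding how the cliques inside different neighbourhoods overlap, and partitioning the vertex set into a bounded number of well-behaved ``types'' --- is the crux. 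The same machinery is presumably what also yields the four further classes of bounded clique-width announced in the abstract.
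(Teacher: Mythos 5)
Your reduction of Theorem~\ref{t-maincolouring} to the single case $H=P_1+2P_2$ is correct and is exactly the paper's own argument: for the \NP-complete direction, any $H$ that is not a linear forest contains an induced $C_\ell$ or an induced $K_{1,3}$, and both $(\mbox{diamond},C_\ell)$-free and $(\mbox{diamond},K_{1,3})$-free graphs are covered by~\cite{KKTW01}; your absorption of the linear forests on at most five vertices into the known polynomial cases $sP_1+P_2$~\cite{DGP14}, $2P_1+P_3$~\cite{BDHP15}, $P_1+P_4$~\cite{BLM04}, $P_2+P_3$~\cite{DHP0} and $P_5$~\cite{AM02} matches the paper's own bookkeeping.

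The gap is that for $H=P_1+2P_2$ you have only announced a plan, not given a proof, and this case is the entire technical content of the paper (Sections~\ref{s-total}--\ref{s-diamond}). ``Reduce to a $k$-partite graph with a very specific decomposition'' is the right slogan, but nothing in your proposal identifies what that decomposition is or why a $(\mbox{diamond},P_1+2P_2)$-free graph admits it. Concretely, the paper's route is: (i) introduce totally $k$-decomposable $k$-partite graphs, prove a sufficient condition for total $3$-decomposability (Lemma~\ref{lem:123-imp}) and show such graphs have clique-width at most~$2k$ (Lemma~\ref{lem:k-decomp}); (ii) settle the triangle-free case $(K_3,P_1+2P_2)$ first, by analysing the neighbourhood structure of an induced $C_5$ and verifying the seven conditions of Lemma~\ref{lem:gen}; (iii) for graphs containing a triangle, dispose of the disconnected and $K_4$-containing cases separately (Lemmas~\ref{lem:discon} and~\ref{lem:K4}) and then show the remaining graphs are ``basic'': their triangles are pairwise vertex-disjoint, joined by induced perfect matchings, and admit a linear order $T^1<\cdots<T^p$ interleaved with sets $W_i$, from which a $9$-expression is built directly (Lemma~\ref{lem:basic-bdd-cw}). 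Your one concrete structural remark---that diamond-freeness makes every neighbourhood a disjoint union of cliques---plays no role in this argument; once $K_4$'s have been excluded, the fact actually used is that every vertex outside a triangle has at most one neighbour in it. So the heart of the proof is genuinely missing rather than merely compressed.
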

\end{sloppypar}
To solve the case $H=P_1+2P_2$, one could try to reduce to a subclass of diamond-free graphs, for which {\sc Colouring} is polynomial-time solvable, such as the aforementioned results of~\cite{BGM12,KMV09,Tu87}. This would require us to deal with the presence of small cycles up to $C_7$, which may not be straightforward. Instead we aim to identify tractability from an underlying property: we show that 
the class of $(\mbox{diamond},\allowbreak P_1+\nobreak 2P_2)$-free graphs has bounded clique-width.
This approach has several advantages and will lead to a number of additional results, as we will discuss in the remainder of Section~\ref{s-intro}.

\medskip
\noindent
{\bf Clique-width} is
a graph decomposition that can be constructed via vertex labels and four specific graph
operations, which ensure that vertices labelled alike will always keep the same
label and thus behave identically. The clique-width of a graph~$G$ is the
minimum number of different labels needed to construct~$G$ using these
four operations (we refer to Section~\ref{s-prelim} for a precise definition).
A graph class~${\cal G}$ has {\it bounded} clique-width if there exists a constant~$c$ such that every graph from~${\cal G}$ has clique-width at most~$c$.

Clique-width is a well-studied graph parameter (see, for instance, the surveys \cite{Gu07,KLM09}). 
An important reason for the popularity
of clique-width is that a number of classes of \NP-complete problems, such as those that are definable in Monadic Second Order Logic using quantifiers on vertices but not on edges,
become polynomial-time solvable on any graph class~${\cal G}$ of bounded clique-width (this follows from
combining results from~\cite{CMR00,EGW01,KR03b,Ra07} with a result from~\cite{Oum08}).
The {\sc Colouring} problem is one of the best-known \NP-complete problems
that is solvable in polynomial time on graph classes of bounded clique-width~\cite{KR03b}; another well-known example of such a problem is
{\sc Hamilton Path}~\cite{EGW01}.

\subsection{Methodology}
The key technique for proving that  $(\mbox{diamond},\allowbreak P_1+\nobreak 2P_2)$-free graphs have bounded clique-width is the use of a certain graph decomposition of $k$-partite graphs.
We  obtain this decomposition by generalizing the so-called canonical decomposition of bipartite graphs, 
which decomposes a bipartite graph into two smaller bipartite graphs
such that edges between these two smaller bipartite graphs behave in a very
restricted way. Fouquet, Giakoumakis and Vanherpe~\cite{FGV99} 
introduced this decomposition and characterized
exactly those bipartite graphs that can recursively be canonically decomposed
into graphs isomorphic to~$K_1$. Such bipartite graphs are said to be totally
decomposable by canonical decomposition. We say that $k$-partite graphs 
are totally $k$-decomposable if they 
can be, according to our generalized definition,
recursively $k$-decomposed into graphs isomorphic to~$K_1$. We show that totally $k$-decomposable graphs have clique-width at most~$2k$.
We prove this result in Section~\ref{s-total}, where we also give a formal definition of canonical decomposition, along with our generalization.

Our goal is to transform $(\mbox{diamond},\allowbreak P_1+\nobreak 2P_2)$-free graphs into
graphs in some class for which we already know that the clique-width is bounded.
Besides the class of totally $k$-decomposable graphs, we will also reduce to other known graph classes of bounded clique-width, such
as the class of $(\mbox{diamond},\allowbreak P_2+\nobreak P_3)$-free graphs~\cite{DHP0} and certain classes of $H$-free bipartite graphs~\cite{DP14}. Of course, our transformations must not change the clique-width by ``too much''. We ensure this by using certain graph operations (described in Section~\ref{s-prelim}) that are known to preserve (un)boundedness of clique-width~\cite{KLM09,LR04}.

\subsection{Consequences for Clique-Width}\label{s-cons}
There are numerous papers  (as listed in, for instance,~\cite{DP15,Gu07,KLM09}) that determine the (un)boundedness of the clique-width 
or variants of it (see e.g.~\cite{BGMS14,HMP12}) of special graph classes. Due
to the complex nature of clique-width, proofs of these results are often long
and technical, and there are still many open cases.
In particular, gaps exist in a number of dichotomies on the (un)boundedness of clique-width for graph classes defined by one or more forbidden induced subgraphs.  As such our paper also continues a line of research~\cite{BDHP15,BDHP15b,DHP0,DP14,DP15} in which we focus on these gaps in a {\it systematic} way. 
It is known~\cite{DP15} that the class of $H$-free graphs has bounded clique-width if and only if~$H$ is an induced subgraph of~$P_4$.
 Over the years 
many partial results~\cite{BL02,BELL06,BKM06,BLM04b,BLM04,BM02,DLRR12,MR99}
on the (un)boundedness of clique-width have appeared for classes of $(H_1,H_2)$-free graphs,
but until recently~\cite{DP15} it was not even known whether the number of missing cases was bounded. 
Combining these older results
with recent progress~\cite{BDHP15,DGP14,DHP0,DP15} reduced the number of open cases to 13 
(up to an equivalence relation)~\cite{DP15}.

As a by-product of our general methodology, we are able not only to settle the case $(H_1,H_2)=(\mbox{diamond},P_1+2P_2)$, but in fact we solve
{\bf five} of the remaining 13 open cases by proving that the class of $(H_1,H_2)$-free graphs has bounded clique-width if 
$$
\begin{array}{llcl}
\mbox{{\bf 1--4:}}\; &H_1=K_3\; \mbox{and}\; H_2\in \{P_1+\nobreak 2P_2,
P_1+\nobreak P_2+\nobreak P_3,
P_1+\nobreak P_5,
S_{1,2,2}\}\; \mbox{or}\\[4pt] 
\mbox{{\bf 5:}} &H_1=\mbox{diamond}\; \mbox{and}\; H_2=P_1+\nobreak 2P_2.\end{array}$$ 
The above graphs are displayed in \figurename~\ref{fig:solved-cases}.
Note that the $(K_3,P_1+\nobreak 2P_2)$-free graph case is 
properly contained in
all four of the other cases.
These four other newly solved cases are pairwise incomparable. 
In Section~\ref{sec:sufficient} we use our key technique on totally $k$-decomposable graphs to find a number of sufficient conditions for a graph class to have bounded clique-width.
We use these conditions in Section~\ref{s-triangle} to prove Results 1--4 and we then prove Result~5 (which relies on Result~1) in Section~\ref{s-diamond}.

\begin{figure}[h]
\begin{center}
\centering
\begin{tabular}{cccccc}
\begin{minipage}{0.08\textwidth}
\begin{center}
\scalebox{0.7}
{\begin{tikzpicture}[scale=1,rotate=90]
\GraphInit[vstyle=Simple]
\SetVertexSimple[MinSize=6pt]
\Vertices{circle}{a,b,c}
\Edges(a,b,c,a)
\end{tikzpicture}}
\end{center}
\end{minipage}
&
\begin{minipage}{0.22\textwidth}
\begin{center}
\scalebox{0.7}{
\begin{tikzpicture}[scale=1]
\GraphInit[vstyle=Simple]
\SetVertexSimple[MinSize=6pt]
\Vertex[x=0,y=0]{a}
\Vertex[x=0,y=0]{a}
\Vertex[a=30,d=2]{b}
\Vertex[a=-30,d=2]{e}
\Vertex[x=3.46410161514,y=0]{d}
\Edges(e,a,b,e,d,b)
\end{tikzpicture}}
\end{center}
\end{minipage}
&
\begin{minipage}{0.15\textwidth}
\begin{center}
\scalebox{0.7}
{\begin{tikzpicture}[scale=1,rotate=90]
\GraphInit[vstyle=Simple]
\SetVertexSimple[MinSize=6pt]
\Vertices{circle}{a,b,c,d,e}
\Edge(b)(c)
\Edge(d)(e)
\end{tikzpicture}}
\end{center}
\end{minipage}
&
\begin{minipage}{0.15\textwidth}
\begin{center}
\scalebox{0.7}
{\begin{tikzpicture}[scale=1]
\GraphInit[vstyle=Simple]
\SetVertexSimple[MinSize=6pt]
\Vertices{circle}{a,b,c,d,e,f}
\Edges(a,b,c)
\Edges(e,f)
\end{tikzpicture}}
\end{center}
\end{minipage}
&
\begin{minipage}{0.15\textwidth}
\begin{center}
\scalebox{0.7}
{\begin{tikzpicture}[scale=1]
\GraphInit[vstyle=Simple]
\SetVertexSimple[MinSize=6pt]
\Vertices{circle}{a,b,c,d,e,f}
\Edges(b,c,d,e,f)
\end{tikzpicture}}
\end{center}
\end{minipage}
&
\begin{minipage}{0.15\textwidth}
\begin{center}
\scalebox{0.7}
{\begin{tikzpicture}[scale=1]
\GraphInit[vstyle=Simple]
\SetVertexSimple[MinSize=6pt]
\Vertex[x=0,y=0]{a}
\Vertex[x=0,y=1]{b}
\Vertex[x=0,y=2]{c}
\Vertex[x=1,y=0]{d}
\Vertex[x=1,y=1]{e}
\Vertex[x=1,y=2]{f}
\Edges(a,b,c)
\Edges(b,e)
\Edges(a,d)
\Edges(c,f)
\end{tikzpicture}}
\end{center}
\end{minipage}\\
\\
$K_3$ & diamond & $P_1+\nobreak 2P_2$ & $P_1+\nobreak P_2+\nobreak P_3$ & $P_1+\nobreak P_5$ & $S_{1,2,2}$
\end{tabular}
\end{center}
\caption{The forbidden graphs considered in this paper.}
\label{fig:solved-cases}
\end{figure}
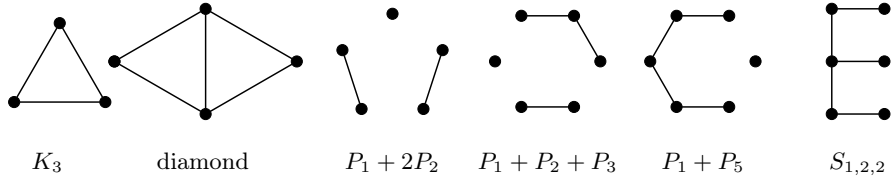

Updating the classification (see~\cite{DP15}) with our five new results gives the following theorem.
Here, ${\cal S}$ is the class of graphs each connected component of which is either a subdivided claw or a path, and 
we write $H\ssi G$ if~$H$ is an induced subgraph of~$G$; see Section~\ref{s-prelim} for notation that we have not formally defined yet.

\newpage
\begin{theorem}\label{thm:classification2}
Let~${\cal G}$ be a class of graphs defined by two forbidden induced subgraphs. Then:
\begin{enumeratei}
\item ${\cal G}$ has bounded clique-width if it is equivalent\footnote{Given four graphs $H_1,H_2,H_3,H_4$, the class of $(H_1,H_2)$-free graphs and the class of $(H_3,H_4)$-free graphs are {\em equivalent} if the unordered pair $H_3,H_4$ can be obtained from the unordered pair $H_1,H_2$ by some combination of the operations (i) complementing both graphs in the pair and (ii) if one of the graphs in the pair is~$K_3$, replacing it with $\overline{P_1+P_3}$ or vice versa.  If two classes are equivalent, then one of them has bounded clique-width if and only if the other one does (see~\cite{DP15}).}
to a class of $(H_1,H_2)$-free graphs such that one of the following holds:
\begin{enumerate1}
\item \label{thm:classification2:bdd:P4} $H_1$ or $H_2 \ssi P_4$;
\item \label{thm:classification2:bdd:ramsey} $H_1=sP_1$ and $H_2=K_t$ for some $s,t$;
\item \label{thm:classification2:bdd:P_1+P_3} $H_1 \ssi P_1+\nobreak P_3$ and $\overline{H_2} \ssi K_{1,3}+\nobreak 3P_1,\; K_{1,3}+\nobreak P_2,\;\allowbreak P_1+\nobreak P_2+\nobreak P_3,\;\allowbreak P_1+\nobreak P_5,\;\allowbreak P_1+\nobreak S_{1,1,2},\;\allowbreak P_6,\; \allowbreak S_{1,1,3}$ or~$S_{1,2,2}$;
\item \label{thm:classification2:bdd:2P_1+P_2} $H_1 \ssi 2P_1+\nobreak P_2$ and $\overline{H_2}\ssi P_1+\nobreak 2P_2,\; 2P_1+\nobreak P_3,\; 3P_1+\nobreak P_2$ or~$P_2+\nobreak P_3$;
\item \label{thm:classification2:bdd:P_1+P_4} $H_1 \subseteq_i P_1+\nobreak P_4$ and $\overline{H_2} \ssi P_1+\nobreak P_4$ or~$P_5$;
\item \label{thm:classification2:bdd:4P_1} $H_1 \subseteq_i 4P_1$ and $\overline{H_2} \ssi 2P_1+\nobreak P_3$;
\item \label{thm:classification2:bdd:K_13} $H_1,\overline{H_2} \ssi K_{1,3}$.
\end{enumerate1}
\item ${\cal G}$ has unbounded clique-width if it is equivalent to a class of $(H_1,H_2)$-free graphs such that one of the following holds:
\begin{enumerate1}
\item \label{thm:classification2:unbdd:not-in-S} $H_1\not\in {\cal S}$ and $H_2 \not \in {\cal S}$;
\item \label{thm:classification2:unbdd:not-in-co-S} $\overline{H_1}\notin {\cal S}$ and $\overline{H_2} \not \in {\cal S}$;
\item \label{thm:classification2:unbdd:K_13or2P_2} $H_1 \si K_{1,3}$ or~$2P_2$ and $\overline{H_2} \si 4P_1$ or~$2P_2$;
\item \label{thm:classification2:unbdd:2P_1+P_2} $H_1 \si 2P_1+\nobreak P_2$ and $\overline{H_2} \si K_{1,3},\; 5P_1,\; P_2+\nobreak P_4$ or~$P_6$;
\item \label{thm:classification2:unbdd:3P_1} $H_1 \si 3P_1$ and $\overline{H_2} \si 2P_1+\nobreak 2P_2,\; 2P_1+\nobreak P_4,\; 4P_1+\nobreak P_2,\; 3P_2$ or~$2P_3$;
\item \label{thm:classification2:unbdd:4P_1} $H_1 \si 4P_1$ and $\overline{H_2} \si P_1 +\nobreak P_4$ or~$3P_1+\nobreak P_2$.
\end{enumerate1}
\end{enumeratei}
\end{theorem}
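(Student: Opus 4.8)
The plan is to obtain Theorem~\ref{thm:classification2} by combining the five new boundedness results of this paper (Results~1--5, proved in Sections~\ref{s-triangle} and~\ref{s-diamond} on the basis of the machinery of Sections~\ref{s-total} and~\ref{sec:sufficient}) with the boundedness and unboundedness results already available in the literature, as compiled in~\cite{DP15}. Throughout we use the fact, established in~\cite{DP15} and recalled in the footnote to the statement, that the displayed equivalence relation on pairs $(H_1,H_2)$ preserves both boundedness and unboundedness of clique-width; this justifies the phrase ``equivalent to'' and means it suffices to verify each item for a single representative of its equivalence class.

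For part~(i) I would treat the seven items in turn. Item~\ref{thm:classification2:bdd:P4} is immediate, since $P_4$-free graphs (cographs) have clique-width at most~$2$, so every $(H_1,H_2)$-free class with $H_1\ssi P_4$ or $H_2\ssi P_4$ is contained in the cographs (cf.~\cite{DP15}). Item~\ref{thm:classification2:bdd:ramsey} follows from Ramsey's theorem: a $(sP_1,K_t)$-free graph has a bounded number of vertices and hence bounded clique-width. For the remaining five items, the ``old'' entries are exactly the boundedness results compiled in~\cite{DP15} from~\cite{BDHP15,DGP14,DHP0,DP14} together with~\cite{BL02,BELL06,BKM06,BLM04b,BLM04,BM02,DLRR12,MR99}; what is new is the appearance of $\overline{H_2}\in\{P_1+\nobreak P_2+\nobreak P_3,\; P_1+\nobreak P_5,\; S_{1,2,2}\}$ in item~\ref{thm:classification2:bdd:P_1+P_3}, which are Results~2--4, and of $\overline{H_2}=P_1+\nobreak 2P_2$ in item~\ref{thm:classification2:bdd:2P_1+P_2}, which is Result~5. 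Here one uses the equivalence which identifies the $(K_3,G)$-free graphs with the $(P_1+\nobreak P_3,\overline{G})$-free graphs, so that Results~1--4 (for $G$ equal to $P_1+\nobreak 2P_2$, $P_1+\nobreak P_2+\nobreak P_3$, $P_1+\nobreak P_5$ and $S_{1,2,2}$ respectively) fall under item~\ref{thm:classification2:bdd:P_1+P_3} because each such $G$ is an induced subgraph of one of the graphs listed there; likewise $(\mbox{diamond},P_1+\nobreak 2P_2)$-free graphs coincide with $(2P_1+\nobreak P_2,\overline{P_1+\nobreak 2P_2})$-free graphs, an instance of item~\ref{thm:classification2:bdd:2P_1+P_2} with $H_1=2P_1+\nobreak P_2$ and $\overline{H_2}=P_1+\nobreak 2P_2$. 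In particular Result~1 need not be listed separately, being subsumed by Result~2 since $P_1+\nobreak 2P_2\ssi P_1+\nobreak P_2+\nobreak P_3$.

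For part~(ii) there are no new ingredients: each of the six items records a known family of $(H_1,H_2)$-free graphs of unbounded clique-width, and the list is precisely the compilation of~\cite{DP15} (drawing in particular on the unboundedness results among~\cite{BL02,BELL06,BKM06,BLM04b,BLM04,BM02,DLRR12,MR99}). Item~\ref{thm:classification2:unbdd:not-in-S} uses the principle that a graph outside the class~${\cal S}$ cannot forbid certain fixed families of graphs of unbounded clique-width, and item~\ref{thm:classification2:unbdd:not-in-co-S} is its complementary version under complementation; the remaining four items are handled by the explicit constructions recorded in~\cite{DP15}. As before, the equivalence relation lets us restrict attention to one representative per class.

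I do not expect a genuine mathematical obstacle within this section: all the real difficulty is concentrated in Results~1--5, and once those are available Theorem~\ref{thm:classification2} is a matter of careful bookkeeping. The points that need attention are that each of the finitely many small graphs occurring in the two lists is matched to the correct source, that the translations through the equivalence relation are applied consistently (so that a result stated for a pair $(K_3,G)$ or $(\mbox{diamond},G)$ is entered under the correct item for the corresponding pair $(H_1,H_2)$ with $\overline{H_2}=G$), and that the relevant small-graph containments (such as $P_1+\nobreak 2P_2\ssi P_1+\nobreak P_2+\nobreak P_3$ and $2P_2\ssi P_2+\nobreak P_3$) are invoked so that no case is double-counted or omitted. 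It is then also a useful sanity check that no pair $(H_1,H_2)$ satisfies an item of~(i) and an item of~(ii) simultaneously, and that exactly eight pairs remain unclassified after this update.
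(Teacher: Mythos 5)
Your proposal is correct and follows essentially the same route as the paper, which states Theorem~\ref{thm:classification2} as the result of updating the classification of~\cite{DP15} with the five new boundedness results of Sections~\ref{s-triangle} and~\ref{s-diamond}, using the equivalence relation (complementation of both graphs, and the $K_3\leftrightarrow\overline{P_1+P_3}$ swap) to place $(K_3,H)$ under item~\ref{thm:classification2:bdd:P_1+P_3} and $(\mbox{diamond},P_1+2P_2)$ under item~\ref{thm:classification2:bdd:2P_1+P_2}. Your bookkeeping — including the observation that Result~1 is subsumed via $P_1+2P_2\ssi P_1+P_2+P_3$ and that part~(ii) is unchanged from~\cite{DP15} — matches the paper's intended argument.
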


\subsection{Future Work}

Naturally we would like to extend Theorem~\ref{t-maincolouring} and solve the following open problem.

\begin{oproblem}\label{o-first}
What is the computational complexity of the {\sc Colouring} problem for $(\mbox{diamond},H)$-free graphs when $H$ is a graph on at least six vertices?
\end{oproblem}
Solving Open Problem~\ref{o-first} is highly non-trivial.
It is known that {\sc 4-Colouring} is \NP-complete for $(C_3,P_{22})$-free graphs~\cite{HJP14}. Hence, the polynomial-time results in Theorem~\ref{t-maincolouring} cannot be extended to all linear forests. The first open case to consider would be $H=P_6$, for which only partial results are known. Indeed, {\sc Colouring} is polynomial-time solvable for $(C_3,P_6)$-free graphs~\cite{BKM06}, but its complexity is unknown for $(C_3,P_7)$-free graphs (on a side note, a recent result for the latter graph class is that {\sc 3-Colouring} is polynomial-time solvable~\cite{BCMSZ}). 

\medskip
\noindent
We observe that boundedness of the clique-width of $(\mbox{diamond},P_1+\nobreak 2P_2)$-free graphs implies boundedness of the clique-width of
$(2P_1+\nobreak P_2,\overline{P_1+2P_2})$-free graphs (recall that the diamond is the complement of the graph $2P_1+\nobreak P_2$).
Hence our results imply that {\sc Colouring} can also be solved in polynomial time for graphs in this class.
After incorporating the consequences of our new results and this additional observation, there are 13 classes of $(H_1,H_2)$-free graphs for which {\sc Colouring} could
potentially
still be solved in polynomial time by showing that their clique-width is bounded (see also~\cite{GJPS}):

\begin{oproblem}
Is {\sc Colouring} polynomial-time solvable for $(H_1,H_2)$-free graphs when:
\begin{enumerate}
\item $\overline{H_1}\in \{3P_1,P_1+P_3\}$ and $H_2\in \{P_1+S_{1,1,3},S_{1,2,3}\}$;\\[-10pt]
\item $H_1=2P_1+\nobreak P_2$ and $\overline{H_2} \in \{P_1+\nobreak P_2+\nobreak P_3,\allowbreak P_1+\nobreak P_5\}$;\\[-10pt]
\item $H_1=\mbox{diamond}$
and $H_2 \in \{P_1+\nobreak P_2+\nobreak P_3,\allowbreak P_1+\nobreak P_5\}$;\\[-10pt]
\item $H_1=P_1+P_4$ and $\overline{H_2} \in \{P_1+2P_2,P_2+P_3\}$;\\[-10pt]
\item $\overline{H_1}=P_1+P_4$ and ${H_2} \in \{P_1+2P_2,P_2+P_3\}$;\\[-10pt]
\item $H_1=\overline{H_2}=2P_1+P_3$.
\end{enumerate}
\end{oproblem}

\noindent
As mentioned in Section~\ref{s-cons}, after updating the list of remaining open cases
for clique-width
from~\cite{DP15}, we find that 
eight non-equivalent open cases remain for clique-width. 
These are the following cases.

\begin{oproblem}\label{oprob:twographs}
Does the class of $(H_1,H_2)$-free graphs have bounded or unbounded clique-width when:
\begin{enumerate}
\item \label{oprob:twographs:3P_1} $H_1=3P_1$ and $\overline{H_2} \in \{P_1+\nobreak S_{1,1,3},\allowbreak P_2+\nobreak P_4,\allowbreak S_{1,2,3}\}$;
\item\label{oprob:twographs:2P_1+P_2} $H_1=2P_1+\nobreak P_2$ and $\overline{H_2} \in \{P_1+\nobreak P_2+\nobreak P_3,\allowbreak P_1+\nobreak P_5\}$;
\item \label{oprob:twographs:P_1+P_4} $H_1=P_1+\nobreak P_4$ and $\overline{H_2} \in \{P_1+\nobreak 2P_2,\allowbreak P_2+\nobreak P_3\}$ or
\item \label{oprob:twographs:2P_1+P_3} $H_1=\overline{H_2}=2P_1+\nobreak P_3$.
\end{enumerate}
\end{oproblem}

Bonomo, Grippo, Milani\v{c} and Safe~\cite{BGMS14} determined all pairs  
of connected graphs $H_1,H_2$ 
for which the class of $(H_1,H_2)$-free graphs has power-bounded clique-width.
In order to compare their result with our results for clique-width, we 
would only need to solve the single open case $(H_1,H_2)=(K_3,S_{1,2,3})$,
which is equivalent to the (open) case $(H_1,H_2)=(3P_1,\overline{S_{1,2,3}})$ mentioned in Open Problem~\ref{oprob:twographs}.
This follows because our new result for 
the case $(H_1,H_2)=(K_3,S_{1,2,2})$
has reduced the number of
open cases $(H_1,H_2)$ with $H_1,H_2$ both connected from two to one. 

\section{Preliminaries}\label{s-prelim}

Throughout our paper we only consider finite, undirected graphs without multiple edges or self-loops.
Below we define further graph terminology.

The {\em disjoint union} $(V(G)\cup V(H), E(G)\cup E(H))$ of two vertex-disjoint graphs~$G$ and~$H$ is denoted by~$G+\nobreak H$ and the disjoint union of~$r$ copies of a graph~$G$ is denoted by~$rG$. The {\em complement} of a graph~$G$, denoted by~$\overline{G}$, has vertex set $V(\overline{G})=\nobreak V(G)$ and an edge between two distinct vertices
if and only if these vertices are not adjacent in~$G$. 
For a subset $S\subseteq V(G)$, we let~$G[S]$ denote the subgraph of~$G$ {\it induced} by~$S$, which has vertex set~$S$ and edge set $\{uv\; |\; u,v\in S, uv\in E(G)\}$.
If $S=\{s_1,\ldots,s_r\}$ then, to simplify notation, we may also write $G[s_1,\ldots,s_r]$ instead of $G[\{s_1,\ldots,s_r\}]$.
We use $G \setminus S$ to denote the graph obtained from~$G$ by deleting every vertex in~$S$, i.e. $G \setminus S = G[V(G)\setminus S]$.
We write $H\ssi G$ to indicate that~$H$ is an induced subgraph of~$G$.

The graphs $C_r,K_r,K_{1,r-1}$ and~$P_r$ denote the cycle, complete graph, star and path on~$r$ vertices, respectively.
The graph~$K_{1,3}$ is also called the {\em claw}.
The graph~$S_{h,i,j}$, for $1\leq h\leq i\leq j$, denotes the {\em subdivided claw}, that is,
the tree that has only one vertex~$x$ of degree~$3$ and exactly three leaves, which are of distance~$h$,~$i$ and~$j$ from~$x$, respectively.
Observe that $S_{1,1,1}=K_{1,3}$.
The graph~$S_{1,2,2}$ is also known as the E, 
since it can be drawn like a capital letter~$E$ (see \figurename~\ref{fig:solved-cases}).
Recall that the graph $\overline{2P_1+P_2}$ is known as the {\em diamond}.
The graphs~$K_3$ and $\overline{P_1+2P_2}$ are also known as the {\em triangle}
and the 5-vertex {\em wheel}, respectively.
For a set of graphs $\{H_1,\ldots,H_p\}$, a graph~$G$ is {\em $(H_1,\ldots,H_p)$-free} if it has no induced subgraph isomorphic to a graph in $\{H_1,\ldots,H_p\}$;
if~$p=1$, we may write $H_1$-free instead of $(H_1)$-free.

Let~$X$ be a set of vertices in a graph $G=(V,E)$.
A vertex $y\in V\setminus X$ is {\em complete} to~$X$ if it is adjacent to every vertex of~$X$
and {\em anti-complete} to~$X$ if it is non-adjacent to every vertex of~$X$.
Similarly, a set of vertices $Y\subseteq V\setminus X$ is {\em complete} ({\em anti-complete}) to~$X$ if every vertex in~$Y$ is complete (anti-complete) to~$X$.
A vertex~$y$ or a set~$Y$ is {\em trivial} to~$X$ if it is either complete or anti-complete to~$X$.
Note that if~$Y$ contains both vertices complete to~$X$ and vertices  not complete to~$X$, we may have a situation in which every vertex in~$Y$ is trivial to~$X$, but~$Y$ itself is not trivial to~$X$.

For a graph $G=(V,E)$, the set $N(u)=\{v\in V\; |\; uv\in E\}$ denotes the neighbourhood of $u\in V$.
Let~$X$ and~$Y$ be disjoint sets of vertices in a graph $G=(V,E)$.
If every vertex of~$X$ has at most one neighbour in~$Y$ and vice versa then we say that the edges between~$X$ and~$Y$ form a {\em matching}.
If every vertex of~$X$ has exactly one neighbour in~$Y$ and vice versa then we say that the edges between~$X$ and~$Y$ form a {\em perfect matching}.

A graph is {\it $k$-partite} if its vertex set can be partitioned into~$k$ independent sets (some of which may be empty).
A graph is {\em bipartite} if it is $2$-partite. 
A graph is {\em complete bipartite} if its vertex set can be partitioned into two independent sets that are complete to each other.
For integers $r,s\geq 0$, 
the {\em biclique}~$K_{r,s}$ is the complete bipartite graph with sets in the partition of size~$r$ and~$s$ respectively.
The {\em bipartite complement} of a bipartite graph~$G$ with bipartition~$(X,Y)$ is the graph obtained from~$G$ by replacing every edge from a vertex in~$X$ to a vertex in~$Y$ by a non-edge and vice versa.

\medskip
\noindent
{\bf Clique-Width.}
The {\em clique-width} of a graph~$G$, denoted~$\cw(G)$, is the minimum
number of labels needed to
construct~$G$ by
using the following four operations:
\begin{enumerate}
\item creating a new graph consisting of a single vertex~$v$ with label~$i$;
\item taking the disjoint union of two labelled graphs~$G_1$ and~$G_2$;
\item joining each vertex with label~$i$ to each vertex with label~$j$ ($i\neq j$);
\item renaming label~$i$ to~$j$.
\end{enumerate}

\noindent
An algebraic term that represents such a construction of~$G$ and uses at most~$k$ labels is said to be a {\em $k$-expression} of~$G$ (i.e. the clique-width of~$G$ is the minimum~$k$ for which~$G$ has a $k$-expression).
Recall that a class of graphs~${\cal G}$ has bounded clique-width if there is a constant~$c$ such that the clique-width of every graph in~${\cal G}$ is at most~$c$; otherwise the clique-width of~${\cal G}$ is {\em unbounded}.

Let~$G$ be a graph.
We define the following operations.
For an induced subgraph $G'\ssi G$, the {\em subgraph complementation} operation (acting on~$G$ with respect to~$G'$) replaces every edge present in~$G'$
by a non-edge, and vice versa.
Similarly, for two disjoint vertex subsets~$S$ and~$T$ in~$G$, the {\em bipartite complementation} operation with respect to~$S$ and~$T$ acts on~$G$ by replacing
every edge with one end-vertex in~$S$ and the other one in~$T$ by a non-edge and vice versa.

We now state some useful facts about how the above operations (and some other ones) influence the clique-width of a graph.
We will use these facts throughout the paper.
Let $k\geq 0$ be a constant and let~$\gamma$ be some graph operation.
We say that a graph class~${\cal G'}$ is {\em $(k,\gamma)$-obtained} from a graph class~${\cal G}$
if the following two conditions hold:
\begin{enumeratei}
\item every graph in~${\cal G'}$ is obtained from a graph in~${\cal G}$ by performing~$\gamma$ at most~$k$ times, and
\item for every $G\in {\cal G}$ there exists at least one graph
in~${\cal G'}$ obtained from~$G$ by performing~$\gamma$ at most~$k$ times.
\end{enumeratei}

\noindent
We say that~$\gamma$ {\em preserves} boundedness of clique-width if
for any finite constant~$k$ and any graph class~${\cal G}$, any graph class~${\cal G}'$ that is $(k,\gamma)$-obtained from~${\cal G}$
has bounded clique-width if and only if~${\cal G}$ has bounded clique-width.

\begin{enumerate}[\bf F{a}ct 1.]
\item \label{fact:del-vert} Vertex deletion preserves boundedness of clique-width~\cite{LR04}.\\[-1em]

\item \label{fact:comp} Subgraph complementation preserves boundedness of clique-width~\cite{KLM09}.\\[-1em]

\item \label{fact:bip} Bipartite complementation preserves boundedness of clique-width~\cite{KLM09}.\\[-1em]

\end{enumerate}
The following lemma is easy to show.
\begin{lemma}\label{lem:atmost-2}
The clique-width of a graph of maximum degree at most~$2$ is at most~$4$.
\end{lemma}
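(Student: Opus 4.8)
The plan is to prove that a graph $G$ with maximum degree at most $2$ has clique-width at most $4$ by exploiting the fact that such graphs have a very simple structure: every connected component is either a single vertex, a path, or a cycle. Since clique-width of a disconnected graph is the maximum of the clique-widths of its components (the disjoint union operation is one of the four allowed operations and costs no extra labels), it suffices to bound the clique-width of a single path and of a single cycle.

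First I would handle paths. A path on $n$ vertices can be built left to right using only $3$ labels: keep the ``current endpoint'' with a distinguished label, say label $1$, while all previously constructed vertices carry label $2$; to extend the path, create a new vertex with label $3$, take the disjoint union, join label $1$ to label $3$ (adding exactly the one new edge), then rename label $1$ to $2$ and label $3$ to $1$. This shows $\cw(P_n)\le 3$. Next I would handle cycles. The natural approach is to build the cycle as a path and then add the single ``closing'' edge, but that edge connects the two endpoints, so I need to keep both endpoints identifiable throughout the construction. So I use $4$ labels: label $1$ for the first endpoint (kept fixed and never reused), label $2$ for the current growing endpoint, label $3$ as a scratch label for the newly added vertex, and label $4$ for the ``interior'' vertices. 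Each extension step mirrors the path construction on labels $2,3,4$, leaving label $1$ untouched; at the very end, joining label $1$ to label $2$ adds exactly the missing edge and closes the cycle. A small degenerate check is needed for $C_3$ (and for very short paths), but these are trivially handled. This gives $\cw(C_n)\le 4$, and combining the two cases yields the claim.

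The only mildly delicate point — and the step I would be most careful about — is verifying that in the cycle construction the ``join label $i$ to label $j$'' operations never create an unwanted edge: one must check that at the moment we join label $1$ to label $3$ (the new vertex) in an extension step, there is indeed only one vertex carrying label $1$ and it is not yet adjacent to the new vertex, and similarly that the final closing join only adds the single edge between the two true endpoints and nothing spurious (in particular, that no interior vertex ever carries label $1$ or $2$ prematurely). This is a routine bookkeeping check rather than a genuine obstacle. An alternative, even shorter, argument would be to invoke known bounds — e.g. graphs of bounded treewidth have bounded clique-width, or simply cite that $\cw(C_n)\le 4$ is folklore — but since the statement explicitly says ``easy to show'', giving the direct $3$-label (paths) and $4$-label (cycles) expressions is the cleanest self-contained route.
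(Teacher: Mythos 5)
Your proof is correct. The paper states this lemma without proof (remarking only that it is ``easy to show''), and your argument --- observing that every component of a graph of maximum degree at most~$2$ is an isolated vertex, a path or a cycle, that clique-width of a disjoint union is the maximum over the components, and then giving the standard explicit $3$-label expression for paths and $4$-label expression for cycles --- is exactly the routine argument the authors are implicitly invoking.
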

Two vertices are {\em false twins} if they have the same neighbourhood
(note that such vertices must be non-adjacent).
The following lemma follows immediately from the definition of clique-width.
\begin{lemma}\label{lem:false-twin}
If a vertex~$x$ in a graph~$G$ has a false twin then $\cw(G)=\cw(G \setminus \{x\})$.
\end{lemma}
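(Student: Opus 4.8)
The plan is to establish the two inequalities $\cw(G\setminus\{x\}) \le \cw(G)$ and $\cw(G) \le \cw(G\setminus\{x\})$ separately. The first is immediate, since clique-width never increases when passing to an induced subgraph: given any $k$-expression for $G$, delete from it every leaf that creates a vertex other than those of $G \setminus \{x\}$ and discard the now-redundant union and relabel operations; this yields a $k$-expression for $G \setminus \{x\}$. (Alternatively, this is a special case of Fact~\ref{fact:del-vert}.)

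For the reverse inequality, let $y$ be a false twin of $x$, so that $N_G(x) = N_G(y)$ and $xy \notin E(G)$; in particular $N_{G\setminus\{x\}}(y) = N_G(y)$. Take a $k$-expression $\tau$ for $G \setminus \{x\}$ with $k = \cw(G\setminus\{x\})$. Somewhere in $\tau$ the vertex $y$ is introduced by an operation of type~(1), say with label $i$. Modify $\tau$ by replacing that operation with the subexpression that creates a vertex $y$ with label $i$, creates a vertex $x$ with label $i$, and takes their disjoint union. Call the resulting expression $\tau'$; it uses the same label set as $\tau$, hence at most $k$ labels.

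The key observation is that in every labelled graph arising during the evaluation of $\tau'$, the vertices $x$ and $y$ carry the same label: they start with the same label $i$, a relabel operation of type~(4) acts identically on them, and a join operation of type~(3) either affects both or neither, since it never joins two vertices of equal label. Consequently $x$ and $y$ are never joined to one another, so they are non-adjacent in the graph produced by $\tau'$, and at the end $x$ is adjacent to exactly the same set of vertices as $y$, namely $N_{G\setminus\{x\}}(y) = N_G(y) = N_G(x)$. Hence $\tau'$ produces precisely $G$, giving $\cw(G) \le k = \cw(G\setminus\{x\})$; combined with the first inequality, the lemma follows. There is no real obstacle here: the only point requiring care is the bookkeeping that $x$ and $y$ stay co-labelled throughout $\tau'$, which is exactly what forces them to receive identical neighbourhoods and to remain non-adjacent.
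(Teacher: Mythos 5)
Your proof is correct and is precisely the standard argument that the paper leaves implicit (the paper states only that the lemma ``follows immediately from the definition of clique-width'' and gives no proof). The two directions are both handled properly: induced-subgraph monotonicity for one inequality, and for the other the co-labelling of $x$ and $y$ throughout the modified expression correctly forces them to remain non-adjacent and to acquire identical neighbourhoods, using that false twins are non-adjacent so that $N_{G\setminus\{x\}}(y)=N_G(y)=N_G(x)$.
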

We will also make use of the following two results.

\begin{lemma}[\cite{DHP0}]\label{lem:diamond-P_2+P_3}
The class of $(\mbox{diamond},\allowbreak P_2+\nobreak P_3)$-free graphs has bounded clique-width.
\end{lemma}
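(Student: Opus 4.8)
\medskip\noindent
The plan is to bound the clique-width of the \emph{prime} graphs in the class, which suffices since a hereditary class has bounded clique-width if and only if the subclass of its prime graphs does. So let $G$ be a prime $(\mbox{diamond},P_2+\nobreak P_3)$-free graph; I may assume $|V(G)|\geq 4$, since graphs on at most three vertices have clique-width at most~$2$. Two structural facts will be used throughout. First, diamond-freeness says precisely that $N(v)$ induces a disjoint union of cliques for every vertex~$v$. Second, if $Q$ is a maximal clique with $|Q|\geq 3$ then every vertex outside $Q$ has at most one neighbour in $Q$ (otherwise, together with a vertex of $Q$ to which it is non-adjacent---one exists by maximality---it induces a diamond); and if in addition $|Q|\geq 5$ then $G-Q$ has no induced~$P_3$, i.e.\ $G-Q$ is a disjoint union of cliques, because an induced $P_3$ of $G-Q$ has at most three neighbours in $Q$, so two further, mutually adjacent, vertices of $Q$ are anti-complete to it and induce a $P_2+\nobreak P_3$ with it.

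\medskip\noindent
I would first dispose of the case $\omega(G)\geq 5$. Fix a maximal clique $Q$ with $|Q|\geq 5$ and write $G-Q$ as a disjoint union of cliques $W_1,\dots,W_p$. If some $W_i$ with $|W_i|\geq 2$ had two vertices with a common neighbour in $Q$ then, using diamond-freeness once more, all of $W_i$ would be joined to that vertex, so $W_i$ would be a module of $G$ (every other vertex of $Q$, and every vertex of every other component, is anti-complete to $W_i$), contradicting primeness; hence the edges between any such $W_i$ and $Q$ form a matching. A further use of $(\mbox{diamond},P_2+\nobreak P_3)$-freeness controls the global picture: if some component $W_1$ contains an edge $ww'$ and $T\subseteq Q$ is the set of neighbours in $Q$ of $w$ and $w'$ (so $|T|\leq 2$), then no vertex of a component other than $W_1$ has a neighbour in $Q\setminus T$, for otherwise one picks a vertex of $Q\setminus T$ adjacent to it together with a further vertex of $Q\setminus T$ and obtains an induced $P_2+\nobreak P_3$ with $ww'$ as the $P_2$. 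Deleting the at most two vertices of $T$ thus leaves the disjoint union of the cliques $W_2,\dots,W_p$ (now anti-complete to $Q\setminus T$) and the graph formed by the two cliques $Q\setminus T$ and $W_1$ joined by a matching; a short $4$-expression shows that each of these has clique-width at most~$4$, so the clique-width of $G$ is bounded by Fact~\ref{fact:del-vert}. (If no $W_i$ contains an edge then $G-Q$ is independent and $G$ is a clique with a pendant star forest attached, which again has clique-width at most~$4$.)

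\medskip\noindent
The bulk of the work is the case $\omega(G)\leq 4$, where every neighbourhood is a disjoint union of cliques each of size at most~$3$. If $G$ has no induced $P_3$ it is a disjoint union of cliques; otherwise I would fix an induced $P_3$ on $\{a,b,c\}$, so that $D:=V(G)\setminus(N[a]\cup N[b]\cup N[c])$ is independent by $(\mbox{diamond},P_2+\nobreak P_3)$-freeness, and then argue according to which further small graphs occur. If $G$ is bipartite it can be handled directly by a structural analysis of $(\mbox{diamond},P_2+\nobreak P_3)$-free bipartite graphs (cf.~\cite{DP14}). If $G$ is triangle-free but not bipartite then, since $C_7,C_9,\dots$ are not $(P_2+\nobreak P_3)$-free, the shortest odd cycle of $G$ is an induced $C_5$, and $(\mbox{diamond},P_2+\nobreak P_3)$-freeness again forces the non-neighbours of this $C_5$ to be independent, so that $G$ is organised around a bounded-size core. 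If $G$ contains a triangle, then (as $G$ is diamond-free) that triangle is contained in a unique maximal clique $Q'$ with $|Q'|\in\{3,4\}$, every vertex outside $Q'$ has at most one neighbour in $Q'$, and $G$ is organised around $Q'$ as in the large-clique case, except that the rigidity used there must now be replaced by a finer analysis of how $D$ and the bounded-size neighbourhoods $N(a),N(b),N(c)$ interact. In each subcase the conclusion is reached by reducing $G$, through a bounded number of vertex deletions and (bipartite) subgraph complementations (Facts~\ref{fact:del-vert}--\ref{fact:bip}), either to a graph of bounded degree (Lemma~\ref{lem:atmost-2}) or to a graph in an already-understood class of bounded clique-width.

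\medskip\noindent
The hard part, in every case, is this last kind of step: after extracting a bounded-size dominating core one must show that the rest of the graph attaches to the core, and to itself, in only $O(1)$ essentially different ways, so that the passage to a known bounded-clique-width class costs only a constant number of the operations in Facts~\ref{fact:del-vert}--\ref{fact:bip}. This is exactly where both hypotheses must be combined: diamond-freeness keeps every neighbourhood simple (a disjoint union of cliques), while $(\mbox{diamond},P_2+\nobreak P_3)$-freeness makes edges scarce far from any fixed small dominating structure, and neither property alone is enough. Keeping track of all the resulting configurations --- especially when the clique number is small and one cannot lean on the stiffness of a large clique --- is where essentially all the effort goes.
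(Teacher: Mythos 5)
This lemma is not proved in the paper at all --- it is imported verbatim from~\cite{DHP0} --- so there is no in-paper argument to compare against; your proposal has to stand on its own. The parts you actually carry out are fine: the reduction to prime graphs is a legitimate use of the Courcelle--Olariu modular-decomposition bound, the case $\omega(G)\geq 5$ is correct and essentially complete (every vertex outside a maximal clique $Q$ with $|Q|\geq 5$ has at most one neighbour in $Q$, $G\setminus Q$ is a disjoint union of cliques, primeness forces matchings, and two cliques joined by a matching have clique-width at most~$4$), and the bipartite case does follow from Lemma~\ref{lem:bipartite} because $P_2+\nobreak P_3\ssi K_{1,3}+\nobreak P_2$.

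However, there is a genuine gap: the cases $\omega(G)\in\{3,4\}$ with a triangle, and the triangle-free non-bipartite case built around an induced $C_5$, are never actually argued. For these you only assert that ``a finer analysis'' of how the far set $D$ and the neighbourhoods $N(a),N(b),N(c)$ interact will reduce $G$ to a bounded-degree graph or to a known class --- but no such analysis is given, and by your own admission this is ``where essentially all the effort goes.'' The difficulty is real and not cosmetic: once the central clique or triangle $T$ is small, the set of vertices with exactly one neighbour in $T$ can be arbitrarily large, and the only a priori information about the graph it induces is that it is again $(\mbox{diamond},P_2+\nobreak P_3)$-free, so nothing in your sketch breaks this circularity; even the easy-looking claim that the set of vertices with no neighbour in $T$ is independent fails for a triangle $T$ (a triangle contains no induced $P_3$, so the $P_2+\nobreak P_3$ argument you use for the $C_5$ does not transfer). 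In short, what you have is a correct treatment of the large-clique regime plus a plan for the rest; the lemma's actual content lives precisely in the unproven cases, so the proposal cannot be accepted as a proof.
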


\begin{lemma}[\cite{DP14}]\label{lem:bipartite}
Let~$H$ be a graph.
The class of $H$-free bipartite graphs has bounded
clique-width if and only~if
\begin{itemize}
\item [$\bullet$] $H=sP_1$ for some $s\geq 1$;
\item [$\bullet$] $H\ssi K_{1,3}+3P_1$;
\item [$\bullet$] $H\ssi K_{1,3}+P_2$;
\item [$\bullet$] $H\ssi P_1+S_{1,1,3}$ or
\item [$\bullet$] $H\ssi S_{1,2,3}$.
\end{itemize}
\end{lemma}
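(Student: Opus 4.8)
This statement is a dichotomy, so the plan is to prove an \emph{if}-direction (bounded clique-width for each listed $H$) and an \emph{only-if}-direction (unbounded otherwise), after three reductions that pin down the shape of any potentially ``good'' $H$. First, if $H$ contains a cycle $C_t$ then the $C_t$-free bipartite graphs --- and hence the $H$-free ones --- include every bipartite graph of girth more than $t$, a class known to have unbounded clique-width; so we may assume $H$ is a forest (in particular bipartite, as a non-bipartite $H$ would contain an odd cycle). Second, if $H$ has a vertex of degree at least $4$, i.e.\ $K_{1,4}\ssi H$, then the $K_{1,4}$-free bipartite graphs --- exactly the bipartite graphs of maximum degree at most $3$ --- already have unbounded clique-width (for instance they contain all walls), so we may assume $H$ is a forest of maximum degree at most $3$: a disjoint union of paths and subdivided claws $S_{h,i,j}$. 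Third, since none of the five listed graphs has two branch (degree-$3$) vertices, any good $H$ has at most one branch vertex; the bad $H$ with two branch vertices are handled in the only-if direction.

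\textbf{Only-if direction.} It remains to show that every forest $H$ of maximum degree at most $3$ that is not an induced subgraph of any listed graph gives unbounded clique-width. I would argue via a short, explicit list of minimal forbidden configurations, exhibiting for each one a known family of bipartite graphs of unbounded clique-width that avoids it. If $H$ has two branch vertices it contains one of the few smallest forests with two degree-$3$ vertices; if $H$ has exactly one branch vertex it is some $S_{h,i,j}$ plus a disjoint union of paths and, by a routine check of the induced-subgraph order, contains one of $S_{1,1,4},\,S_{2,2,2},\,S_{1,2,2}+P_1,\dots$; and if $H$ is a linear forest it contains one of $P_7,\,2P_3,\,3P_2,\,2P_1+P_4,\,2P_2+2P_1,\dots$. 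For each such obstruction one points to (or lightly adapts) a standard clique-width lower-bound construction for bipartite graphs --- subdivided grids and walls, incidence graphs, and the ``split-like'' families used in earlier work. This direction is essentially bookkeeping on top of having the right stock of unbounded constructions.

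\textbf{If direction, the easy cases.} If $H=sP_1$ then a bipartite graph with no independent set of size $s$ has fewer than $2s$ vertices, so the class is finite and hence of bounded clique-width. For $H\ssi K_{1,3}+3P_1$ or $H\ssi K_{1,3}+P_2$: a $K_{1,3}$-free bipartite graph has maximum degree $2$ (the neighbours of any vertex lie in one part of the bipartition and so are pairwise non-adjacent), hence clique-width at most $4$ by Lemma~\ref{lem:atmost-2}; otherwise the graph contains an induced $K_{1,3}$, say on vertex set $Q$, and being $(K_{1,3}+3P_1)$-free or $(K_{1,3}+P_2)$-free forces all but a bounded set of vertices to fall into a bounded number of neighbourhood types with respect to $Q$ and to behave like a bipartite chain graph, so after deleting a bounded vertex set and performing a bounded number of bipartite complementations (Facts~1--3) one reaches a graph of bounded clique-width. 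The case $H\ssi P_1+S_{1,1,3}$ follows the same template around an induced $S_{1,1,3}$, with a more intricate neighbourhood analysis.

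\textbf{The main obstacle: $S_{1,2,3}$-free bipartite graphs.} Proving that this class has bounded clique-width is the crux of the result. The plan is structural: for a connected $S_{1,2,3}$-free bipartite graph $G$ with bipartition $(A,B)$, establish the dichotomy that either $G$ has bounded diameter --- in which case a neighbourhood-type argument (collapsing false twins by Lemma~\ref{lem:false-twin}, then a bounded number of bipartite complementations) gives bounded clique-width --- or $G$ is ``path-like'', admitting an ordering of a bounded-size partition of $V(G)$ in which blocks far apart in the ordering interact trivially, a structure that (generalising the canonical decomposition of bipartite graphs of Fouquet--Giakoumakis--Vanherpe) yields bounded clique-width. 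The hard part will be proving this dichotomy: one starts from a longest induced path $P$ in $G$ and uses $S_{1,2,3}$-freeness to severely restrict how vertices off $P$ can attach to $P$ and to one another --- a vertex off $P$ adjacent to a ``deep'' interior vertex of $P$ would otherwise complete an induced $S_{1,2,3}$ --- and then bootstraps this to the global decomposition. I expect essentially all of the technical weight, and a long case analysis, to sit here, and this is also where one would most want to reuse existing split/canonical-decomposition machinery for bipartite graphs.
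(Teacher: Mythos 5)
First, a point of comparison: the paper does not prove Lemma~\ref{lem:bipartite} at all --- it is imported verbatim from~\cite{DP14} (with the $S_{1,2,3}$ case alternatively available from Lozin~\cite{Lo02}), so there is no in-paper proof to measure your attempt against. What you have written is a roadmap for reproving the main theorem of~\cite{DP14} from scratch.

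As a roadmap it is sensible --- the reductions to forests of maximum degree at most~$3$ with at most one branch vertex, the $sP_1$ and $K_{1,3}$-free observations, and the identification of $S_{1,2,3}$ as the crux are all correct --- but as a proof it has genuine gaps, and they sit exactly where the theorem is hard. First, in the only-if direction you never exhibit the unbounded constructions: ``point to or lightly adapt a standard lower-bound construction'' for each minimal obstruction ($P_7$, $2P_3$, $3P_2$, $S_{1,1,4}$, $S_{2,2,2}$, forests with two branch vertices, \ldots) is precisely the content that must be supplied, and for several of these the required bipartite families are not off-the-shelf. Second, in the if direction the $P_1+S_{1,1,3}$ case is waved at (``a more intricate neighbourhood analysis''), and the $S_{1,2,3}$ case rests on an unestablished dichotomy: ``either $G$ has bounded diameter or $G$ is path-like.'' Bounded diameter by itself buys nothing --- bipartite graphs of diameter~$3$ already have unbounded clique-width (attach to each side a vertex complete to the other side and apply Fact~1) --- so the entire burden is on showing that $S_{1,2,3}$-freeness forces one of two structures that you have not defined precisely enough to verify, let alone proved. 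Note also that the canonical-decomposition machinery of Fouquet--Giakoumakis--Vanherpe (Lemma~\ref{lem:canon}) characterises $(P_7,S_{1,2,3})$-free bipartite graphs and so cannot be invoked directly once long induced paths are permitted; Lozin's actual proof that $S_{1,2,3}$-free bipartite graphs have clique-width at most~$5$ proceeds by a different structural analysis. Until these two components are filled in, the statement is not proved.
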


In some of our proofs we will use the fact that $S_{1,2,3}$-free bipartite graphs have bounded clique-width, which follows from Lemma~\ref{lem:bipartite}.
Alternatively we could have used the result of Lozin~\cite{Lo02}, who showed that $S_{1,2,3}$-free bipartite graphs have clique-width at most~$5$.

\section{Totally $k$-Decomposable Graphs}\label{s-total}

In this section we describe our key technique, which is based on a decomposition of bipartite graphs introduced by
Fouquet, Giakoumakis and Vanherpe~\cite{FGV99}, which is defined as follows.

Let~$G$ be a bipartite graph with a vertex bipartition $(V_1,V_2)$.
A {\em $2$-decomposition of~$G$} with respect to $(V_1,V_2)$ consists of
two {\em non-empty} graphs  $G[V_1' \cup V_2']$ and $G[V_1'' \cup\nobreak V_2'']$ such that:
\newpage
\begin{enumeratei}
\item for $i\in \{1,2\}$, $V_i'\cup V_i''=V_i$ and $V_i'\cap V_i''=\emptyset$; 
\item $V_1'$ is either complete or anti-complete to $V_2''$ in $G$;
\item $V_2'$ is either complete or anti-complete to $V_1''$ in $G$.
\end{enumeratei}
Note that $V_1'\cup V_1''$ and $V_2'\cup V_2''$ are independent sets in~$G$ and that the last two conditions imply that each of $G[V_1' \cup V_2'']$ and $G[V_1'' \cup V_2']$ is either an independent set or a biclique.
Observe that we do not impose restrictions on the bipartite graphs $G'=G[V_1'\cup V_2']$ and $G''=G[V_1''\cup V_2'']$. 
If~$G$ has a $2$-decomposition $G',G''$ with respect to some bipartition, we say that~$G$ can be {\it $2$-decomposed} into~$G'$ and~$G''$.
A graph~$G$ is \emph{totally decomposable by canonical decomposition} if it can be recursively $2$-decomposed into graphs isomorphic to~$K_1$. 
Note that if~$G$ has a $2$-decomposition $G',G''$ with respect to some bipartition $(V_1,V_2)$, this does not force us to  decompose~$G'$ and~$G''$ with respect to a sub-partition of $(V_1,V_2)$. As we will see, this distinction does not make a difference for bipartite graphs, but it will become an issue when we extend the notion to $k$-partite graphs when $k \geq 3$.

Fouquet, Giakoumakis and Vanherpe proved the following characterization, which we will need for our proofs
(see \figurename~\ref{fig:P7-S123} for pictures of~$P_7$ and~$S_{1,2,3}$).

\begin{lemma}[\cite{FGV99}]\label{lem:canon}
A bipartite graph is totally decomposable by canonical decomposition if and only if it is $(P_7,S_{1,2,3})$-free.
\end{lemma}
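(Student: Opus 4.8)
The plan is to prove both directions of the characterization separately, using the fact that both $P_7$ and $S_{1,2,3}$ are themselves bipartite and easily seen to admit no nontrivial $2$-decomposition. For the ``only if'' direction, I would argue contrapositively: suppose a bipartite graph $G$ contains an induced $P_7$ (or an induced $S_{1,2,3}$). I would show that the property of containing such an induced subgraph is preserved under $2$-decomposition, in the sense that whenever $G$ is $2$-decomposed into $G'$ and $G''$, at least one of $G'$, $G''$ again contains $P_7$ (resp. $S_{1,2,3}$) as an induced subgraph. Since any $K_1$ trivially contains neither, this obstruction can never be fully decomposed away, so $G$ is not totally decomposable. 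The technical heart of this direction is the preservation claim; the key point is that in a $2$-decomposition the interaction between the two halves is by complete/anti-complete relations on the parts $V_1', V_1'', V_2', V_2''$, and a connected induced $P_7$ or $S_{1,2,3}$ cannot be ``split'' across such a rigid interface without creating a $K_{2,2}$ (contradicting the path/subdivided-claw structure) --- so all but at most a bounded fragment of the obstruction lies on one side, and a short case analysis recovers a full copy there.

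For the ``if'' direction, I would proceed by induction on $|V(G)|$: given a connected $(P_7, S_{1,2,3})$-free bipartite graph $G$ on at least two vertices, it suffices to exhibit a single $2$-decomposition, since each of $G'$ and $G''$ is again $(P_7,S_{1,2,3})$-free (induced subgraphs of $G$) and smaller, so induction finishes the job; disconnected graphs are handled by decomposing off one component at a time (a component versus the rest is an anti-complete split, hence a valid $2$-decomposition). To find the $2$-decomposition of a connected such $G$ with bipartition $(V_1,V_2)$, the natural move is to look at the structure of neighbourhoods. I would consider a vertex $v$ and the partition of the opposite side into $N(v)$ and its complement, and more generally study the ``neighbourhood hypergraph'' on one side; $(P_7,S_{1,2,3})$-freeness should force the neighbourhoods to be laminar-like or otherwise highly structured, yielding two vertices that are false twins or two parts that can be separated. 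Concretely, I expect that one can always find either a pair of false twins (allowing a trivial decomposition peeling one off) or a ``module-like'' partition $V_1 = V_1' \cup V_1''$, $V_2 = V_2' \cup V_2''$ with the required complete/anti-complete cross-relations, extracted from a carefully chosen maximal or minimal neighbourhood.

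The main obstacle, and where most of the work lies, is this last existence statement for the ``if'' direction: showing that every connected $(P_7,S_{1,2,3})$-free bipartite graph that is not a single vertex actually admits \emph{some} nontrivial $2$-decomposition. This is essentially a structural dichotomy and will require a genuine case analysis driven by how induced paths of length $6$ and induced $S_{1,2,3}$'s are excluded --- one typically takes a longest induced path, or a vertex of maximum degree on one side, and shows that any failure of decomposability propagates a forbidden configuration. The ``only if'' direction, by contrast, should be comparatively short once the preservation-under-decomposition lemma is set up. I would also remark that since this lemma is due to Fouquet, Giakoumakis and Vanherpe~\cite{FGV99}, for the purposes of this paper one may simply cite it; the sketch above indicates the route one would take to reprove it from scratch if needed.
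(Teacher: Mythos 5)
The paper does not prove this statement at all: it is imported verbatim from Fouquet, Giakoumakis and Vanherpe~\cite{FGV99}, and your closing remark --- that for the purposes of this paper one may simply cite it --- is exactly what the authors do. So there is no in-paper argument to compare yours against; what follows assesses your from-scratch sketch on its own terms.

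Your ``only if'' direction is sound, and in fact simpler than you make it: there is no need to ``recover a full copy'' of the obstruction inside $G'$ or $G''$ via a case analysis on the interface. A $2$-decomposition of $G$ restricts to a $2$-decomposition of any induced subgraph whose vertex set meets both $V(G')$ and $V(G'')$, since the complete/anti-complete relations between $V_1'$ and $V_2''$ and between $V_2'$ and $V_1''$ are inherited by subsets. Hence if an induced $P_7$ or $S_{1,2,3}$ were split between the two parts, it would itself admit a $2$-decomposition; the finite check that neither of these two connected $7$-vertex graphs does (which you assert at the outset) then shows the obstruction lies wholly inside $G'$ or wholly inside $G''$, and the induction closes. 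The genuine gap is in the ``if'' direction, and it is not a small one: the entire content of the theorem is the claim that every connected $(P_7,S_{1,2,3})$-free bipartite graph on at least two vertices admits \emph{some} $2$-decomposition. Your proposal replaces this with the expectation that excluding $P_7$ and $S_{1,2,3}$ forces the neighbourhoods on one side to be ``laminar-like'' and that a carefully chosen maximal or minimal neighbourhood yields either false twins or the required complete/anti-complete split. No argument is given for why a failure of decomposability must produce an induced $P_7$ or $S_{1,2,3}$, and that implication is precisely where all the work in~\cite{FGV99} lies. As a reproof the proposal is therefore incomplete; as a justification for the lemma as used in this paper, the citation you mention at the end is both sufficient and what the authors actually rely on.
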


\begin{figure}[h]
\begin{center}
\begin{tabular}{cc}
\begin{minipage}{0.25\textwidth}
\begin{center}
\scalebox{0.7}{
\begin{tikzpicture}[scale=1]
\GraphInit[vstyle=Simple]
\SetVertexSimple[MinSize=6pt]
\Vertex[x=0,y=0]{x0}
\Vertex[x=1,y=0]{x1}
\Vertex[x=2,y=0]{x2}
\Vertex[x=3,y=0]{x3}
\Vertex[x=0.5,y=1]{y0}
\Vertex[x=1.5,y=1]{y1}
\Vertex[x=2.5,y=1]{y2}
\Edges(x0,y0,x1,y1,x2,y2,x3)
\end{tikzpicture}}
\end{center}
\end{minipage}
&
\begin{minipage}{0.25\textwidth}
\begin{center}
\scalebox{0.7}{
\begin{tikzpicture}[scale=1]
\GraphInit[vstyle=Simple]
\SetVertexSimple[MinSize=6pt]
\Vertex[x=0.5,y=0]{x0}
\Vertex[x=1,y=0]{x1}
\Vertex[x=1.5,y=0]{x2}
\Vertex[x=2.5,y=0]{x4}
\Vertex[x=1,y=1]{z}
\Vertex[x=2,y=1]{x3}
\Vertex[x=0,y=1]{x5}
\Edges(x5,x0,z)
\Edge(x1)(z)
\Edge(x2)(z)
\Edges(x2,x3,x4)
\end{tikzpicture}}
\end{center}
\end{minipage}\\
\\
$P_7$ & $S_{1,2,3}$
\end{tabular}
\end{center}
\caption{The forbidden graphs from Lemma~\ref{lem:canon}.}\label{fig:P7-S123}
\end{figure}
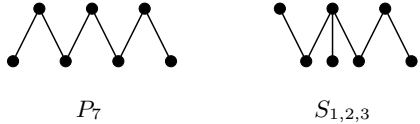

For our purposes we need to generalize the notion of totally decomposable bipartite graphs to $k$-partite graphs for $k\geq 2$, and we will also need to
partially classify graphs with this modified notion, in effect generalizing Lemma~\ref{lem:canon}.

Let~$G$ be a $k$-partite graph with a {\em fixed} vertex $k$-partition $(V_1,\ldots,V_k)$.
A {\em $k$-decomposition of~$G$} with respect to the partition $(V_1,\ldots,V_k)$ 
consists of
two non-empty graphs, each with their own partition:
$G'=G[V_1' \cup \cdots \cup V_k']$ 
with partition $(V_1',V_2',\ldots,V_k')$ 
and
$G''=G[V_1''\cup \cdots \cup V_k'']$ with partition $(V_1'',V_2'',\ldots,V_k'')$, such that:
\begin{enumeratei}
\item for $i\in \{1,\ldots,k\}$, $V_i'\cup V_i''=V_i$ and $V_i'\cap V_i''=\emptyset$;
\item for all $i,j \in \{1,\ldots,k\}$, 
$V_i'$ is either complete or anti-complete to~$V_j''$ in~$G$. 
\end{enumeratei}

\noindent
Note that the last condition holds for $i=j$ by definition, since $V_i=V_i' \cup V_i''$ is an independent set in~$G$.
Also note that in the above definition, $(V_1',V_2',\ldots,V_k')$ and $(V_1'',V_2'',\ldots,V_k'')$ are sub-partitions of $(V_1,V_2,\ldots,V_k)$, in the sense that 
$V_i'=V_i\cap V(G')$ and $V_i''=V_i\cap V(G)$ for $i \in \{1,\ldots,k\}$,
so the original partition on~$G$ uniquely specifies the partitions on~$G'$ and~$G''$.

If a graph~$G$ with a fixed $k$-partition has a $k$-decomposition with respect to this partition into two graphs~$G'$ and~$G''$ (with their associated sub-partitions), we say that~$G$ can be {\em $k$-decomposed} into~$G'$ and~$G''$ (with each of these subgraphs getting the appropriate sub-partition).
We say that~$G$ is {\em totally $k$-decomposable} with respect to some fixed partition if~$G$ can be recursively $k$-decomposed 
{\em with respect to this  fixed partition}
into graphs isomorphic to~$K_1$.
Note that by definition, if a graph~$H$ appears in a total $k$-decomposition of~$G$ with respect to some fixed partition $(V_1,\ldots,V_k)$, then
the $k$-partition $(V^H_1,V^H_2,\ldots,V^H_k)$ of~$H$ used to partition $H$ satisfies $V^H_i =V_i\cap V(H)$ for $i=1,\ldots,k$. 
This property will be necessary for us to be able to use inductive arguments ``safely.''

To compare graphs that are totally decomposable by canonical decomposition and graphs that are totally $2$-decomposable,
we observe that every connected bipartite graph~$G$ has a unique bipartition (up to isomorphism and swapping the two independent sets in the bipartition).
Also, if~$G$ is totally decomposable by canonical decomposition, then this decomposition can recursively be done component-wise.
Hence, in each step of the recursion, we may decompose with respect to an arbitrary bipartition of the graph under consideration. 
This means that the definitions of total canonical decomposability and total $2$-decomposability are equivalent.
However, for $k>2$, a connected graph can have multiple $k$-partitions, even up to isomorphism and permuting the independent sets of the partition.
Therefore, unlike for $k=2$, we need to fix the partition of the subgraphs~$G'$ and~$G''$ in the definition of total $k$-decomposability.

As mentioned, for our proofs we need to generalize Lemma~\ref{lem:canon}. It seems difficult 
to give a full characterization of totally $k$-decomposable graphs for $k\geq 3$. However, the following lemma is sufficient for our purposes.

\begin{lemma}\label{lem:123-imp}
A $3$-partite graph~$G$ is totally $3$-decomposable with respect to a $3$-partition $(V_1,V_2,V_3)$ if the following two conditions are both satisfied:
\begin{itemize}
\item $G[V_1 \cup V_2], G[V_1 \cup V_3]$ and $G[V_2 \cup V_3]$ are all $(P_7,S_{1,2,3})$-free, and
\item for every $v_1 \in V_1$, every $v_2 \in V_2$ and every $v_3 \in V_3$, the graph $G[v_1,v_2,v_3]$ is isomorphic neither to~$K_3$ nor to~$3P_1$.
\end{itemize}
\end{lemma}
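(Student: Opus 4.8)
The plan is to prove the statement by induction on $|V(G)|$, showing that whenever $|V(G)|\ge 2$ the graph $G$ admits a $3$-decomposition with respect to $(V_1,V_2,V_3)$. Since both hypotheses are hereditary --- an induced subgraph inherits the restriction of the $3$-partition, its three induced bipartite subgraphs remain $(P_7,S_{1,2,3})$-free, and every rainbow triple of an induced subgraph is a rainbow triple of $G$ --- the two parts of any such decomposition again satisfy the hypotheses, so the induction closes, the case $|V(G)|\le 1$ being the trivial base. If $G$ is disconnected we take one connected component as $G'$ and the union of the rest as $G''$ (every required pair of sets is then anti-complete), so we may assume $G$ connected. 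If some $V_i$ is empty then $G$ is (an induced subgraph of) a $(P_7,S_{1,2,3})$-free bipartite graph, which by Lemma~\ref{lem:canon} is totally decomposable by canonical decomposition; a single canonical decomposition step is a $3$-decomposition with the empty class split trivially, and the induction applies. Hence we may finally assume $G$ is connected and $V_1,V_2,V_3$ are all non-empty.

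The engine of the argument is a reformulation of the second hypothesis: for $v\in V_i$ and $\{j,k\}=\{1,2,3\}\setminus\{i\}$, the absence of a rainbow $K_3$ through $v$ says that $N(v)\cap V_j$ is anti-complete to $N(v)\cap V_k$, and the absence of a rainbow $3P_1$ through $v$ says that $V_j\setminus N(v)$ is complete to $V_k\setminus N(v)$. A first consequence: if some $v\in V_i$ is \emph{peelable}, i.e.\ trivial to both $V_j$ and $V_k$, then $G$ $3$-decomposes into $\{v\}$ and $G-v$, since with $V_i''=\{v\}$ and $V_\ell''=\emptyset$ for $\ell\ne i$ the only conditions in the definition that are not automatic are exactly that $v$ be complete or anti-complete to each of $V_j$ and $V_k$. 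So if $G$ has a peelable vertex we are done by induction, and we may assume it does not; equivalently, every vertex of $G$ is mixed (has both a neighbour and a non-neighbour) with respect to at least one of the two parts not containing it.

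Now fix $v_1\in V_1$ and assume, without loss of generality, that $v_1$ is mixed with respect to $V_2$. Writing $N_2=N(v_1)\cap V_2$, $\overline{N_2}=V_2\setminus N(v_1)$, $N_3=N(v_1)\cap V_3$, $\overline{N_3}=V_3\setminus N(v_1)$, the reformulation above gives that $N_2$ is anti-complete to $N_3$ and $\overline{N_2}$ is complete to $\overline{N_3}$; since $N_2$ and $\overline{N_2}$ are both non-empty, $G[N_2\cup \overline{N_3}]$ and $G[\overline{N_2}\cup N_3]$ form a $2$-decomposition of $G[V_2\cup V_3]$. It then remains to partition $V_1$ into sets $P$ and $Q$ so that $(P,N_2,\overline{N_3})$ and $(Q,\overline{N_2},N_3)$ is a $3$-decomposition of $G$; unwinding the definition, this asks precisely that $P$ be trivial to $\overline{N_2}$ and to $N_3$, and that $N_2$ and $\overline{N_3}$ each be trivial to $Q$ (all other pairs are automatically anti-complete, or already complete/anti-complete by the above). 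Re-applying the reformulation at the vertices of $V_1$ shows, just as in the peeling step, that a vertex $u\in V_1$ can fail to be placeable on either side only if it is mixed with respect to both $N_2$ and $\overline{N_2}$ or with respect to both $N_3$ and $\overline{N_3}$: a vertex mixed to both $\overline{N_2}$ and $\overline{N_3}$ would yield a rainbow $K_3$ (as $\overline{N_2}$ is complete to $\overline{N_3}$), and a vertex mixed to both $N_2$ and $N_3$ a rainbow $3P_1$, so the other ``cross'' obstructions are impossible.

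I expect the last point to be the main obstacle: converting this vertex-by-vertex placeability into a genuine \emph{set-wise} $3$-decomposition (the conditions are about sets being complete or anti-complete, not about individual vertices) and disposing of the ``doubly mixed'' vertices of $V_1$. To handle these one exploits that $G[V_1\cup V_2]$ and $G[V_1\cup V_3]$ are themselves $(P_7,S_{1,2,3})$-free, together with the freedom to choose $v_1$, to choose the part along which to split, and to choose the $2$-decomposition of $G[V_2\cup V_3]$, iterating the argument as needed; I anticipate a moderate case analysis here. By contrast, the reduction to connected graphs with all parts non-empty, the peeling lemma, and the elimination of the cross obstructions through the no-$K_3$/no-$3P_1$ condition should be routine.
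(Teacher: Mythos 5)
Your setup (induction on hereditary hypotheses, the reduction to connected graphs with all parts non-empty, the peeling of vertices trivial to both other parts, and the reformulation of the second hypothesis as ``$N(v)\cap V_j$ anti-complete to $N(v)\cap V_k$ and $(V_j\setminus N(v))$ complete to $(V_k\setminus N(v))$'') is sound, and deriving a $2$-decomposition of $G[V_2\cup V_3]$ from the neighbourhood of a single mixed vertex $v_1$ is a genuinely different starting point from the paper's. But the proposal has a real gap exactly where you flag one: the partition of $V_1$ into $P$ and $Q$ is never constructed. The vertex-wise constraints you extract (each $u\in V_1$ is anti-complete to $\overline{N_2}$ or to $\overline{N_3}$, and complete to $N_2$ or to $N_3$) do not rule out a vertex that is, say, complete to $N_2$ and anti-complete to $\overline{N_3}$ but mixed to both $\overline{N_2}$ and $N_3$; such a vertex fits in neither $P$ nor $Q$ under the required conditions. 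And even when every vertex is individually placeable, the definition of a $3$-decomposition demands \emph{set-wise} triviality (e.g.\ $N_2$ must be complete to all of $Q$ or anti-complete to all of $Q$), which your vertex-by-vertex analysis does not deliver. Crucially, this missing step is the only place where the $(P_7,S_{1,2,3})$-freeness of $G[V_1\cup V_2]$ and $G[V_1\cup V_3]$ could enter: everything you actually prove uses only the no-rainbow-$K_3$/$3P_1$ condition (plus the bipartite degenerate case), and the lemma is false without the first hypothesis, so no completion of your argument can avoid substantial additional work here. ``I anticipate a moderate case analysis'' is not that work.

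For comparison, the paper sidesteps this difficulty by splitting along a different axis: it invokes Lemma~\ref{lem:canon} to obtain a non-trivial canonical $2$-decomposition of $G[V_1\cup V_2]$ into $(V_1',V_2')$ and $(V_1'',V_2'')$, chosen with $V_1'$ and $V_2''$ non-empty and \emph{maximal}. The split of $V_1$ is then given for free by the canonical decomposition, and only $V_3$ remains to be partitioned --- which the rainbow-$K_3$/$3P_1$ conditions handle cleanly (maximality of $V_1'$ supplies the non-neighbour in $V_2''$ needed to manufacture the forbidden rainbow $3P_1$). If you want to salvage your route, you would need an analogous structural input on $G[V_1\cup V_2]$ or $G[V_1\cup V_3]$ to control the doubly mixed vertices of $V_1$; as written, the proposal stops short of a proof.
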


\begin{proof}
Let~$G$ be a $3$-partite graph with a $3$-partition $(V_1,V_2,V_3)$ such that both conditions are satisfied.
Note that any induced subgraph~$H$ of~$G$
(with partition $(V(H) \cap V_1, V(H) \cap V_2, V(H) \cap V_3)$) also satisfies the hypotheses of the lemma.
This enables us to apply induction.
It is therefore sufficient to show that~$G$ has a $3$-decomposition with respect to the given $3$-partition.

If~$V_1$ is empty then~$G$ is a $(P_7,S_{1,2,3})$-free bipartite graph and is therefore totally $2$-decomposable 
with respect to the partition $(V_2,V_3)$
by Lemma~\ref{lem:canon}
(and is thus totally $3$-decomposable with respect to the partition $(V_1,V_2,V_3)$).
By symmetry, we may therefore assume that every set~$V_i$ is non-empty.

Now $G[V_1,V_2]$ is a bipartite $(P_7,S_{1,2,3})$-free graph, so by Lemma~\ref{lem:canon}, $G[V_1\cup V_2]$ is totally $2$-decomposable.
Since~$V_1$ and~$V_2$ are both non-empty, it follows that~$V_1$ can be partitioned into two sets~$V_1'$ and~$V_1''$ and~$V_2$ can be partitioned into two sets~$V_2'$ and~$V_2''$, such that~$V_1'$ is either complete or anti-complete to~$V_2''$, and~$V_2'$ is either complete or anti-complete to~$V_1''$. 
Since the graphs $G[V_1' \cup V_2']$ and $G[V_1'' \cup V_2'']$ in this decomposition must be non-empty, it follows that $V_1' \cup V_2'$ and $V_1'' \cup V_2''$ must be non-empty.
Since for $i \in \{1,2\}$ we know that~$V_i=V_i'\cup V_i''$ is non-empty, at least one of~$V_i'$ and~$V_i''$ is non-empty.
Hence, combining these two observations, we may assume without loss of generality that~$V_1'$ and~$V_2''$ are non-empty.
Assume that these sets are maximal, that is, no vertex of~$V_1''$ (respectively $V_2'$) can be moved to~$V_1'$ (respectively~$V_2''$).
Note that~$V_1''$ or $V_2'$ may be empty.

We will prove that we can partition~$V_3$ into sets~$V_3'$ and~$V_3''$, such that for all
$i,j \in \{1,2,3\}$, $V_i'$ is complete or anti-complete to~$V_j''$.
Note that we already know that~$V_1'$ (respectively~$V_2'$) is complete or anti-complete to $V_2''$ (respectively~$V_1''$).
Also note that for $i \in \{1,2,3\}$, $V_i'$ is automatically anti-complete to~$V_i''$, since~$V_i$ is an independent set.

First suppose that~$V_1'$ is complete to~$V_2''$.
If a vertex of~$V_3$ has a neighbour in both~$V_1'$ and~$V_2''$ then these three vertices would form a forbidden~$K_3$, so every vertex in~$V_3$ is anti-complete to~$V_1'$ or~$V_2''$.
Let~$V_3'$ be the set of vertices in~$V_3$ that are anti-complete to~$V_2''$ and let $V_3''=V_3 \setminus V_3'$.
Note that~$V_3''$ must be anti-complete to~$V_1'$.
Suppose, for contradiction, that $z \in V_3'$ has a non-neighbour $v \in V_1''$.
Since~$V_1'$ is maximal, $v$ must have a non-neighbour $w \in V_2''$.
This means that $G[v,w,z]$ is a~$3P_1$.
This contradiction means that~$V_1''$ is complete to~$V_3'$.
Similarly, $V_2'$ is complete to~$V_3''$.
Therefore $G[V_1' \cup V_2' \cup V_3']$ and $G[V_1'' \cup V_2'' \cup V_3'']$ form the required $3$-decomposition of~$G$.

Now suppose that~$V_1'$ is anti-complete to~$V_2''$. 
If a vertex of~$V_3$ has a non-neighbour in both~$V_1'$ and~$V_2''$ then these three vertices would induce a forbidden~$3P_1$, so every vertex in~$V_3$ is complete to~$V_1'$ or~$V_2''$.
Let~$V_3'$ be the set of vertices in~$V_3$ that are complete to~$V_2''$ and let $V_3''=V_3 \setminus V_3'$.
Note that~$V_3''$ must be complete to~$V_1'$.
By using similar arguments to those in the previous case, we find that~$V_1''$ is anti-complete to~$V_3'$ and~$V_2'$ is anti-complete to~$V_3''$.
Hence, $G[V_1' \cup V_2' \cup V_3']$ and $G[V_1'' \cup V_2'' \cup V_3'']$ form the required $3$-decomposition of~$G$.
This completes the proof.
\qedllncs
\end{proof}

We also need the following lemma.

\begin{lemma}\label{lem:k-decomp}
Let~$G$ be a $k$-partite graph with vertex partition $(V_1,\ldots,V_k)$.
If~$G$ is totally $k$-decomposable with respect to this partition, then the clique-width of~$G$ is at most~$2k$.
Moreover, there is a $2k$-expression for~$G$ that assigns, for $i \in \{1,\ldots,k\}$, label~$i$ to every vertex of~$V_i$.
\end{lemma}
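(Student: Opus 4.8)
The plan is to prove the statement by induction on $|V(G)|$, constructing the $2k$-expression explicitly so that the invariant ``every vertex of $V_i$ carries label $i$'' is maintained throughout. If $|V(G)|=1$, say $V(G)=\{v\}$ with $v\in V_i$, then the $1$-expression that creates $v$ with label $i$ does the job, and $1\le 2k$. For the inductive step, since $G$ is totally $k$-decomposable with respect to $(V_1,\dots,V_k)$, it can be $k$-decomposed into two non-empty graphs $G'=G[V_1'\cup\cdots\cup V_k']$ and $G''=G[V_1''\cup\cdots\cup V_k'']$, each of which (with its induced sub-partition $V_i'=V_i\cap V(G')$, $V_i''=V_i\cap V(G'')$) is again totally $k$-decomposable with respect to that sub-partition, and each has strictly fewer vertices than $G$. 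By the induction hypothesis, $G'$ has a $2k$-expression $t'$ assigning label $i$ to every vertex of $V_i'$, and $G''$ has one, $t''$, assigning label $i$ to every vertex of $V_i''$.

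The difficulty — and really the only substantive point — is that the labels $1,\dots,k$ are already ``used up'' inside both $t'$ and $t''$, so we cannot directly take the disjoint union of $t'$ and $t''$ and then perform join operations, because a join between label $i$ and label $j$ would also affect the wrong vertices. The standard fix is to relabel: I would first modify $t''$ into a $2k$-expression $\tilde t''$ that assigns, for each $i\in\{1,\dots,k\}$, the fresh label $k+i$ to every vertex of $V_i''$ (this is done by taking $t''$ and prefixing a ``$\mathrm{ren}$'' step at the very end, or equivalently by renaming each label $i$ to $k+i$ throughout; since $t''$ uses only labels in $\{1,\dots,k\}$, the result uses only labels in $\{k+1,\dots,2k\}$, so at most $2k$ labels in total). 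Now take the disjoint union of $t'$ and $\tilde t''$: at this point the vertices of $V_i'$ have label $i$ and those of $V_i''$ have label $k+i$, and the edge set realised is exactly $E(G')\cup E(G'')$, with no edges between $V(G')$ and $V(G'')$.

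It remains to add the edges of $G$ between $V(G')$ and $V(G'')$. By the defining property of a $k$-decomposition, for every pair $i,j\in\{1,\dots,k\}$ the set $V_i'$ is either complete or anti-complete to $V_j''$ in $G$. So for each ordered pair $(i,j)$ with $V_i'$ complete to $V_j''$, I apply the join operation between label $i$ and label $k+j$; since $i\ne k+j$ always (as $i\le k<k+j$), this is a valid operation, and it adds exactly the edges between $V_i'$ and $V_j''$ and no others, because at this stage label $i$ is carried precisely by $V_i'$ and label $k+j$ precisely by $V_j''$. Performing these joins for all relevant pairs $(i,j)$ (in any order; they do not interfere, as each vertex of $V(G')$ keeps its label and each vertex of $V(G'')$ keeps its label throughout this phase) produces a labelled graph whose underlying graph is exactly $G$. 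Finally, for each $i\in\{1,\dots,k\}$ I rename label $k+i$ to label $i$; this reestablishes the invariant that every vertex of $V_i=V_i'\cup V_i''$ carries label $i$, and the whole expression uses at most $2k$ labels. This completes the induction and hence the proof that $\cw(G)\le 2k$, with the stated form of the $2k$-expression.

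One small bookkeeping remark worth including: the renaming-to-$\{k+1,\dots,2k\}$ step on $t''$ must be argued carefully if one insists on staying within $2k$ labels at \emph{every} intermediate step of the combined expression — but this is automatic, since $t''$ only ever uses labels $\{1,\dots,k\}$, so conjugating it by the fixed relabelling $i\mapsto k+i$ yields an expression using only $\{k+1,\dots,2k\}$, and the subsequent disjoint union, joins, and final renamings all stay within $\{1,\dots,2k\}$. I do not expect any genuine obstacle beyond this careful handling of label namespaces; the decomposition property from the hypothesis does all the structural work.
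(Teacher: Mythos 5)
Your proof is correct and follows essentially the same route as the paper: induction on the number of vertices, applying the induction hypothesis to the two parts of a $k$-decomposition, giving the second part labels $k+1,\ldots,2k$, taking the disjoint union, joining label~$i$ to label~$k+j$ exactly when $V_i'$ is complete to $V_j''$, and finally renaming $k+i$ to~$i$. Your extra care about the label-namespace bookkeeping is a slightly more explicit version of what the paper leaves implicit, but it is not a different argument.
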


\begin{proof}
We prove the lemma by induction on the number of vertices.
If~$G$ contains only one vertex then the lemma holds trivially.
Suppose that the lemma is true for all $k$-partite graphs~$H$ on at most~$n$ vertices and for all $k$-partitions $(V^H_1,\ldots,V^H_2)$ with respect to which~$H$ is totally $k$-decomposable.
Let~$G$ be a graph on $n+1$ vertices that is totally $k$-decomposable with respect to a vertex partition $(V_1,\ldots,V_k)$.
Then, 
we can partition every set~$V_i$ into two sets~$V_i'$ and~$V_i''$
in such a way that
each set~$V_i'$ is either complete or anti-complete to each set~$V_j''$ for all $i,j \in \{1,\ldots,k\}$
and $G'=G[V_1'\cup \ldots \cup V_k']$ and $G''=G[V_1''\cup \ldots \cup V_k'']$ are totally $k$-decomposable with respect to the partitions $(V_1',\ldots,V_k')$ and $(V_1'',\ldots,V_k'')$, respectively.

As both~$G'$ and~$G''$ are smaller graphs that~$G$, we can apply the induction hypothesis.
Hence, we can find a $2k$-expression that constructs~$G'$ such that the vertices in each set~$V_i'$ have label~$i$ for $i \in \{1,\ldots,k\}$.
Similarly, we can find a $2k$-expression that constructs~$G''$ such that the vertices in each set~$V_j''$ have label~$k+j$ for $j \in \{1,\ldots,k\}$.
We take the disjoint union of these two constructions.
Next, for $i,j \in \{1,\ldots,k\}$, we join the vertices with label~$i$ to the vertices with label~$k+j$ if and only if~$V_i'$ is complete to~$V_j''$ in~$G$.
Finally, for $i \in \{1,\ldots,k\}$, we relabel the vertices with label $k+i$ to have label~$i$.
This completes the proof of the lemma.
\qedllncs
\end{proof}

\section{Sufficient Conditions for $(K_3,S_{1,2,3})$-free Graphs}\label{sec:sufficient}

We observe that the classes of $(K_3,P_1+\nobreak 2P_2)$-free, $(K_3,P_1+\nobreak P_2+\nobreak P_3)$-free,
$(K_3,P_1+\nobreak P_5)$-free and $(K_3,S_{1,2,2})$-free graphs are all subclasses of the class of
$(K_3,S_{1,2,3})$-free graphs. In order to prove that each of the four subclasses has bounded clique-width, we investigate, in this section, sufficient conditions for 
a subclass of $(K_3,S_{1,2,3})$-free graphs to be of bounded clique-width.
We present these conditions in the form of two lemmas. The proof of the second lemma uses the results from the previous section. 
We will not use the two new lemmas directly when proving that the class of $(\mbox{diamond},P_1+\nobreak 2P_2)$-free graphs has bounded clique-width. 
However, our proof of that result does rely on these two lemmas indirectly, as it depends on the $(K_3,P_1+\nobreak 2P_2)$-free case.

The first lemma implies that the four triangle-free cases in our new results hold when the graph class under consideration is in addition $C_5$-free.

\newpage
\begin{lemma}\label{lem:noC5}
The class of $(K_3,C_5,S_{1,2,3})$-free graphs has bounded clique-width.
\end{lemma}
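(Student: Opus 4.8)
The plan is to exploit the fact that a connected $(K_3, C_5)$-free graph with no induced $S_{1,2,3}$ is highly structured, and to reduce it, via a bounded number of clique-width-preserving operations, either to a bipartite graph that is $S_{1,2,3}$-free (hence of bounded clique-width by Lemma~\ref{lem:bipartite}) or to a graph that is totally $k$-decomposable for a fixed small $k$ (hence of bounded clique-width by Lemma~\ref{lem:k-decomp}). We may assume $G$ is connected, since clique-width is the maximum over connected components. If $G$ is bipartite, then $G$ is an $S_{1,2,3}$-free bipartite graph and we are immediately done by Lemma~\ref{lem:bipartite}. So we may assume $G$ contains an induced odd cycle; since $G$ is $\{K_3,C_5\}$-free, the shortest such cycle has length at least $7$. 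The first step is to fix a shortest induced odd cycle $C$ and analyse how the remaining vertices attach to $C$: triangle-freeness forces every vertex outside $C$ to have at most two neighbours on $C$, and when it has two they must be at distance $2$ along $C$; the $S_{1,2,3}$-freeness then severely limits how far from $C$ such attachments can propagate, and together with $C_5$-freeness it should force $C$ to be short (in fact of length exactly $7$, mirroring the role of $P_7$ and $S_{1,2,3}$ in Lemma~\ref{lem:canon}), or else force a bounded-size structure around $C$ that dominates $G$.

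Concretely, I would partition $V(G)\setminus V(C)$ according to which vertices (and how many) of $C$ a vertex sees, obtaining a bounded number of "types". Using Facts~\ref{fact:del-vert}--\ref{fact:bip}, I can delete the (boundedly many) vertices of $C$, and then apply bipartite complementations between the type classes so that the edges to what used to be $C$'s neighbourhood become trivial (complete/anti-complete). After these bounded-cost modifications, $G$ should decompose into a bounded number of parts, each of which is either bipartite and $S_{1,2,3}$-free, or $3$-partite (or $k$-partite for a fixed small $k$) satisfying the two hypotheses of Lemma~\ref{lem:123-imp} — namely that every two-part subgraph is $(P_7,S_{1,2,3})$-free (using $C_5$-freeness to rule out short odd holes inside the two-part induced subgraphs, and $S_{1,2,3}$-freeness directly) and that no transversal triple induces $K_3$ (forbidden) or $3P_1$ (which, if present, would combine with the cycle structure to produce a forbidden $S_{1,2,3}$). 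Then Lemma~\ref{lem:123-imp} gives total $3$-decomposability and Lemma~\ref{lem:k-decomp} gives clique-width at most $6$ for that part; combining over the boundedly many parts and undoing the bounded number of operations keeps the clique-width bounded.

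The main obstacle I expect is the case analysis establishing that the neighbourhood structure around the shortest odd hole $C$ is genuinely controlled — i.e. showing that the "dominating" region around $C$ has bounded size, or equivalently that the non-dominated remainder of $G$ is already bipartite or splits into the required $k$-partite pieces with the Lemma~\ref{lem:123-imp} properties. This is where $C_5$-freeness, $K_3$-freeness and $S_{1,2,3}$-freeness have to be used in combination, and getting the transversal-triple condition ("no induced $3P_1$ across the three parts") to hold — possibly only after further bounded bipartite complementations or after observing that violating triples force a vertex with large, structured neighbourhood that can be handled separately — is the delicate point. Everything else (the reduction bookkeeping, the final assembly via Facts~\ref{fact:del-vert}--\ref{fact:bip} and Lemmas~\ref{lem:123-imp} and~\ref{lem:k-decomp}) is routine once that structural claim is in place.
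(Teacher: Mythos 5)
Your opening moves match the paper exactly: reduce to the connected case, dispose of bipartite graphs via Lemma~\ref{lem:bipartite}, and observe that otherwise $G$ contains an induced odd cycle of length at least $7$ (take one, $C=v_1-\cdots-v_k-v_1$, of minimum odd length). But from that point on your proposal is a plan rather than a proof, and the part you defer --- ``establishing that the neighbourhood structure around the shortest odd hole $C$ is genuinely controlled'' --- is not a delicate finishing touch; it \emph{is} the entire content of the lemma. You never actually derive the structure, you only list the lemmas you hope to apply to it, so as written there is a genuine gap.

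Moreover, the machinery you reach for (deleting $C$, bipartite complementations between type classes, Lemma~\ref{lem:123-imp} and Lemma~\ref{lem:k-decomp}) is much heavier than what the structure permits, and some of your guesses point in the wrong direction (e.g.\ that $C$ should be forced to have length exactly $7$ --- it need not be, and nothing in the argument requires it). The actual analysis is short and collapses the graph completely: a vertex $v$ outside $C$ with exactly one neighbour $v_3$ on $C$ yields the induced $S_{1,2,3}$ $G[v_3,v,v_2,v_1,v_4,v_5,v_6]$, so every attached vertex has at least two neighbours on $C$; $K_3$-freeness forbids consecutive neighbours, and minimality of $k$ plus a parity argument forces \emph{exactly} two neighbours at distance $2$ on $C$. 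A vertex $u$ with no neighbour on $C$ but a neighbour $v\in V_1$ gives the $S_{1,2,3}$ $G[v_2,v_1,v,u,v_3,v_4,v_5]$, so $U=\emptyset$. Finally, non-adjacent $x_1\in V_1$, $x_2\in V_2$ give another $S_{1,2,3}$, while $K_3$-freeness and minimality force anti-completeness between all other pairs $V_i,V_j$; hence every vertex of $V_i$ is a false twin of $v_i$. By Lemma~\ref{lem:false-twin} the graph reduces to the odd cycle $C$ itself, which has clique-width at most $4$ by Lemma~\ref{lem:atmost-2}. No $k$-decomposition, bipartite complementation, or appeal to Lemma~\ref{lem:123-imp} is needed.
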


\begin{proof}
Let~$G$ be a $(K_3,C_5,S_{1,2,3})$-free graph.
We may assume that $G$ is connected.
If~$G$ is bipartite, then it is an $S_{1,2,3}$-free bipartite graph, so it has bounded clique-width by Lemma~\ref{lem:bipartite}.
We know that~$G$ is $(C_3,C_5)$-free (since $C_3=K_3$).
We may therefore assume that~$G$ contains an induced odd cycle~$C$ on~$k$ vertices, say $v_1-v_2-\cdots-v_k-v_1$, where $k \geq 7$.
Assume that $C$ is an odd cycle of minimum length in~$G$.

Suppose that not every vertex of~$G$ is in~$C$.
Since~$G$ is connected, we may assume that there is a vertex~$v$ not in~$C$ that has a neighbour in~$C$.
Suppose~$v$ is adjacent to precisely one vertex of~$C$.
If~$v$ is adjacent to~$v_3$, but has no other neighbours on~$C$ then $G[v_3,v,v_2,v_1,v_4,v_5,v_6]$ is an~$S_{1,2,3}$, a contradiction.
By symmetry, it follows that~$v$ must be adjacent to at least two vertices of~$C$.
Note that since~$G$ is $K_3$-free, no vertex outside of~$C$ can be adjacent to two consecutive vertices of~$C$.

Suppose that~$v$ is adjacent to~$v_1$ and~$v_i$ and non-adjacent to $v_2,\ldots,v_{i-1}$ for some even~$i$ with $i\leq k-2$.
Then $G[v,v_1,v_2,\ldots,v_i]$ would be an odd cycle on less than~$k$ vertices, contradicting the minimality of~$k$.
By a parity argument, since~$C$ is an odd cycle, it follows that~$v$ must be adjacent to precisely two vertices of~$C$, which must be at distance~$2$ away from each other on the cycle.

Let~$V_i$ be the set of vertices outside of~$C$ that are adjacent to~$v_{i-1}$ and~$v_{i+1}$ (subscripts interpreted modulo~$k$) and let~$U$ be the set of vertices that have no neighbour in~$C$.
Suppose, for contradiction, that~$U$ is non-empty.
Since~$G$ is connected, without loss of generality there is a vertex $u \in U$ that has a neighbour $v \in V_1$.
Then $G[v_2,v_1,v,u,v_3,v_4,v_5]$ is an $S_{1,2,3}$, a contradiction.
We conclude that~$U$ must be empty.

Now since~$G$ is $K_3$-free, for every~$i$ the set~$V_i$ is anti-complete to the set~$V_{i+2}$.
Moreover, if~$i$ and~$j$ are such that the vertices~$v_i$ and~$v_j$ are at distance more than~$2$ on the cycle, then~$V_i$ and~$V_j$ must be anti-complete, as otherwise there would be a smaller odd cycle than~$C$ in~$G$, which would contradict the minimality of~$k$.

Note that every set~$V_i$ is independent in~$G$, since~$G$ is $K_3$-free.
If a vertex $x_1 \in\nobreak V_1$ is non-adjacent to a vertex $x_2 \in V_2$ then $G[v_3,x_2,v_2,x_1,v_4,v_5,v_6]$ is an~$S_{1,2,3}$, a contradiction.
Therefore a vertex $x_i \in V_i$ is adjacent to a vertex $x_j \in V_j$ if and only if~$v_i$ and~$v_j$ are consecutive vertices of~$C$.
In other words, for every~$i$, every vertex in~$V_i$ is a false twin of~$v_i$.
By Lemma~\ref{lem:false-twin} we may therefore assume that every~$V_i$ is empty, so~$G$ is an induced odd cycle.
By Lemma~\ref{lem:atmost-2}, $G$ has clique-width at most~$4$.
\qedllncs
\end{proof}

In our second lemma we state a number of sufficient conditions for a subclass of $(K_3,S_{1,2,3})$-free graphs to be of bounded clique-width when~$C_5$ is no longer a forbidden induced subgraph. To prove it we will need Lemmas~\ref{lem:123-imp} and~\ref{lem:k-decomp}.

\begin{lemma}\label{lem:gen}
Let~${\cal G}$ be the subclass of $(K_3,S_{1,2,3})$-free graphs for which the vertices in each graph $G \in {\cal G}$ can be partitioned into ten independent sets $V_1,\ldots,V_5,\allowbreak W_1,\ldots,W_5$, such that the following seven conditions hold (we interpret subscripts modulo~$5$):
\begin{enumeratei}
\item for all~$i$, $V_i$ is anti-complete to $V_{i-2} \cup V_{i+2} \cup W_{i-1} \cup W_{i+1};$\label{ass:vi-anti-vi2wi1}

\item for all~$i$, $W_i$ is complete to $W_{i-1}\cup W_{i+1}$;\label{ass:wi-comp-wi1}

\item for all~$i$, every vertex of~$V_i$ is trivial to at least one of the sets~$V_{i+1}$ and $V_{i-1}$;\label{ass:vi-triv-v-1-or-v+1}

\item for all~$i$, every vertex in~$V_i$ is trivial to~$W_i$;\label{ass:vi-triv-wi}

\item for all~$i$, $W_i$ is trivial to~$W_{i-2}$ and to~$W_{i+2}$;\label{ass:wi-triv-wi2}

\item for all $i, j$, the graphs induced by $V_i \cup V_j$ and $V_i \cup W_j$ are $P_7$-free;\label{ass:p7-free}

\item for all~$i$, there are no three vertices~$v \in V_i$, $w \in V_{i+1}$ and $x \in W_{i+3}$ such that $v, w$ and~$x$ are pairwise non-adjacent.\label{ass:no-3P_1}
\end{enumeratei}
Then~${\cal G}$ has bounded clique-width.
\end{lemma}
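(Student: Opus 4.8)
The plan is to reduce every $G\in{\cal G}$, by a bounded number of clique-width-preserving operations (vertex deletion, Fact~\ref{fact:del-vert}; bipartite complementation, Fact~\ref{fact:bip}; removal of false twins, Lemma~\ref{lem:false-twin}), to a disjoint union of totally $2$-decomposable bipartite graphs; such graphs are $(P_7,S_{1,2,3})$-free by Lemma~\ref{lem:canon} and hence have clique-width at most~$4$ by Lemma~\ref{lem:k-decomp} (with $k=2$), so this suffices. Note that every $G[V_i\cup V_j]$ and every $G[V_i\cup W_j]$ is $P_7$-free by~\ref{ass:p7-free} and $S_{1,2,3}$-free because $G$ is, so Lemma~\ref{lem:canon} applies to all of these; this is why~\ref{ass:p7-free} is imposed. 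We may assume $G$ is connected.

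\emph{Eliminating the sets $W_i$.} By~\ref{ass:vi-anti-vi2wi1} and~\ref{ass:vi-triv-wi}, the only edges between $\bigcup_i V_i$ and $\bigcup_i W_i$ join some $V_i$ to $W_i$, $W_{i+2}$ or $W_{i-2}$; moreover~\ref{ass:vi-triv-wi} forces every vertex of a given $W_i$ to have the same neighbourhood inside $V_i$. So I would first perform the (at most ten) bipartite complementations that delete all edges between $W_i$ and $V_{i+2}\cup V_{i-2}$ for each $i$. In the resulting graph every $W_i$ is a set of mutual false twins (inside $\bigcup_i W_i$ this follows from~\ref{ass:wi-comp-wi1}, \ref{ass:wi-triv-wi2} and the independence of $W_i$; inside $\bigcup_i V_i$ it follows from the observation above), so by Lemma~\ref{lem:false-twin} I may delete all but one vertex of each $W_i$ and then delete these (at most five) remaining vertices. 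By Facts~\ref{fact:del-vert} and~\ref{fact:bip} it remains to bound the clique-width of $G'=G[V_1\cup\cdots\cup V_5]$. Condition~\ref{ass:no-3P_1} enters here: together with the $K_3$-freeness of $G$ and~\ref{ass:p7-free} it is exactly the ``no-$3P_1$'' hypothesis that makes Lemma~\ref{lem:123-imp} applicable to each triple $(V_i,V_{i+1},W_{i+3})$, which is what keeps the interaction between the $W_i$ and the $V_j$ under control while the $W_i$ are stripped off.

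\emph{Decomposing the $V_i$.} By~\ref{ass:vi-anti-vi2wi1} the graph $G'$ has edges only between cyclically consecutive sets $V_i,V_{i+1}$, and each $G'[V_i\cup V_{i+1}]$, being $(P_7,S_{1,2,3})$-free bipartite, is totally $2$-decomposable (Lemma~\ref{lem:canon}). Using~\ref{ass:vi-triv-v-1-or-v+1} I would write $V_i=A_i\cup B_i$, where $A_i$ is the set of vertices of $V_i$ trivial to $V_{i+1}$ and $B_i=V_i\setminus A_i$ (every vertex of which is, by~\ref{ass:vi-triv-v-1-or-v+1}, trivial to $V_{i-1}$), and split each $A_i$ (respectively $B_i$) further into the vertices complete and the vertices anti-complete to $V_{i+1}$ (respectively $V_{i-1}$). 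A short analysis of the adjacencies then shows that the only ``genuinely arbitrary'' edges between $V_i$ and $V_{i+1}$ are those between $B_i$ and $A_{i+1}$, and that the ``complete/anti-complete'' sub-parts are severely constrained (for instance $A_i$ cannot have a non-empty part complete to $V_{i+1}$ while $B_{i+1}$ has a non-empty part anti-complete to $V_i$). After a bounded number of bipartite complementations that remove these ``complete'' parts, $G'$ becomes the disjoint union of the five bipartite graphs induced on the pairwise-disjoint sets $B_i\cup A_{i+1}$ ($i=1,\dots,5$), each of which is an induced subgraph of a totally $2$-decomposable graph and hence, being $(P_7,S_{1,2,3})$-free, has clique-width at most~$4$. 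Undoing the complementations (Fact~\ref{fact:bip}) bounds $\cw(G')$, and therefore $\cw(G)$.

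The main obstacle is this last step: the five sets $V_i$ sit in a cyclic arrangement, so a bipartite complementation used to perform the ``cut'' can act on the same set $V_i$ coming from both of its neighbours $V_{i-1}$ and $V_{i+1}$, and one must order and choose these operations carefully (and probably split into cases according to which of the ``complete'' sub-parts are empty, using the consistency constraints noted above) so that no edge is toggled twice and thereby reintroduced inside one of the five pieces. Condition~\ref{ass:vi-triv-v-1-or-v+1} is precisely what makes such a cut possible, by routing every vertex's non-trivial interaction to a single side of the cycle; the bookkeeping needed to realise it cleanly is where the real work lies. A secondary, routine obstacle is verifying, in the first step, that after the ten bipartite complementations the sets $W_i$ really are sets of false twins, which is where conditions~\ref{ass:wi-comp-wi1}, \ref{ass:vi-triv-wi} and~\ref{ass:wi-triv-wi2} are each needed.
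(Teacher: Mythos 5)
Your first step contains a fatal error. You propose to ``perform the (at most ten) bipartite complementations that delete all edges between $W_i$ and $V_{i+2}\cup V_{i-2}$''. A bipartite complementation between two sets replaces edges by non-edges \emph{and vice versa}; it deletes all edges only when the two sets are already complete to each other. Nothing in the hypotheses makes $W_i$ trivial to $V_{i+2}$ or $V_{i-2}$: Condition~(iv) controls only the pair $(V_i,W_i)$, and the bipartite graphs $G[V_i\cup W_{i\pm 2}]$ are constrained only to be $P_7$-free (Condition~(vi)) and to avoid certain $3P_1$'s (Condition~(vii)) --- otherwise they can be arbitrary $(P_7,S_{1,2,3})$-free bipartite graphs. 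Consequently, after your ten operations the edges between $W_i$ and $V_{i\pm2}$ are complemented, not removed, and the vertices of $W_i$ are \emph{not} false twins, since two vertices of $W_i$ may have different neighbourhoods in $V_{i+2}$. No bounded number of bipartite complementations can erase an arbitrary bipartite graph (otherwise every class would have bounded clique-width), so the sets $W_i$ cannot be stripped off this way; they carry genuine structure that must be decomposed rather than deleted.

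This is precisely where the paper's proof does its real work, and where your passing remark about Lemma~\ref{lem:123-imp} would have to become the actual argument. The paper isolates, for each $i$, the set $V_i'$ of vertices of $V_i$ non-trivial to both $V_{i+1}$ and $W_{i-2}$, the set $V_{i+1}''$ of vertices of $V_{i+1}$ non-trivial to both $V_i$ and $W_{i-2}$, and the set $W_{i-2}'$ of vertices of $W_{i-2}$ non-trivial to both $V_i$ and $V_{i+1}$; it proves that every edge leaving $V_i'\cup V_{i+1}''\cup W_{i-2}'$ has an endpoint complete to one of the named sets (so the piece can be split off by boundedly many bipartite complementations), and then applies Lemmas~\ref{lem:123-imp} and~\ref{lem:k-decomp} to this genuinely $3$-partite, totally $3$-decomposable piece. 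Only after these tripartite pieces are removed does each remaining vertex interact non-trivially with at most one other set, at which point the leftover is covered by $S_{1,2,3}$-free \emph{bipartite} graphs of the form $W_i^*\cup V_{i+2}^{**}\cup V_{i+1}^*\cup W_{i-2}^{**}$, handled by Lemma~\ref{lem:bipartite}. Your second step (splitting each $V_i$ via Condition~(iii) into a part trivial to $V_{i+1}$ and a part trivial to $V_{i-1}$) is in the right spirit for that final phase, but as written it rests on the false premise that the $W_i$ are already gone, and you do not carry out the cyclic bookkeeping that you yourself identify as the remaining difficulty.
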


\begin{proof}
\begin{sloppypar}
Let~$G$ be a $(K_3,S_{1,2,3})$-free graph with such a partition that satisfies Conditions~\ref{ass:vi-anti-vi2wi1}--\ref{ass:no-3P_1} of the lemma.
Note that for all~$i$, every vertex $v \in V_i$ is trivial to $V_{i-2},V_{i+2},W_{i-1},W_{i+1},W_i$ and either trivial to~$V_{i-1}$ or trivial to~$V_{i+1}$.
Therefore a vertex $v \in V_i$ can only be non-trivial to $W_{i-2}, W_{i+2}$ and at most one of~$V_{i-1}$ and~$V_{i+1}$.
Likewise, every vertex $w \in W_i$ is trivial to $W_{i-1}, W_{i+1}, W_{i-2}, W_{i+2}, V_{i-1}$ and~$V_{i+1}$.
Therefore, a vertex $w \in W_i$ can only be non-trivial to~$V_i, V_{i-2}$ and~$V_{i+2}$ (and every vertex in~$V_i$ is trivial to~$W_i$).
\end{sloppypar}

For $i \in \{1,\ldots,5\}$, let~$W_i'$ be the set of vertices in~$W_i$ that are non-trivial to both~$V_{i-2}$ and~$V_{i+2}$, let~$V_i'$ be the set of vertices in~$V_i$ that are non-trivial to both~$V_{i+1}$ and~$W_{i-2}$ and let~$V_i''$ be the set of vertices in~$V_i$ that are non-trivial to both~$V_{i-1}$ and~$W_{i+2}$.
Note that $V_i' \cap V_i'' = \emptyset$ by Condition~\ref{ass:vi-triv-v-1-or-v+1}.

We say that an edge is {\em irrelevant} if one of its end-vertices is in a set $V_i, V_i', V_i'', W_i$ or~$W_i'$, and its other end-vertex is complete to this set, otherwise we say that the edge is {\em relevant}.
We will now show that for $i \in \{1,\ldots,5\}$, the graph $G[V_i' \cup V_{i+1}'' \cup W_{i-2}']$ can be separated from the rest of~$G$ by using a bounded number of bipartite complementations.
To do this, we first prove the following claim.

\clm{\label{clm:split-off} If $u \in V_i' \cup V_{i+1}'' \cup W_{i-2}'$ and $v \notin V_i' \cup V_{i+1}'' \cup W_{i-2}'$ are adjacent then~$uv$ is an irrelevant edge.}
We split the proof of Claim~\ref{clm:split-off} into the following cases.

\thmcase{\label{case:u-in-V_i} $u \in V_i'$.}
Since~$u$ is in~$V_i$, $v$ must be in $V_{i-1} \cup V_{i+1} \cup W_{i-2} \cup W_{i+2}$, otherwise~$uv$ would be irrelevant by Condition~\ref{ass:vi-anti-vi2wi1} or~\ref{ass:vi-triv-wi}.
We consider the possible cases for~$v$.

\thmsubcase{$v \in V_{i-1}$.}
Since~$u$ is in~$V_i'$, it is non-trivial to $V_{i+1}$, so by Condition~\ref{ass:vi-triv-v-1-or-v+1}, $u$ is trivial to~$V_{i-1}$. Therefore~$uv$ is irrelevant.

\thmsubcase{\label{case:special-3P1-K3-case} $v \in V_{i+1}$.}
Suppose, for contradiction, that~$v$ is complete to~$W_{i-2}$.
Let $w \in W_{i-2}$ be a neighbour of~$u$ (such a vertex~$w$ exists, since~$u$ is non-trivial to~$W_{i-2}$).
Then $G[u,v,w]$ is a~$K_3$, a contradiction, so~$v$ cannot be complete to~$W_{i-2}$.
Now suppose, for contradiction that~$v$ is anti-complete to~$W_{i-2}$.
We may assume that~$v$ has a non-neighbour $u' \in V_i'$, otherwise~$v$ would be trivial to~$V_i'$, in which case~$uv$ would be irrelevant.
Since $u' \in V_i'$, $u'$ is non-trivial to~$W_{i-2}$, so it must have a non-neighbour $w \in W_{i-2}$.
Then, since~$v$ is anti-complete to~$W_{i-2}$, it follows that $G[u',v,w]$ is a~$3P_1$, contradicting Condition~\ref{ass:no-3P_1}.
We may therefore assume that~$v$ is non-trivial to~$W_{i-2}$.
We know that $v \notin V_{i+1}''$.
Therefore~$v$ must be trivial to~$V_i$, so~$uv$ is irrelevant.

\thmsubcase{$v \in W_{i-2}$.}
Reasoning as in the previous case, we find that~$v$ cannot be complete or anti-complete to~$V_{i+1}$.
Hence, as $v \notin W_{i-2}'$, $v$ must  be trivial to~$V_i$, so~$uv$ is irrelevant.

\thmsubcase{$v \in W_{i+2}$.}
Since~$u$ is non-trivial to~$W_{i-2}$ (by definition of~$V_i'$), there is a vertex $w \in W_{i-2}$ that is adjacent to~$u$.
By Condition~\ref{ass:wi-comp-wi1}, $w$ is adjacent to~$v$.
Therefore $G[u,v,w]$ is a~$K_3$.
This contradiction implies that $v \notin W_{i+2}$.
This completes Case~\ref{case:u-in-V_i}.

\medskip
\noindent
Now assume that $u \notin V_i'$.
Then, by symmetry, $u \notin V_{i+1}''$.
This means that the following case holds.

\thmcase{\label{case:u-in-Wi2} $u \in W_{i-2}'$.}
We argue similarly to Case~\ref{case:special-3P1-K3-case}.
We may assume that~$v$ is non-trivial to~$W_{i-2}'$, otherwise~$uv$ would be irrelevant.
By Conditions~\ref{ass:vi-anti-vi2wi1},~\ref{ass:wi-comp-wi1}, \ref{ass:vi-triv-wi} and~\ref{ass:wi-triv-wi2}, it follows that $v \in V_i \cup V_{i+1}$.
Without loss of generality assume that $v \in V_i$.
Since $v \notin V_i'$ and~$v$ is non-trivial to~$W_{i-2}$, it follows that~$v$ is trivial to~$V_{i+1}$.
If~$v$ is complete to~$V_{i+1}$ then since~$u$ is non-trivial to~$V_{i+1}$, there must be a vertex~$w \in V_{i+1}$ adjacent to~$u$, in which case $G[u,v,w]$ is a~$K_3$, a contradiction.
Therefore~$v$ must be anti-complete to~$V_{i+1}$.
Since~$v$ is non-trivial to~$W_{i-2}'$, there must be a vertex $u' \in W_{i-2}'$ that is non-adjacent to~$v$.
Since $u' \in W_{i-2}'$, $u'$ must have a non-neighbour $w \in V_{i+1}$.
Then $G[u',v,w]$ is a~$3P_1$, contradicting Condition~\ref{ass:no-3P_1}.
This completes Case~\ref{case:u-in-Wi2}.

\medskip
\noindent
We conclude that, if $u \in V_i' \cup V_{i+1}'' \cup W_{i-2}'$ and $v \notin V_i' \cup V_{i+1}'' \cup W_{i-2}'$ are adjacent, then~$uv$ is an irrelevant edge. 
Hence we have proven Claim~\ref{clm:split-off}.

\medskip
\noindent
By Claim~\ref{clm:split-off} we find that if $u \in V_i' \cup V_{i+1}'' \cup W_{i-2}'$ and $v \notin V_i' \cup V_{i+1}'' \cup W_{i-2}'$ are adjacent then~$u$ or~$v$ is complete to some set $V_j, V_j', V_j'', W_j$ or~$W_j'$ that contains~$v$ or~$u$, respectively.
By applying a bounded number of bipartite complements (which we may do by Fact~\ref{fact:bip}), we can separate $G[V_i' \cup V_{i+1}'' \cup W_{i-2}']$ from the rest of~$G$.
By Conditions~\ref{ass:p7-free} and~\ref{ass:no-3P_1}  and the fact that~$G$ is $(K_3,S_{1,2,3})$-free, Lemmas~\ref{lem:123-imp} and~\ref{lem:k-decomp} imply that $G[V_i' \cup V_{i+1}'' \cup W_{i-2}']$ has clique-width at most~$6$.
Repeating this argument for each~$i$, we may assume that $V_i' \cup V_{i+1}'' \cup W_{i-2}' = \emptyset$ for every~$i$.

\medskip
For $i \in \{1,\ldots,5\}$ let~$V_i^*$ be the set of vertices in~$V_i$ that are either non-trivial to~$V_{i+1}$ or non-trivial to~$W_{i+2}$ and let~$V_i^{**}$ be the set of the remaining vertices in~$V_i$.
For $i \in \{1,\ldots,5\}$, let~$W_i^*$ be the set of vertices that are non-trivial to~$V_{i+2}$ and let~$W_i^{**}$ be the set of the remaining vertices in~$W_i$.

We claim that every vertex in~$V_i$ that is non-trivial to~$V_{i-1}$ or that is non-trivial to~$W_{i-2}$ is in~$V_i^{**}$.
Indeed, if $v \in V_i$ is non-trivial to~$V_{i-1}$ then by Condition~\ref{ass:vi-triv-v-1-or-v+1}, $v$ is trivial to~$V_{i+1}$ and since~$V_i''$ is empty, $v$ must be trivial to~$W_{i+2}$.
If $v \in V_i$ is non-trivial to~$W_{i-2}$ then~$v$ must be trivial to~$V_{i+1}$ since~$V_i'$ is empty.
Moreover, in this case~$v$ must also be trivial to~$W_{i+2}$, otherwise, by Condition~\ref{ass:wi-comp-wi1} the vertex~$v$, together with a neighbour of~$v$ in each of~$W_{i+2}$ and~$W_{i-2}$, would induce a~$K_3$ in~$G$.
It follows that every vertex in~$V_i$ that is non-trivial to~$V_{i-1}$ or that is non-trivial to~$W_{i-2}$ is indeed in~$V_i^{**}$.
Similarly, for all~$i$, since~$W_i'$ is empty, every vertex in~$W_i$ that is non-trivial to~$V_{i-2}$ is in~$W_i^{**}$.

We say that an edge~$uv$ is {\em insignificant} if~$u$ or~$v$ is in some set $V_i^*,V_i^{**},W_i^*$ or~$W_i^{**}$ and the other vertex is trivial to this set;
all other edges are  said to be {\em significant}.
We prove the following claim.

\clm{\label{clm:split-off2} If $u \in W_i^* \cup V_{i+2}^{**} \cup V_{i+1}^* \cup W_{i-2}^{**}$ and $v \notin W_i^* \cup V_{i+2}^{**} \cup V_{i+1}^* \cup W_{i-2}^{**}$ are adjacent then the edge~$uv$ is insignificant.}
To prove this claim suppose, for contradiction, that~$uv$ is a significant edge. We split the proof into two cases.

\thmcase{$u \in W_i$.}
We will show that $v \in V_{i+2}^{**}$ or $v \in V_{i-2}^*$ if $u \in W_i^*$ or $u \in W_i^{**}$, respectively.
By Conditions~\ref{ass:vi-anti-vi2wi1}, \ref{ass:wi-comp-wi1}, \ref{ass:vi-triv-wi} and~\ref{ass:wi-triv-wi2} we know that~$u$ is trivial to $V_{i-1}$, $V_{i+1}$, $W_{i-1}$, $W_{i+1}$, $W_{i-2}$ and~$W_{i+2}$, and that every vertex of~$V_i$ is trivial to~$W_i$.
Furthermore, $u$ is trivial to~$W_i^{**}\setminus\{u\}$ since~$W_i$ is independent.
Therefore $v \in V_{i-2} \cup V_{i+2}$.
Note that~$v$ is non-trivial to~$W_i$ (by choice of~$v$).
If $u \in W_i^*$ then~$u$ must be trivial to~$V_{i-2}$, since~$W_i'$ is empty.
Therefore $v \in V_{i+2}$.
Now if $v \in V_{i+2}^*$ then~$v$ is non-trivial to~$V_{i-2}$ or non-trivial to~$W_{i-1}$.
In the first case~$v$ is non-trivial to both~$V_{i-2}$ and~$W_i$, contradicting the fact that~$V_{i+2}'$ is empty.
In the second case~$v$ has a neighbour $w \in W_{i-1}$.
By Condition~\ref{ass:wi-comp-wi1}, $w$ is adjacent to~$u$, so $G[u,v,w]$ is a~$K_3$.
This contradiction implies that if $u \in W_i^*$ then $v \in V_{i+2}^{**}$, contradicting the choice of~$v$.
Now suppose $u \in W_i^{**}$.
Then~$u$ is trivial to~$V_{i+2}$, so $v \in V_{i-2}$.
If $v \in V_{i-2}^{**}$ then $v$ is trivial~$W_i$ (by definition of $V_{i-2}^{**}$).
Therefore if $u \in W_i^{**}$ then $v \in V_{i-2}^*$, contradicting the choice of~$v$.

\medskip
\noindent
We conclude that for every $i \in \{1,\ldots,5\}$ the vertex~$u$ is not in~$W_i$. Similarly, we may assume~$v \notin W_i$.
This means that the following case holds.

\thmcase{$u \in V_i$, $v \in V_j$ for some $i,j$.}
Then $i \neq j$, since $V_i$ is an independent set.
By Condition~\ref{ass:vi-anti-vi2wi1}, $j \notin \{i-2,i+2\}$.
Without loss of generality, we may therefore assume that $j=i+1$.
If $u \in V_i^{**}$ then~$u$ is trivial to~$V_{i+1}$, so we may assume that $u \in V_i^*$.
If $v \in V_{i+1}^*$ then~$v$ is non-trivial to~$V_{i+2}$, so by Condition~\ref{ass:vi-triv-v-1-or-v+1} $v$ is trivial to~$V_i$, contradicting the fact that~$uv$ is significant.
Therefore $v \in V_{i+1}^{**}$, contradicting the choice of~$v$.

\medskip
\noindent
We conclude that if for some~$i$, $u \in W_i^* \cup V_{i+2}^{**} \cup V_{i+1}^* \cup W_{i-2}^{**}$ and $v \notin W_i^* \cup V_{i+2}^{**} \cup V_{i+1}^* \cup W_{i-2}^{**}$ are adjacent then the edge~$uv$ is insignificant.
Hence we have proven Claim~\ref{clm:split-off2}.

\medskip
\noindent
Note that $W_i^*,V_{i+2}^{**}, V_{i+1}^*$ and~$W_{i-2}^{**}$ are independent sets.
By Condition~\ref{ass:vi-anti-vi2wi1}, $W_i^*$ is anti-complete to~$V_{i+1}^*$ and~$V_{i+2}^{**}$ is anti-complete to~$W_{i-2}^{**}$.
Therefore $W_i^* \cup V_{i+1}^*$ and $V_{i+2}^{**} \cup W_{i-2}^{**}$ are independent sets.
Thus $G[W_i^* \cup V_{i+2}^{**} \cup V_{i+1}^* \cup W_{i-2}^{**}]$ is an $S_{1,2,3}$-free bipartite graph, which has bounded clique-width by Lemma~\ref{lem:bipartite}.
Applying a bounded number of bipartite complementations (which we may do by Fact~\ref{fact:bip}), we can separate $G[W_i^* \cup V_{i+2}^{**} \cup V_{i+1}^* \cup W_{i-2}^{**}]$ from the rest of the graph.
We may thus assume that $W_i^* \cup V_{i+2}^{**} \cup V_{i+1}^* \cup W_{i-2}^{**}= \emptyset$.
Repeating this process for each~$i$ we obtain the empty graph.
This completes the proof.
\qedllncs
\end{proof}

\section{The Four Triangle-free Cases}\label{s-triangle}
We can now give the following result, which also implies the $(K_3,\allowbreak P_1+\nobreak 2P_2)$-free case.

\begin{theorem}\label{thm:gen2}
For $H \in \{P_1+\nobreak P_5,\allowbreak S_{1,2,2},\allowbreak P_1+\nobreak P_2+\nobreak P_3\}$, the class of $(K_3,H)$-free graphs has bounded clique-width. 
\end{theorem}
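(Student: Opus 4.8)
The plan is to handle each of the three graphs $H\in\{P_1+P_5,\;S_{1,2,2},\;P_1+P_2+P_3\}$ by combining Lemma~\ref{lem:noC5} with Lemma~\ref{lem:gen}. Since each of these graphs is an induced subgraph of $S_{1,2,3}$, the class of $(K_3,H)$-free graphs is a subclass of $(K_3,S_{1,2,3})$-free graphs, so both tools apply. Let $G$ be a connected $(K_3,H)$-free graph (boundedness of clique-width can be checked componentwise). If $G$ is $C_5$-free then Lemma~\ref{lem:noC5} finishes the case immediately, so I would assume $G$ contains an induced $C_5$, say on vertices $c_1,\dots,c_5$ in cyclic order. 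The whole point is then to exploit this $C_5$ to produce, for each of the three choices of $H$, a partition of $V(G)$ into ten independent sets $V_1,\dots,V_5,W_1,\dots,W_5$ satisfying Conditions~\ref{ass:vi-anti-vi2wi1}--\ref{ass:no-3P_1} of Lemma~\ref{lem:gen}; once that partition is exhibited, Lemma~\ref{lem:gen} gives the result.

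To build the partition I would classify every vertex $v\notin\{c_1,\dots,c_5\}$ by its neighbourhood on the $C_5$. Because $G$ is $K_3$-free, no vertex can be adjacent to two consecutive $c_i$, so the possible neighbourhoods on the cycle are: the empty set, a single $c_i$, or a pair $\{c_{i-1},c_{i+1}\}$ (two vertices at cyclic distance $2$). A first batch of claims should show that the $H$-freeness kills off the "bad" types: vertices with no neighbour on the cycle, and vertices with exactly one neighbour on the cycle, should be shown not to exist (or to be handled by false-twin removal via Lemma~\ref{lem:false-twin}), by planting the relevant induced $P_1+P_5$, $S_{1,2,2}$ or $P_1+P_2+P_3$ — exactly the style of argument used in the proof of Lemma~\ref{lem:noC5}. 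This leaves, besides the five cycle vertices themselves, only "distance-two" vertices; I put $c_i$ together with the distance-two vertices whose cycle-neighbourhood is $\{c_{i-1},c_{i+1}\}$ into a common set, and these sets will become the $V_i$'s and $W_i$'s. The split between $V_i$-type and $W_i$-type is dictated by how such a vertex interacts with its "partners": roughly, vertices behaving like independent copies of $c_i$ go into $V_i$, while vertices that together form the complementary (wheel-like) adjacency pattern go into $W_i$, so that Condition~\ref{ass:wi-comp-wi1} (the $W_i$'s forming a "blown-up $C_5$ of cliques") holds. One then verifies Conditions~\ref{ass:vi-anti-vi2wi1}--\ref{ass:no-3P_1} one at a time; each verification is a short argument of the form "if this failed, we would find an induced copy of $H$ (or a $K_3$), contradiction," again mirroring the bookkeeping already done in Lemma~\ref{lem:noC5} and Lemma~\ref{lem:gen}.

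I expect the main obstacle to be Condition~\ref{ass:vi-triv-v-1-or-v+1} (every vertex of $V_i$ is trivial to $V_{i-1}$ or to $V_{i+1}$) together with the $P_7$-freeness Condition~\ref{ass:p7-free} and the $3P_1$-type Condition~\ref{ass:no-3P_1}: these are the conditions where a vertex could a priori misbehave on both "sides" of the cycle at once, and ruling that out requires a careful case analysis of how two distance-two vertices in neighbouring bags $V_{i-1},V_i,V_{i+1}$ can see each other, using that $G$ has no $K_3$ (to force non-adjacencies) and no induced $H$ (to force adjacencies). Because the three target graphs $H$ differ, these case analyses have to be run three times, and the delicate point is that $P_1+P_2+P_3$, $P_1+P_5$ and $S_{1,2,2}$ each forbid slightly different five-or-six-vertex configurations, so the "forbidden configuration" invoked must be chosen appropriately in each branch. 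A clean way to organize this is to first prove a single structural lemma about the common skeleton (the $C_5$ with its distance-two attachments) that holds under $(K_3,S_{1,2,3})$-freeness, and only afterwards invoke the specific $H$ to eliminate the few remaining degenerate patterns; this keeps the three cases nearly uniform and isolates the genuinely $H$-dependent work into a small, explicit check. Once the partition is in place, Lemma~\ref{lem:gen} does the rest, and the $(K_3,P_1+2P_2)$-free case follows since $P_1+2P_2\subseteq_i P_1+P_2+P_3$ (indeed $\subseteq_i$ each of the three listed graphs), so the class of $(K_3,P_1+2P_2)$-free graphs is contained in one we have just handled.
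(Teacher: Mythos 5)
Your overall architecture matches the paper's: reduce to connected graphs, invoke Lemma~\ref{lem:noC5} to obtain an induced $C_5$, classify the remaining vertices by their neighbourhood on that cycle, and feed the resulting ten-set partition into Lemma~\ref{lem:gen}. However, there is a genuine gap in how you populate the sets $W_1,\ldots,W_5$. You propose to show that vertices with exactly one neighbour on the cycle do not exist (or can be removed as false twins), and then to obtain both the $V_i$'s and the $W_i$'s by splitting the ``distance-two'' vertices according to some complementary adjacency pattern. Neither half of this works. First, a vertex with exactly one cycle-neighbour cannot in general be eliminated: the graph consisting of a $C_5$ together with a single pendant vertex attached to $v_1$ is $(K_3,P_1+\nobreak P_5)$-free, $(K_3,S_{1,2,2})$-free and $(K_3,P_1+\nobreak P_2+\nobreak P_3)$-free (one checks there is no room for any of the three six-vertex forbidden graphs), and the pendant vertex is not a false twin of any other vertex, so Lemma~\ref{lem:false-twin} does not apply. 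In the intended partition these degree-one-on-the-cycle vertices attached at $v_i$ \emph{are} the set $W_i$; that is precisely what makes Condition~\ref{ass:wi-comp-wi1} provable, since the forbidden $H$ forces two such vertices in consecutive bags to be adjacent (e.g.\ for $S_{1,2,2}$, non-adjacent $v\in W_1$, $w\in W_2$ yield the induced $S_{1,2,2}$ on $v_1,v,v_2,w,v_5,v_4$, an argument that crucially uses that $v$ and $w$ each see only one cycle vertex). Second, if both $V_i$ and $W_i$ consisted of distance-two vertices, there is no mechanism forcing $W_i$ to be complete to $W_{i\pm 1}$: adjacency between distance-two vertices in consecutive bags is genuinely mixed (this is exactly why Condition~\ref{ass:vi-triv-v-1-or-v+1} is stated as a triviality condition rather than completeness), so no choice of split will satisfy Condition~\ref{ass:wi-comp-wi1}.

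A secondary point: the set $U$ of vertices with no neighbour on the cycle also cannot always be shown empty by exhibiting a forbidden subgraph. For $H=S_{1,2,2}$ it can, but for $H=P_1+\nobreak P_5$ one instead shows all of $U$ has a common neighbourhood and detaches it by a bipartite complementation (Fact~\ref{fact:bip}), and for $H=P_1+\nobreak P_2+\nobreak P_3$ one shows vertices of $U$ have bounded degree and deletes a closed neighbourhood to land in the $(K_3,P_2+\nobreak P_3)$-free class (Lemma~\ref{lem:diamond-P_2+P_3}). These are operations that preserve boundedness of clique-width rather than outright contradictions, and your proposal should allow for them. Your final remark that the $(K_3,P_1+\nobreak 2P_2)$-free case follows by containment is correct.
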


The proofs for all three cases are broadly similar.
We will prove the $H=P_1+\nobreak P_2+\nobreak P_3$ case separately, as it is a little more involved than the other two cases.

\subsection{Proof of the $H = P_1+\nobreak P_5$ and $H=S_{1,2,2}$ Cases.}

\begin{proof}
Let $H \in \{P_1+\nobreak P_5,\allowbreak S_{1,2,2}\}$ and consider a $(K_3,H)$-free graph~$G$. We may assume that~$G$ is connected.

By Lemma~\ref{lem:noC5}, we may assume that~$G$ contains an induced cycle on five vertices, say $C=v_1-v_2-\cdots-v_5-v_1$.
Again, we will interpret subscripts on vertices and vertex sets modulo~$5$.

Since~$G$ is $K_3$-free, no vertex~$v$ is adjacent to two consecutive vertices of the cycle.
Therefore every vertex of~$G$ has either zero, one or two neighbours on the cycle and if it has two neighbours then they must be non-consecutive vertices of the cycle.

We partition the vertices of~$G$ that are not on~$C$ as follows:
\begin{itemize}
\item $U$: the set of vertices adjacent to no vertices of~$C$,
\item $W_i$: the set of vertices whose unique neighbour in~$C$ is~$v_i$ and
\item $V_i$: the set of vertices adjacent to~$v_{i-1}$ and~$v_{i+1}$.
\end{itemize}
In the remainder of the proof we will show how to modify the graph using operations that preserve boundedness of clique-width, such that in the resulting graph the set~$U$ is empty and 
the partition $V_1,\ldots,V_5,W_1,\ldots,W_5$ satisfies
Conditions~\ref{ass:vi-anti-vi2wi1}--\ref{ass:no-3P_1} 
of Lemma~\ref{lem:gen}. 
In order to do this we prove a number of claims.

The first two claims follow immediately from the fact that~$G$ is $K_3$-free.

\clm{\label{clm:1-vi-wi-indep} For all~$i$, $V_i$ and~$W_i$ are independent sets.}
\smallclm{\label{clm:1-vi-non-nbrs} For all~$i$, $V_i$ is anti-complete to $V_{i-2} \cup V_{i+2} \cup W_{i-1} \cup W_{i+1}$.}\\
\smallclm{\label{clm:1-u-empty} We may assume that~$U$ is empty.}
We prove Claim~\ref{clm:1-u-empty} as follows.
First consider the case where $H = S_{1,2,2}$ and suppose, for contradiction, that~$U$ is not empty.
Since~$G$ is connected there must be a vertex $u \in U$ that is adjacent to a vertex $v \notin U$ that has a neighbour on the cycle~$C$.
Without loss of generality, we may assume that~$v \in V_1 \cup W_2$, in which case~$v$ is adjacent to~$v_2$ and non-adjacent to $v_1,v_3$ and~$v_4$.
Now $G[v_2,v_1,v_3,v_4,v,u]$ is an~$S_{1,2,2}$.
This contradiction means that $U = \emptyset$ if $H = S_{1,2,2}$.

Now consider the case where $H=P_1+\nobreak P_5$ and suppose that~$U$ is non-empty.
Suppose, for contradiction, that there are two vertices $u,u' \in U$ that do not have the same neighbourhood in some set~$V_i$ or~$W_i$.
Without loss of generality, assume $v \in V_1 \cup W_2$ is adjacent to~$u$, but not~$u'$.
Note that~$v$ is adjacent to~$v_2$, but non-adjacent to $v_1,v_3$ and~$v_4$.
Then $G[v_4,u',u,v,v_2,v_1]$ is a $P_1+\nobreak P_5$ if~$u$ and~$u'$ are adjacent and
$G[u',u,v,v_2,v_3,v_4]$ is a $P_1+\nobreak P_5$ if they are not.
This contradiction means that every vertex in~$U$ has the same neighbourhood in every set~$V_i$ and every set~$W_i$.
Since~$G$ is connected there must be a vertex~$v$ in some~$V_i$ or~$W_i$ that is adjacent to every vertex of~$U$.
Since~$G$ is $K_3$-free, $U$ must therefore be an independent set.
Applying a bipartite complementation (which we may do by Fact~\ref{fact:bip}) between~$U$ and the vertices adjacent to the vertices of~$U$ disconnects~$U$ from the rest of the graph.
Since~$G[U]$ is independent, it has clique-width at most~$1$.
We may therefore assume that~$U$ is empty.

\clm{\label{clm:1-Wi-comp-Wi1} For all~$i$, $W_i$ is complete to $W_{i-1} \cup W_{i+1}$.}
Suppose, for contradiction, that $v \in W_1$ has a non-neighbour $w \in W_2$.
Then $G[w,\allowbreak v,\allowbreak v_1,\allowbreak v_5,\allowbreak v_4,v_3]$ is a $P_1+\nobreak P_5$ and $G[v_1,v,v_2,w,v_5,v_4]$ is an~$S_{1,2,2}$.
This contradiction proves the claim.

\smallskip
See \figurename~\ref{fig:K_3,2P_2+P_1} for an illustration of the graph $G$.

\begin{figure}[h]
\begin{center}
\begin{tikzpicture}
\coordinate (v1) at (90:2) ;
\coordinate (v2) at (90-72:2) ;
\coordinate (v3) at (90-144:2) ;
\coordinate (v4) at (90-216:2) ;
\coordinate (v5) at (90-288:2) ;
\coordinate (V1) at (90:1) ;
\coordinate (V2) at (90-72:1) ;
\coordinate (V3) at (90-144:1) ;
\coordinate (V4) at (90-216:1) ;
\coordinate (V5) at (90-288:1) ;
\coordinate (W1) at (90:3) ;
\coordinate (W2) at (90-72:3) ;
\coordinate (W3) at (90-144:3) ;
\coordinate (W4) at (90-216:3) ;
\coordinate (W5) at (90-288:3) ;

\tikzset{EdgeStyle/.style   = {line width     = 0.8pt, black}}

\draw (v1) -- (v2);
\draw (v2) -- (v3);
\draw (v3) -- (v4);
\draw (v4) -- (v5);
\draw (v1) -- (v5);

\draw (v1) -- (W1);
\draw (v2) -- (W2);
\draw (v3) -- (W3);
\draw (v4) -- (W4);
\draw (v5) -- (W5);
%

\draw (W1) -- (W2);
\draw (W2) -- (W3);
\draw (W3) -- (W4);
\draw (W4) -- (W5);
\draw (W1) -- (W5);

\draw (v1) -- (V2);
\draw (v2) -- (V3);
\draw (v3) -- (V4);
\draw (v4) -- (V5);
\draw (v5) -- (V1);
\draw (v1) -- (V5);
\draw (v2) -- (V1);
\draw (v3) -- (V2);
\draw (v4) -- (V3);
\draw (v5) -- (V4);

\draw [fill=black] (v1) circle (1.5pt) ;
\draw [fill=black] (v2) circle (1.5pt) ;
\draw [fill=black] (v3) circle (1.5pt) ;
\draw [fill=black] (v4) circle (1.5pt) ;
\draw [fill=black] (v5) circle (1.5pt) ;
\draw [fill=white] (V1) circle (3pt) ;
\draw [fill=white] (V2) circle (3pt) ;
\draw [fill=white] (V3) circle (3pt) ;
\draw [fill=white] (V4) circle (3pt) ;
\draw [fill=white] (V5) circle (3pt) ;
\draw [fill=white] (W1) circle (3pt) ;
\draw [fill=white] (W2) circle (3pt) ;
\draw [fill=white] (W3) circle (3pt) ;
\draw [fill=white] (W4) circle (3pt) ;
\draw [fill=white] (W5) circle (3pt) ;

\draw (W1) node [label={[label distance=-2pt]90:$W_1$}] {} ;
\draw (W2) node [label={[label distance=-4pt]90-72:$W_2$}] {} ;
\draw (W3) node [label={[label distance=-4pt]90-144:$W_3$}] {} ;
\draw (W4) node [label={[label distance=-4pt]90-216:$W_4$}] {} ;
\draw (W5) node [label={[label distance=-4pt]90-288:$W_5$}] {} ;

\draw (v1) node [label={[label distance=-4pt]54:$v_1$}] {} ;
\draw (v2) node [label={[label distance=-4pt]54-72:$v_2$}] {} ;
\draw (v3) node [label={[label distance=-2pt]54-144-3:$v_3$}] {} ;
\draw (v4) node [label={[label distance=-3pt]54-216-20:$v_4$}] {} ;
\draw (v5) node [label={[label distance=-2pt]54-288-34:$v_5$}] {} ;

\draw (V1) node [label={[label distance=-2pt]-90:$V_1$}] {} ;
\draw (V2) node [label={[label distance=-4pt]-90-72:$V_2$}] {} ;
\draw (V3) node [label={[label distance=-4pt]-90-144:$V_3$}] {} ;
\draw (V4) node [label={[label distance=-4pt]-90-216:$V_4$}] {} ;
\draw (V5) node [label={[label distance=-4pt]-90-288:$V_5$}] {} ;
\end{tikzpicture}

\caption{The graph~$G$. The black points are the vertices of the cycle~$C$. The circles are (possibly empty) independent sets of vertices and the lines are complete bipartite graphs. 
Note that~$G$ may contain additional edges that are not represented in this figure.\label{fig:K_3,2P_2+P_1}}
\end{center}
\end{figure}
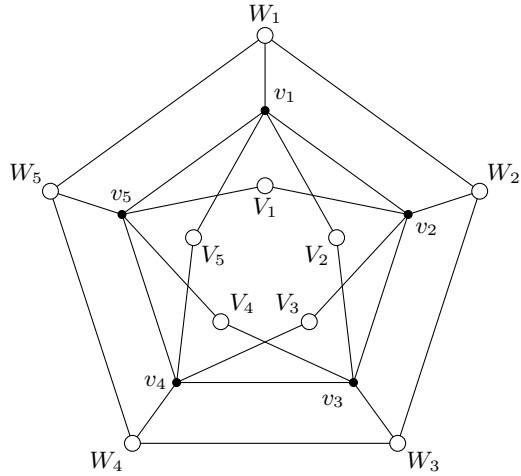

\clm{For all~$i$, every vertex of~$V_i$ is trivial to at least one of the sets~$V_{i+1}$ and~$V_{i-1}$.}
Suppose, for contradiction that the claim is false.
Without loss of generality, there is a vertex $v \in V_2$ with non-neighbours $u \in V_1$ and $w \in V_3$.
By Claim~\ref{clm:1-vi-non-nbrs}, $u$ and~$w$ must be non-adjacent.
Then $G[v_5,u,v_1,v,v_4,w]$ is an~$S_{1,2,2}$ and $G[u,v_1,v,v_3,v_4,w]$ is a $P_1+\nobreak P_5$. This contradiction completes the proof of the claim.

\clm{\label{clm:1-vi-triv-wi} For all~$i$, every vertex in~$V_i$ is trivial to~$W_i$.}
Suppose, for contradiction, that the claim is false.
Without loss of generality, we may assume there are vertices $v \in V_1$ and $w,w' \in W_1$ such that~$v$ is adjacent to~$w$, but not to~$w'$.
Then $G[v_2,v,v_1,w',v_3,v_4]$ is an~$S_{1,2,2}$ and $G[w',w,v,v_2,v_3,v_4]$ is a $P_1+\nobreak P_5$.
This contradiction completes the proof of the claim.

\clm{\label{clm:1-Wi-triv-Wi2} For all~$i$, $W_i$ is trivial to~$W_{i-2}$ and to~$W_{i+2}$.}
Suppose, for contradiction, that this does not hold.
Without loss of generality, assume $v \in W_1$ is adjacent to $w \in W_3$ and non-adjacent to $w' \in W_3$.
Then $G[v_1,v_2,v_5,v_4,v,w]$ is an~$S_{1,2,2}$ and $G[w',w,v,v_1,v_5,v_4]$ is a $P_1+\nobreak P_5$.
This contradiction proves the claim.

\clm{For all $i, j$, the graphs induced by $V_i \cup V_j$ and $V_i \cup W_j$ are $P_7$-free.}
Note that $P_1+\nobreak P_5$ is an induced subgraph of~$P_7$.
Therefore if $H=P_1+\nobreak P_5$ then the claim follows immediately.
Now suppose $H=S_{1,2,2}$.
Without loss of generality, we may assume $i=1$.
Suppose that $G[V_1 \cup V_j]$ or $G[V_1 \cup W_j]$ contains an induced~$P_7$, for some $i,j$.
By Claims \ref{clm:1-vi-wi-indep},~\ref{clm:1-vi-non-nbrs} and~\ref{clm:1-vi-triv-wi} and symmetry, we may assume that $G[V_1 \cup V_2]$ or $G[V_1 \cup W_3]$ contains this~$P_7$.
This~$P_7$ contains an induced subgraph isomorphic to~$2P_2$, say on vertices $v,v',w,w'$.
Then $G[v_5,v_4,v,v',w,w']$ is an~$S_{1,2,2}$. This contradiction completes the proof of the claim.

\clm{\label{clm:1-no-3P1} For all~$i$, there are no three vertices~$v \in V_i$, $w \in V_{i+1}$ and $x \in W_{i+3}$ such that $v, w$ and~$x$ are pairwise non-adjacent.}
Suppose, for contradiction that such pairwise non-adjacent vertices exist, say with $v \in\nobreak V_1,\allowbreak w \in V_2$ and $x \in W_4$.
Then $G[v_4,x,v_3,w,v_5,v]$ is an~$S_{1,2,2}$ and $G[x,v_3,w,v_1,v_5,v]$ is a $P_1+\nobreak P_5$.
This contradiction completes the proof of the claim.

\medskip
We now consider the graph obtained~$G'$ from~$G$ by removing the five vertices of~$C$.
Claims~\ref{clm:1-vi-wi-indep} and~\ref{clm:1-u-empty} show that we may assume $V_1,\ldots,V_5,W_1,\ldots,W_5$ are independent sets that form a partition of the vertex set of~$G'$.
Claims~\ref{clm:1-vi-non-nbrs} and~\ref{clm:1-Wi-comp-Wi1}--\ref{clm:1-no-3P1} correspond to the seven conditions of Lemma~\ref{lem:gen}.
Therefore~$G'$ has bounded clique-width.
By Fact~\ref{fact:del-vert}, $G$ also has bounded clique-width. This completes the proof.
\qedllncs
\end{proof}

\subsection{Proof of the $H=P_1+\nobreak P_2+\nobreak P_3$ Case.}

\begin{proof}
\setcounter{ctrclaim}{0}
Consider a $(K_3,P_1+\nobreak P_2+\nobreak P_3)$-free graph~$G$. We may assume that~$G$ is connected.

By Lemma~\ref{lem:noC5}, we may assume that~$G$ contains an induced cycle on five vertices, say $C=v_1-v_2-\cdots-v_5-v_1$.
Again, we will interpret subscripts on vertices and vertex sets modulo~$5$.

Since~$G$ is $K_3$-free, no vertex~$v$ is adjacent to two consecutive vertices of~$C$.
Therefore every vertex of~$G$ has either zero, one or two neighbours on~$C$ and if it has two neighbours then they must be non-consecutive vertices of~$C$.

We partition the vertices of~$G$ that are not on~$C$ as follows:
\begin{itemize}
\item $U$: the set of vertices adjacent to no vertices of~$C$,
\item $W_i$: the set of vertices whose unique neighbour in~$C$ is~$v_i$ and
\item $V_i$: the set of vertices adjacent to~$v_{i-1}$ and~$v_{i+1}$.
\end{itemize}
In the remainder of the proof we will show how to modify the graph using operations that preserve boundedness of clique-width, such that in the resulting graph the set~$U$ is empty and the partition $V_1,\ldots,V_5,W_1,\ldots,W_5$ satisfies Conditions~\ref{ass:vi-anti-vi2wi1}--\ref{ass:no-3P_1} of Lemma~\ref{lem:gen}.

The first two claims follow immediately from the fact that~$G$ is $K_3$-free.

\clm{\label{clm:2-vi-wi-indep} For all~$i$, $V_i$ and~$W_i$ are independent sets.}
\smallclm{\label{clm:2-vi-non-nbrs} For all~$i$, $V_i$ is anti-complete to $V_{i-2} \cup V_{i+2} \cup W_{i-1} \cup W_{i+1}$.}\\
\smallclm{\label{clm:2-u-empty} We may assume that~$U$ is empty.}
In order to proof Claim~\ref{clm:2-u-empty}, we first suppose that there are two adjacent vertices $u,u' \in U$.
Since~$G$ is connected, we may assume without loss of generality that~$u$ is adjacent to some vertex $v \in V_1 \cup W_2$.
Then~$u'$ must be non-adjacent to~$v$, otherwise $G[u,u',v]$ would be a~$K_3$.
Note that~$v$ is adjacent to~$v_2$, but not to $v_1,v_3$ or~$v_4$.
Now $G[v_1,v_3,v_4,u',u,v]$ is a $P_1+\nobreak P_2+\nobreak P_3$.
This contradiction implies that~$U$ must be an independent set. 

\begin{sloppypar}
Now suppose, for contradiction, that a vertex $u \in U$ has two neighbours in some set $V_i \cup W_{i+1}$.
Without loss of generality assume that~$u$ is adjacent to $v,v' \in V_1 \cup\nobreak W_2$.
Note that~$v$ and~$v'$ are adjacent to~$v_2$, but not adjacent to $v_1,v_3$ and~$v_4$.
Now $G[v_1,v_3,v_4,v,u,v']$ is a $P_1+\nobreak P_2+\nobreak P_3$.
This contradiction implies that every vertex of~$U$ has at most one neighbour in $V_i \cup W_{i+1}$ for each~$i$.
In particular, this means that every vertex of~$U$ has degree at most~$5$.
Therefore, if $u \in U$ then we delete $\{u\}\cup N(u)$ (a set of at most~$6$ vertices). This gives us a $(K_3,P_2+\nobreak P_3)$-free graph, which has bounded clique-width by Lemma~\ref{lem:diamond-P_2+P_3}.
By Fact~\ref{fact:del-vert}, we may therefore assume that~$U$ is empty, 
that is, we have proven Claim~\ref{clm:2-u-empty}.
\end{sloppypar}

\medskip
\noindent
We say that a set~$V_i$ or~$W_i$ is {\em large} if it contains at least two vertices and {\em small} if it contains exactly one vertex.
If any set~$V_i$ is not large then by Fact~\ref{fact:del-vert} we may assume that it is empty.
(Later in the proof, we may delete vertices from some sets~$V_i$ or~$W_i$. In doing so, some sets that were previously large may become small.
If this happens, we will simply repeat the argument.
We will only do this a bounded number of times, so boundedness of clique-width will be preserved.)

\clm{\label{clm:2-Wi-comp-Wi1} For all~$i$, $W_i$ is complete to $W_{i-1} \cup W_{i+1}$.}
Suppose, for contradiction, that $v \in W_1$ has a non-neighbour $w \in W_2$.
Since~$W_2$ is non-empty, it must be large, so it must contain a vertex~$w'$ distinct from~$w$.
Then $G[w,v_3,v_4,v_1,v,w']$ is a $P_1+\nobreak P_2+\nobreak P_3$ if~$v$ and~$w'$ are adjacent and $G[v,v_4,v_5,w,v_2,w']$ is a $P_1+\nobreak P_2+\nobreak P_3$ if they are not.
This contradiction completes the proof of Claim~\ref{clm:2-Wi-comp-Wi1}.

\clm{\label{clm:2-vi-triv-vi+1-or-vi-1}For all~$i$, every vertex of~$V_i$ is trivial to at least one of the sets~$V_{i+1}$ and~$V_{i-1}$.}
Suppose, for contradiction that the claim is false.
Without loss of generality, there is a vertex $v \in V_2$ with non-neighbours $u \in V_1$ and $w \in V_3$ and neighbour $u' \in V_1$.
By Claim~\ref{clm:2-vi-non-nbrs}, $w$ and must be non-adjacent to both~$u$ and~$u'$.
Then $G[u,v_4,w,v_1,v,u']$ is a $P_1+\nobreak P_2+\nobreak P_3$.
This contradiction completes the proof of Claim~\ref{clm:2-vi-triv-vi+1-or-vi-1}.

\clm{\label{clm:2-vi-triv-wi}For all~$i$, every vertex in~$V_i$ is trivial to~$W_i$.}
In fact we will prove a stronger statement, namely that for all~$i$, $V_i$ is trivial to~$W_i$.
Suppose, for contradiction, that this is not the case.
Without loss of generality, assume that~$V_1$ is not trivial to~$W_1$.
First suppose that there are vertices $w \in W_1$ and $v,v' \in V_1$ such that~$w$ is adjacent to~$v$, but not to~$v'$.
Then $G[v',v_3,v_4,v_1,w,v]$ is a $P_1+\nobreak P_2+\nobreak P_3$.
Therefore every vertex in~$W_1$ must be trivial to~$V_1$.
Since we assumed that~$V_1$ is not trivial to~$W_1$, there must therefore be vertices $v \in V_1$ and $w,w' \in W_1$ such that~$v$ is adjacent to~$w$, but not to~$w'$.
Since~$V_1$ is non-empty, it must be large, so there must be another vertex $v' \in V_1$.
Since every vertex of~$W_1$ is trivial to~$V_1$, $v'$ must be adjacent to~$w$ and non-adjacent to~$w'$. 
Then $G[w',v_3,v_4,v,w,v']$ is a $P_1+\nobreak P_2+\nobreak P_3$.
This contradiction completes the proof of Claim~\ref{clm:2-vi-triv-wi}.

\clm{\label{clm:2-Wi-anti-Wi2} We may assume that for all~$i$, $W_i$ is anti-complete to~$W_{i-2}$ and to~$W_{i+2}$.}
We start by showing that the edges between~$W_i$ and~$W_{i+2}$ form a matching.
Indeed, suppose for contradiction that there is a vertex $v \in W_1$ with two neighbours $w,w' \in W_3$.
Then $G[v_2,v_4,v_5,w,v,w']$ is a $P_1+\nobreak P_2+\nobreak P_3$, a contradiction.
By symmetry, no vertex of~$W_3$ has two neighbours in~$W_1$.
We conclude that the edges between~$W_i$ and~$W_{i+2}$ form a matching.

Let~$W_1'$ be the set of vertices in~$W_1$ that have a neighbour in~$W_3$. 
Similarly, let~$W_3''$ be the set of vertices in~$W_3$ that have a neighbour in~$W_1$.
Note that $|W_1'| = |W_3''|$ since the edges between~$W_1'$ and~$W_3''$ form a perfect matching.
We will show that every vertex of~$G \setminus (W_1' \cup W_3'')$ is trivial to~$W_1'$ and~$W_3''$.
This follows immediately if $|W_1'| = |W_3''| = 1$.

Assume $|W_1'| = |W_3''| \geq 2$.
Suppose there is a vertex $w \in V(G) \setminus (W_1' \cup W_3'')$ that is non-trivial to~$W_1'$.
Then we may choose $u,u' \in W_1'$ and $v,v' \in W_3''$ such that~$u$ is adjacent to~$v$ and~$w$, but non-adjacent to~$v'$ while~$u'$ is adjacent to~$v'$, but non-adjacent to~$v$ and~$w$.
Since~$w$ is non-trivial to~$W_1$, it cannot be in~$W_1$ (by Claim~\ref{clm:2-vi-wi-indep}), $V_2 \cup V_5$ (by Claim~\ref{clm:2-vi-non-nbrs}), $W_2 \cup W_5$ (by Claim~\ref{clm:2-Wi-comp-Wi1}), $V_1$ (by Claim~\ref{clm:2-vi-triv-wi}) or~$W_3$ (since we assumed $w \notin W_3''$).
Furthermore, $w \notin C$ by definition of~$W_1$.
Therefore $w \in V_4 \cup W_4 \cup V_3$.
By Claims~\ref{clm:2-vi-non-nbrs}, \ref{clm:2-Wi-comp-Wi1} and~\ref{clm:2-vi-triv-wi} respectively, 
we conclude that $w$ is trivial to~$W_3$.
Since~$u$ is adjacent to~$v$ and~$w$, it follows that~$w$ must be non-adjacent to~$v$, otherwise $G[u,v,w]$ would be a~$K_3$, a contradiction.
Therefore~$w$ must be anti-complete to~$W_3$.
If $w \in V_3 \cup W_4$, let $z=v_5$ and otherwise (if $w \in V_4$) let $z=v_4$.
Then~$z$ is non-adjacent to $u,u',v,v'$ and~$w$.
Now $G[z,u',v',v,u,w]$ is a $P_1+\nobreak P_2+\nobreak P_3$, a contradiction.
Therefore every vertex in $V(G) \setminus (W_1' \cup W_3'')$ is trivial to~$W_1'$.
By symmetry, 
every vertex in 
$V(G) \setminus (W_1' \cup W_3'')$ is trivial to~$W_3''$.

Therefore, by applying a bipartite complementation (which we may do by Fact~\ref{fact:bip}) between~$W_1'$ and the vertices in $V(G) \setminus W_3''$ that are complete to~$W_1'$ and another bipartite complementation between~$W_3''$
and the vertices in $V(G) \setminus W_1'$ that are complete to~$W_3''$, we separate $G[W_1' \cup W_3'']$ from the rest of the graph.
Since $G[W_1' \cup W_3'']$ is a perfect matching, it has clique-width at most~$2$.
We may therefore assume that $W_1' \cup W_3''$ is empty i.e. that~$W_1$ is anti-complete to~$W_3$.
Repeating this argument for each~$i \in \{1,\ldots,5\}$, we show that we may assume that~$W_i$ is anti-complete to~$W_{i-2}$ for every~$i$.
This completes the proof of Claim~\ref{clm:2-Wi-anti-Wi2}.

\medskip
\noindent
Note that when applying Claim~\ref{clm:2-Wi-anti-Wi2} we may delete vertices in some sets~$W_i$, which may cause some large sets to become small.
In this case, as stated earlier, we may simply delete the small sets as before.
Thus we may assume that every set~$W_i$ is either large or empty.

\clm{\label{clm:2-p7-free}For all $i, j$, the graphs induced by $V_i \cup V_j$ and $V_i \cup W_j$ are $P_7$-free.}
Suppose, for contradiction, that the claim is false.
Then there is an~$i$ and a~$j$ such that $G[V_i \cup V_j]$ or $G[V_i \cup W_j]$ contains an induced~$P_7$, say on vertices $u_1,\ldots,u_7$.
There must be a vertex $v_k \in C$ that is non-adjacent to every vertex of $V_i \cup V_j$ or $V_i \cup W_j$, respectively (since every vertex not in~$C$ has at most two neighbours in~$C$).
Then $G[v_k,u_1,u_2,u_4,u_5,u_6]$ is a $P_1+\nobreak P_2+\nobreak P_3$, a contradiction.
This completes the proof of Claim~\ref{clm:2-p7-free}.

\clm{\label{clm:2-no-3P1} For all~$i$, if there are vertices~$v \in V_i$, $w \in V_{i+1}$ and $x \in W_{i+3}$ such that $v, w$ and~$x$ are pairwise non-adjacent then~$G$ has bounded clique-width.}
Suppose that such pairwise non-adjacent vertices exist, say with $v \in V_1, w \in V_2$ and $x \in W_4$.
We start by showing that $V_3 \cup V_4 \cup V_5 \cup W_1 \cup W_2 \cup W_3 \cup W_5$ is empty.

First suppose there is a vertex $y \in V_3$.
Then~$y$ is non-adjacent to~$v$ and~$x$ by Claim~\ref{clm:2-vi-non-nbrs}.
Then $G[x,v,v_5,v_3,w,y]$ or $G[v,v_1,w,x,v_4,y]$ is a $P_1+\nobreak P_2+\nobreak P_3$ if~$y$ is adjacent or non-adjacent to~$w$, respectively.
This contradiction implies that~$V_3$ is empty.
By symmetry~$V_5$ is also empty.

Next, suppose there is a vertex $y \in V_4$.
Then~$y$ is non-adjacent to~$v$ and~$w$ by Claim~\ref{clm:2-vi-non-nbrs}.
Then $G[v,v_1,w,v_4,x,y]$ or $G[y,v_4,x,v_2,v_1,w]$ is a $P_1+\nobreak P_2+\nobreak P_3$ if~$y$ is adjacent or non-adjacent to~$x$, respectively.
This contradiction implies that~$V_4$ is empty.

Next, suppose there is a vertex $y \in W_1$.
Then~$y$ is non-adjacent to~$w$ and~$x$ by Claims~\ref{clm:2-vi-non-nbrs} and~\ref{clm:2-Wi-anti-Wi2}, respectively.
Then $G[w,x,v_4,v_2,v,y]$ or $G[v,x,v_4,w,v_1,y]$ is a $P_1+\nobreak P_2+\nobreak P_3$ if~$y$ is adjacent or non-adjacent to~$v$, respectively.
This contradiction implies that~$W_1$ is empty.
By symmetry~$W_2$ is also empty.

Finally, suppose that~$W_3$ is not empty.
Then~$W_3$ must be large, so it contains two vertices, say~$y$ and~$y'$.
Then~$y$ and~$y'$ are each non-adjacent to~$w$ and adjacent to~$x$ by Claims~\ref{clm:2-vi-non-nbrs} and~\ref{clm:2-Wi-comp-Wi1}, respectively.
If~$y$ is non-adjacent to~$v$ then $G[v,v_1,w,v_4,x,y]$ would be a $P_1+\nobreak P_2+\nobreak P_3$, a contradiction.
Therefore~$y$ is adjacent to~$v$, and similarly~$y'$ is adjacent to~$v$.
Now $G[v_4,v_1,w,y,v,y']$ is a $P_1+\nobreak P_2+\nobreak P_3$.
This contradiction implies that~$W_3$ is empty.
By symmetry, we may assume that~$W_5$ is also empty.

The above means that $V_3 \cup V_4 \cup V_5 \cup W_1 \cup W_2 \cup W_3 \cup W_5$ is indeed empty, so $V(G)=V_1 \cup V_2 \cup W_4 \cup V(C)$.

Let~$V_1'$ and~$V_1''$ be the set of vertices in~$V_1$ that are anti-complete or complete to $\{w,x\}$, respectively.
Let~$V_2'$ and~$V_2''$ be the set of vertices in~$V_2$ that are anti-complete or complete to $\{v,x\}$, respectively.
Let~$W_4'$ and~$W_4''$ be the set of vertices in~$W_4$ that are anti-complete or complete to $\{v,w\}$, respectively.
Observe that $v \in V_1', w \in V_2'$ and $x \in W_4'$.
We will show that 
$V_1', V_1'', V_2', V_2'', W_4'$ and $W_4''$
form a partition of $V(G) \setminus V(C)$.

Suppose, for contradiction, that there is a vertex $v' \in V_1$ with exactly one neighbour in $\{w,x\}$.
Then $G[v,v_4,x,v',w,v_1]$ or $G[v,v_1,w,v_4,x,v']$ is a $P_1+\nobreak P_2+\nobreak P_3$ if this neighbour is~$w$ or~$x$, respectively.
Therefore every vertex of~$V_1$ is in $V_1' \cup V_1''$.
Similarly, every vertex of~$V_2$ is in $V_2' \cup V_2''$.

Suppose, for contradiction, that there is a vertex $x' \in W_4$ with exactly one neighbour in $\{v,w\}$.
Without loss of generality, suppose that~$x'$ is adjacent to~$v$, but not to~$w$.
Then $G[x,w,v_3,v_5,v,x']$ is a $P_1+\nobreak P_2+\nobreak P_3$.
Therefore every vertex of~$W_4$ is in $W_4' \cup W_4''$.
Thus every vertex of $V(G) \setminus V(C)$ is in $V_1' \cup V_1'' \cup V_2' \cup V_2'' \cup V_4' \cup V_4''$.

Observe that the remarks made above for $v,w$ and~$x$ also hold if one of these is replaced by a vertex of $V_1',V_2'$ or~$W_4'$, respectively.
Indeed, suppose $v' \in V_1' \setminus \{v\}$, then every vertex of~$W_4$ must be either complete or anti-complete to $\{v',w\}$.
Since the vertices of~$W_4'$ are non-adjacent to~$w$, but the vertices of~$W_4''$ are adjacent to~$w$, it follows  that~$W_4'$ is anti-complete to $\{v',w\}$ and that~$W_4''$ is complete to $\{v',w\}$.
Therefore~$W_4'$ is anti-complete to~$V_1'$, and~$W_4''$ is complete to~$V_1'$.
Since~$G$ is $K_3$-free and every vertex of $V_1'' \cup V_2''$ is adjacent to~$x$, it follows that~$V_1''$ is anti-complete to~$V_2''$.
Similarly, we conclude that $V_1',V_2'$ and~$W_4'$ are pairwise anti-complete, $V_1'',V_2''$ and~$W_4''$ are pairwise anti-complete and for every pair of sets $S \in \{V_1',V_2',W_4'\}$ and $T \in \{V_1'',V_2'',W_4''\}$ such that $(S,T) \notin \{(V_1',V_1''), (V_2',V_2''),(W_4',W_4'')\}$, $S$ and~$T$ are complete to each-other.

Now if we delete the vertices of~$C$ (which we may do by Fact~\ref{fact:del-vert}) and apply bipartite complementations
between $V_1'\& V_2''$, $V_1'\& W_4''$,
$V_2' \& V_1''$, $V_2' \& W_4''$, 
$W_4' \& V_1''$ and $W_4' \& V_2''$,
 we obtain an edgeless graph, which therefore has clique-width at most~$1$.
By Fact~\ref{fact:bip}, it follows that~$G$ has bounded clique-width.
This completes the proof of Claim~\ref{clm:2-no-3P1}.

\medskip
\noindent
We now consider the graph~$G'$ obtained from~$G$ by removing the five vertices of~$C$.
Claims~\ref{clm:2-vi-wi-indep} and~\ref{clm:2-u-empty} show that we may assume $V_1,\ldots,V_5,W_1,\ldots,W_5$ are independent sets that form a partition of the vertex set of~$G'$.
Claims~\ref{clm:2-vi-non-nbrs} and~\ref{clm:2-Wi-comp-Wi1}--\ref{clm:2-no-3P1} correspond to the seven conditions of Lemma~\ref{lem:gen}.
Therefore~$G'$ has bounded clique-width.
By Fact~\ref{fact:del-vert}, $G$ also has bounded clique-width. This completes the proof.
\qedllncs
\end{proof}

\section{The Diamond-free Case}\label{s-diamond}

In this section, we prove that $(\mbox{diamond},P_1+\nobreak 2P_2)$-free graphs have bounded clique-width. In order to do this, we first need to prove the following two lemmas.

\newpage
\begin{lemma}\label{lem:discon}
The class of disconnected $(\mbox{diamond},P_1+\nobreak 2P_2)$-free graphs has bounded clique-width.
\end{lemma}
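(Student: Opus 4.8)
The plan is to reduce the lemma to the already-established boundedness of clique-width of $(\mbox{diamond},P_2+\nobreak P_3)$-free graphs (Lemma~\ref{lem:diamond-P_2+P_3}) by arguing component by component. Let~$G$ be a disconnected $(\mbox{diamond},P_1+\nobreak 2P_2)$-free graph and let~$C$ be an arbitrary connected component of~$G$. Since~$G$ is disconnected, there is a vertex $v\in V(G)\setminus V(C)$, and $v$ is anti-complete to~$V(C)$. I claim~$C$ is $2P_2$-free: if $C[S]$ were isomorphic to~$2P_2$ for some $S\subseteq V(C)$, then $G[S\cup\{v\}]$ would be isomorphic to $P_1+\nobreak 2P_2$, contradicting the hypothesis. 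As~$C$ is also diamond-free (being an induced subgraph of~$G$), every connected component of~$G$ is $(\mbox{diamond},2P_2)$-free.

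Next I would use the fact that~$2P_2$ is an induced subgraph of $P_2+\nobreak P_3$ (delete a leaf of the $P_3$). Hence every $2P_2$-free graph is $(P_2+\nobreak P_3)$-free, so in particular every component~$C$ of~$G$ is $(\mbox{diamond},P_2+\nobreak P_3)$-free. By Lemma~\ref{lem:diamond-P_2+P_3} there is an absolute constant~$c$, independent of~$G$, such that $\cw(C)\le c$ for every such component.

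Finally, since clique-width does not increase on passing to an induced subgraph, and since a $k$-expression for a disjoint union is obtained by applying the disjoint-union operation to $k$-expressions of the summands, we have $\cw(G)=\max_C\cw(C)\le c$, where the maximum ranges over the finitely many components~$C$ of~$G$ (recall all graphs considered are finite). Therefore the class of disconnected $(\mbox{diamond},P_1+\nobreak 2P_2)$-free graphs has clique-width bounded by~$c$.

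I do not expect any serious obstacle: the whole argument is a short reduction. The only two points that need care are (i) exploiting disconnectedness to produce the extra vertex~$v$, which is precisely what upgrades ``no induced $P_1+\nobreak 2P_2$'' to ``no induced $2P_2$ in each component'', and (ii) recording that disjoint union preserves boundedness of clique-width, so that the uniform per-component bound lifts to all of~$G$.
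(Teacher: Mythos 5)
Your proposal is correct and follows essentially the same route as the paper: use disconnectedness to supply an isolated-from-$C$ vertex that upgrades $(P_1+2P_2)$-freeness to $2P_2$-freeness of each component, then invoke Lemma~\ref{lem:diamond-P_2+P_3} (via $2P_2\ssi P_2+P_3$) and the fact that disjoint union does not increase clique-width. The paper's proof is just a more compressed version of exactly this argument.
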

\begin{proof}
If~$G$ is a disconnected $(\mbox{diamond},P_1+\nobreak 2P_2)$-free graph then it contains at least two components.
Therefore every component of~$G$ must be $(\mbox{diamond},2P_2)$-free and thus has bounded clique-width by Lemma~\ref{lem:diamond-P_2+P_3}.
We conclude that~$G$ has bounded clique-width.
\end{proof}

\begin{lemma}\label{lem:K4}
The class of $(\mbox{diamond},P_1+\nobreak 2P_2)$-free graphs that contain a~$K_4$ has bounded clique-width. 
\end{lemma}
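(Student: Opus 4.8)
\textbf{Proof proposal.}
The plan is to peel $G$ down, using only operations that preserve boundedness of clique-width (Facts~\ref{fact:del-vert}--\ref{fact:bip}), until we reach a $K_3$-free graph, to which the $(K_3,P_1+\nobreak 2P_2)$-free case of Theorem~\ref{thm:gen2} applies. First I would handle the trivial sub-cases: if $G$ is disconnected we are done by Lemma~\ref{lem:discon}, so assume $G$ is connected; and if $G$ contains a $K_5$ then, since diamond-freeness forces any vertex outside a $K_5$ to have at most one neighbour in it and any four vertices of a $K_5$ contain two disjoint edges, such an outside vertex together with those two edges induces a $P_1+\nobreak 2P_2$ — hence $V(G)$ equals the five vertices, $G=K_5$, and $\cw(G)=2$. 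So from now on fix a $K_4$, say $K=\{a,b,c,d\}$, and assume $G$ has no $K_5$.

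Next I would set up the key partition. Every vertex outside $K$ has at most one neighbour in $K$ (otherwise, with two further vertices of $K$, we get a diamond) and at least one (otherwise it, together with the disjoint edges $ab$ and $cd$, induces a $P_1+\nobreak 2P_2$); so for $x\in K$ let $A_x$ be the set of vertices outside $K$ whose unique neighbour in $K$ is~$x$, and note $A_x$ is anti-complete to $K\setminus\{x\}$. Since $c$ (say) is anti-complete to $A_a$, the graph $G[A_a]$ is $2P_2$-free (a $2P_2$ inside $A_a$ plus $c$ would be a $P_1+\nobreak 2P_2$); moreover $a$ is complete to $A_a$, so $G[A_a]$ is $P_3$-free (an induced $P_3$ in $A_a$ together with $a$ would be a diamond), hence a disjoint union of cliques, each of size at most~$3$ (a $K_4$ in $A_a$ plus $a$ would be a $K_5$), and being $2P_2$-free it has at most one non-trivial clique. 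Thus there is a set $T_x\subseteq A_x$ with $|T_x|\le 3$ and $A_x\setminus T_x$ independent. After deleting the at most twelve vertices of $T:=\bigcup_x T_x$ (Fact~\ref{fact:del-vert}), performing a bipartite complementation between $\{x\}$ and $A_x$ for each $x\in K$ (Fact~\ref{fact:bip}), which turns $K$ into a separate $K_4$ component, and deleting $K$ (Fact~\ref{fact:del-vert}), we are left with a $4$-partite graph $G'$ with independent parts $A_a\setminus T_a,\ldots,A_d\setminus T_d$ that is still $(\mbox{diamond},P_1+\nobreak 2P_2)$-free, and in which (by the same $P_1+\nobreak 2P_2$ argument applied before deleting $K$) the bipartite graph between any two parts is $2P_2$-free.

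The remaining task, and the main obstacle, is to eliminate the triangles of $G'$: since $G'$ is $4$-partite, every triangle of $G'$ uses three distinct parts, and diamond-freeness prevents a triangle from using exactly two parts, so I would argue that the vertices lying in such ``three-part'' triangles are highly restricted. Using that each pairwise bipartite graph between parts is a chain graph together with $P_1+\nobreak 2P_2$-freeness (two ``independent'' three-part triangles on the same triple of parts, together with a vertex of the fourth part, would induce a $P_1+\nobreak 2P_2$), one shows these triangle-vertices can be covered by a bounded number of sets each of which is trivial to the rest of the graph, so that after a bounded number of further bipartite complementations and vertex deletions $G'$ becomes $K_3$-free. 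At that point Theorem~\ref{thm:gen2} (the $(K_3,P_1+\nobreak 2P_2)$-free case) gives that the resulting graph has bounded clique-width, and undoing all the bounded-size operations via Facts~\ref{fact:del-vert}--\ref{fact:bip} shows that $G$ has bounded clique-width. (An alternative endgame for $G'$ would be to feed the $4$-partite chain-graph structure into the totally $k$-decomposable machinery of Lemmas~\ref{lem:123-imp} and~\ref{lem:k-decomp}, but this still requires dealing with the three-part triangles, which is why I expect that step to be the crux of the argument.)
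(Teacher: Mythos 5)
There is a genuine and recurring error in your set-up: you treat $P_1+\nobreak 2P_2$ as a (not necessarily induced) subgraph. A vertex $v$ with no neighbour in a clique $K_5$ (or $K_4$) together with ``two disjoint edges'' of that clique induces $P_1+\nobreak K_4$, not $P_1+\nobreak 2P_2$, because the four clique vertices are pairwise adjacent and hence do not induce a $2P_2$. This invalidates both of your opening reductions. First, the class is not $K_5$-free apart from $K_5$ itself: a $K_5$ with a single pendant vertex is $(\mbox{diamond},P_1+\nobreak 2P_2)$-free. Second, and more seriously, you cannot conclude that every vertex outside $K$ has a neighbour in $K$: the graph consisting of a $K_4$ on $\{a,b,c,d\}$ with a pendant path $a$--$w$--$u$ is $(\mbox{diamond},P_1+\nobreak 2P_2)$-free, yet $u$ has no neighbour in the $K_4$. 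The set $U$ of such vertices is precisely where most of the work lies in the paper's proof (it shows $G[U]$ is a complete bipartite graph interacting in a controlled way with the sets $V_i$, and it needs a separate argument when some $V_i$ is a clique on at least three vertices); your proposal dismisses $U$ on the basis of the false claim above, so the $4$-partite graph $G'$ you construct does not account for all of $G$.

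Beyond that, the final step --- showing that the triangle-vertices of $G'$ ``can be covered by a bounded number of sets each of which is trivial to the rest of the graph'' --- is asserted rather than proved, and it is not clear it can be carried out as stated; note that the paper's own proof of this lemma never reduces to the $K_3$-free case at all (Theorem~\ref{thm:gen2} is only invoked later, in the proof of Theorem~\ref{t-four}). Instead it takes a \emph{maximum} clique $K$ (so that maximality, not a $K_5$-freeness claim, yields ``at most one neighbour in $K$''), splits into cases according to how many vertices of $K$ have neighbours outside $K$, and in the main case shows that $G[U]$ and each $G[V_i\cup V_j]$ are complete bipartite graphs, after which a bounded number of bipartite complementations reduces $G$ to a disjoint union of stars. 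You would need to repair the two induced-subgraph errors and supply the missing analysis of $U$ and of the triangles before this outline could be accepted.
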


\begin{proof}
Let~$G$ be a $(\mbox{diamond},P_1+\nobreak 2P_2)$-free graph containing an induced~$K_4$.
By Lemma~\ref{lem:discon}, we may assume that~$G$ is connected.
Let~$K$ be a maximum clique of~$G$ and note that $|K| \geq 4$.
We may assume that~$G$ contains vertices outside~$K$, otherwise~$G$ is a clique on at least four vertices, in which case it has clique-width~$2$.

Suppose there is a vertex~$v$ in~$G$ that is not in~$K$, but has at least two
neighbours $x,y \in K$.
By maximality of~$K$, there must be a vertex $z \in K$ that is not adjacent to~$v$.
However this means that $G[x,y,v,z]$ is a diamond, a contradiction.
Therefore every vertex not in~$K$ has at most one neighbour in~$K$.

Choose $v_1,v_2,v_3,v_4 \in K$ arbitrarily.
For $i \in \{1,2,3,4\}$, let~$V_i$ be the set of vertices not in~$K$ whose unique neighbour in~$K$ is~$v_i$.
Let~$U$ be the set of vertices not in~$K$ that do not have a neighbour in $\{v_1,v_2,v_3,v_4\}$.
Note that vertices of~$U$ may have neighbours in $K \setminus \{v_1,v_2,v_3,v_4\}$.

\clm{\label{clm:4-UViVj-P1+P2-free}For $i,j \in \{1,2,3,4\}$, $G[U\cup V_i \cup V_j]$ must be $(P_1+\nobreak P_2)$-free.}
Indeed, if~$G[U\cup V_1 \cup V_2]$ contains an induced $P_1+\nobreak P_2$ on vertices $y_1,y_2,y_3$, say, then $G[y_1,y_2,y_3,v_3,v_4]$ is a $P_1+\nobreak 2P_2$, a contradiction.
The claim follows by symmetry.

\clm{\label{clm:4-Vi-indep-or-edge}For $i \in \{1,2,3,4\}$, we may assume $G[V_i]$ is either a clique on at most two vertices or an independent set.}
If~$G[V_1]$ contains an induced~$P_3$ on vertices $y_1,y_2,y_3$, say, then $G[v_1,y_2,\allowbreak y_1,y_3]$ is a diamond, a contradiction.
Therefore~$G[V_1]$ is a disjoint union of cliques.
Claim~\ref{clm:4-UViVj-P1+P2-free} implies that~$G[V_1]$ is either a clique,
or else every clique in~$G[V_1]$ contains at most one vertex i.e.~$V_1$ is an independent set.

Suppose, for contradiction, that~$V_1$ is a clique on at least three vertices.
We will show that the clique-width of~$G$ is bounded in this case.
First suppose, for contradiction, that there is a vertex $u \in U \cup V_2 \cup V_3 \cup V_4$.
Since $G[\{u\} \cup V_1]$ is $(P_1+\nobreak P_2)$-free by Claim~\ref{clm:4-UViVj-P1+P2-free}, $u$ must be adjacent to all but at most one vertex of~$V_1$.
Let $x,y \in V_1$ be neighbours of~$u$.
Then $G[x,y,u,v_1]$ is a diamond, a contradiction.
We conclude that $U \cup V_2 \cup V_3 \cup V_4 = \emptyset$, so $V(G) = K \cup V_1$.
Deleting~$v_1$ we obtain a disconnected $(\mbox{diamond},P_1+\nobreak 2P_2)$-free graph, which has bounded clique-width by Lemma~\ref{lem:discon}.
Therefore~$G$ has bounded clique-width by Fact~\ref{fact:del-vert}.
Therefore if~$V_1$ is a clique then it contains at most two vertices.
The claim follows by symmetry.

\clm{\label{clm:4-Vi-indep-Vj-triv}For distinct $i,j \in \{1,2,3,4\}$, if~$V_i$ is an independent set then every vertex of~$V_j$ is either complete or anti-complete to~$V_i$.}
Indeed, this follows directly from Claim~\ref{clm:4-UViVj-P1+P2-free}, which states that $G[V_i \cup V_j]$ is $(P_1+\nobreak P_2)$-free. (Note that if~$V_j$ is a clique then it may contain a vertex that is complete to~$V_i$ and another that is anti-complete to~$V_i$.)
 
\clm{\label{clm:4-U-large}We may assume~$U$ contains at least three vertices.}
Suppose that~$U$ has at most two vertices.
By Fact~\ref{fact:del-vert} and Claim~\ref{clm:4-Vi-indep-or-edge}, we may remove every vertex of~$U$ and every vertex of~$V_i$ for those~$V_i$ that are cliques.
After this, by Claim~\ref{clm:4-Vi-indep-or-edge}, every set~$V_i$ will either be empty or an independent set.
Furthermore, for distinct $i,j \in \{1,2,3,4\}$, by Claim~\ref{clm:4-Vi-indep-Vj-triv}, every vertex of~$V_i$ is trivial to~$V_j$ and vice versa, so~$V_i$ is complete or anti-complete to~$V_j$.
By Fact~\ref{fact:bip}, we may apply a bipartite complementation between~$V_i$ and~$V_j$ if they are complete.
By Fact~\ref{fact:del-vert}, we may delete $v_1,v_2,v_3,v_4$.
We obtain a graph that is the disjoint union of a clique and at most four independent sets and therefore has clique-width at most~$2$. It follows that the graph~$G$ must also have had bounded clique-width.
We may therefore assume that~$U$ contains at least three vertices.
This completes the proof of the claim.

\medskip
We now consider a number of cases:

\thmcase{Every vertex of~$K$ has at most one neighbour outside of~$K$.}
By Fact~\ref{fact:comp}, we may remove all the edges connecting pairs of vertices in~$K$.
Let~$G'$ be the resulting graph and note that in~$G'$, every vertex of~$K$ has at most one neighbour.
Then $\cw(G') \leq \cw(G'\setminus K) +1$.
(Given a $k$-expression for $G'\setminus K$, whenever we create a vertex~$v$ that has a neighbour~$w$ in~$K$, we immediately create~$w$ with a special new label~$*$, take the disjoint union and join~$v$ to~$w$ by an edge. For any vertices in~$K$ with no neighbours outside of~$K$, we simply add them with label~$*$ at the end of the process. This will give a $(k+1)$-expression for~$G'$.)
Now $G' \setminus K=G \setminus K$.
Since~$V_1$ contains at most one vertex, by Fact~\ref{fact:del-vert}, it is sufficient to show that $G \setminus (V_1 \cup K)$ has bounded clique-width.
However, $G \setminus (V_1 \cup K)$ is $(\mbox{diamond},2P_2)$-free, since if it contained an induced~$2P_2$ then this, together with~$v_1$ would induce a $P_1+\nobreak 2P_2$ in~$G$.
Therefore $G \setminus (V_1 \cup K)$ has bounded clique-width by Lemma~\ref{lem:diamond-P_2+P_3} and therefore~$G$ also has bounded clique-width.
This completes the proof of this case.

\medskip
We may now assume that at least one vertex of~$K$ has at least two neighbours outside of~$K$.

\thmcase{Exactly one vertex of~$K$ has neighbours outside~$K$.}
Suppose that~$v_1$ is the only vertex of~$K$ that has neighbours outside of~$K$ (at least one vertex of~$K$ has a neighbour outside of~$K$ since~$G$ is connected and not a clique).
Now $G \setminus \{v_1\}$ is a disconnected $(\mbox{diamond},P_1+\nobreak 2P_2)$-free graph, so it has bounded clique-width by Lemma~\ref{lem:discon}.
By Fact~\ref{fact:del-vert}, $G$ also has bounded clique-width.
This completes the proof of this case.

\medskip
We may now assume that at least two vertices of~$K$ have neighbours outside of~$K$.
Without loss of generality, we may therefore assume that the following case holds.

\thmcase{$V_1$ contains at least two vertices and~$V_2$ contains at least one vertex.}
Fix $x,y,z \in V_1 \cup V_2$, with two of these vertices in~$V_1$ and one in~$V_2$.
If these vertices are pairwise adjacent then $G[x,y,v_1,z]$ would be a diamond, a contradiction.
We may therefore assume that~$x$ and~$y$ are non-adjacent.
Now every vertex of~$v \in U$ is either complete or anti-complete to $\{x,y\}$, otherwise $G[v,x,y]$ would be a $P_1 + \nobreak P_2$ in $G[U \cup V_1 \cup V_2]$, which would contradict Claim~\ref{clm:4-UViVj-P1+P2-free}.

Suppose~$u,v \in U$.
If~$u$ and~$v$ are adjacent then they cannot both be complete to $\{x,y\}$, otherwise $G[u,v,x,y]$ would be a diamond and they cannot both be anti-complete to $\{x,y\}$, otherwise
$G[x,u,v]$ would be a $P_1+\nobreak P_2$ in $G[U \cup V_1 \cup V_2]$, which would contradict Claim~\ref{clm:4-UViVj-P1+P2-free}.
Therefore if~$u$ and~$v$ are adjacent then one of them is complete to $\{x,y\}$ and the other is anti-complete to $\{x,y\}$.
If~$u$ and~$v$ are non-adjacent then they must either both be complete to $\{x,y\}$ or both be anti-complete to $\{x,y\}$.
Indeed, suppose for contradiction that~$u$ is complete to $\{x,y\}$ and~$v$ is anti-complete to $\{x,y\}$.
Then $G[v,u,x]$ would be an induced $P_1+\nobreak P_2$ in $G[U \cup V_1 \cup V_2]$, which would contradict Claim~\ref{clm:4-UViVj-P1+P2-free}.
The above holds for every pair of vertices $u,v \in U$. 
This implies that~$G[U]$ is a complete bipartite graph with one of the sets in the bipartition consisting of the vertices complete to $\{x,y\}$ and the other consisting of the vertices anti-complete to $\{x,y\}$.
(Note that one of the parts of the complete bipartite graph~$G[U]$ may be empty, as we allow the case where~$U$ is an independent set.)

Note that the arguments in the above paragraph only used the facts that $G[U \cup V_1 \cup V_2]$ is $(P_1+\nobreak P_2,\mbox{diamond})$-free and that $V_1 \cup V_2$ contains two non-adjacent vertices.
Let~$U_1$ and~$U_2$ be the independent sets that form the bipartition of~$U$.
Note that since~$U$ contains at least three vertices (by Claim~\ref{clm:4-U-large}), we may assume without loss of generality that~$U_1$ contains at least two vertices.
If~$U_2$ contains exactly one vertex, by Fact~\ref{fact:del-vert}, we may delete it.
(Note that this may cause~$U$ to contain only two vertices, rather than  at least three, however this does not affect our later arguments.)
We may therefore assume that~$U_2$ is either empty or contains at least two vertices.
Repeating the argument in the previous paragraph with the roles of~$U$ and $V_1 \cup V_2$ reversed, we find that $G[V_1 \cup V_2]$ is a complete bipartite graph, with one side of the bipartition complete to~$U_1$ and the other anti-complete to~$U_1$ and if~$U_2$ is non-empty then one side of the bipartition is complete to~$U_2$ and the other is anti-complete to~$U_2$.
Similarly, for each pair of distinct $i,j \in \{1,2,3,4\}$, the same argument shows that $G[V_i \cup V_j]$ is also a complete bipartite graph with a similar bipartition.

We now proceed as follows: if~$V_i$ is a clique for some~$i$ then it contains at most two vertices, in which case we delete them and make~$V_i$ empty.
For every pair of distinct $i,j \in \{1,2,3,4\}$ ($V_i$ or~$V_j$ may be empty) $G[V_i \cup V_j]$ must then be an independent set, in which case we do nothing, or a complete bipartite graph with bipartition $(V_i,V_j)$, in which case we apply a bipartite complementation between~$V_i$ and~$V_j$.
Now every set~$V_i$ is either complete or anti-complete to~$U_1$ and complete or anti-complete to~$U_2$.
Applying at most $4 \times 2 = 8$ bipartite complementations, we can remove all edges between $V_1\cup \cdots \cup V_4$ and~$U$.
Next, we apply a bipartite complementation between~$U_1$ and~$U_2$.
Finally, we apply a complementation to the clique~$K$.
Let~$G'$ be the resulting graph and note that $G'[V_1 \cup \cdots \cup V_4 \cup U]$ and $G'[K]$ are independent sets and that in~$G'$ every vertex in $V_1 \cup \cdots \cup V_4 \cup U$ has at most one neighbour in~$K$.
Therefore~$G'$ is a disjoint union of stars, and so has clique-width at most~$2$.
By Facts \ref{fact:del-vert}, \ref{fact:comp} and~\ref{fact:bip}, it follows that~$G$ also has bounded clique-width.
This completes proof for this case and therefore completes the proof of the lemma.
\qedllncs
\end{proof}

\bigskip
To prove the main result of this section, we will need an additional notion.
Let~$G$ be a graph.
For each set~$T$ that induces a triangle in~$G$, let~$U^T$ be the set of vertices in~$G$ that have no neighbour in~$T$.
Let ${\cal U} = \{u \in U^T \; | \; T \; \mbox{\rm induces a triangle in} \; G\}$.
We say that the graph~$G$ is {\em basic} if we can partition the vertices of $G\setminus {\cal U}$ into three sets $V_1,V_2,V_3$ and also into sets $T^1,W_1,T^2,W_2,\ldots,T^p,W_p$ for some~$p$ such that the following properties hold:
\begin{enumeratei}
\item \label{prop:3-no-U-triangle}No triangle in~$G$ contains a vertex of~${\cal U}$.
\item \label{prop:3-UT-nbhd} For every triangle~$T$, the set~$U^T$ is independent and there is a vertex~$x \in V(T)$ such that $N(x) = N(u) \cup (V(T) \setminus \{x\})$ for all $u \in U^T$.
\item \label{prop:3-Vi-indep}$V_1,V_2$ and~$V_3$ are independent.
\item \label{prop:3-have-all-K3s}$\{G[T^1],\ldots,G[T^p]\}$ is the set of all induced triangles in~$G$ and each of them has exactly one vertex in each of $V_1,V_2$ and~$V_3$.
\item \label{prop:3-GW_i-nice}$G[W_i]$ is $(P_1+\nobreak 2P_2)$-free and does not contain an induced~$3P_1$ with one vertex in each of $V_1,V_2$ and~$V_3$.
\item If $i<j$ and $k +1 \not\equiv \ell \; (\bmod\; 3)$ then:
\begin{enumerate1}
\item \label{prop:3-Ti-Tj-anti}$T^i \cap V_k$ is anti-complete to $T^j\cap V_\ell$,
\item \label{prop:3-Ti-Wj-anti}$T^i \cap V_k$ is anti-complete to $W_j\cap V_\ell$,
\item \label{prop:3-Wi-Tj-anti}$W_i \cap V_k$ is anti-complete to $T^j\cap V_\ell$ and
\item \label{prop:3-Wi-Wj-anti}$W_i \cap V_k$ is anti-complete to $W_j\cap V_\ell$,
\end{enumerate1}
\item If $i<j$ and $k +1 \equiv \ell \; (\bmod\; 3)$ then:
\begin{enumerate1}
\item \label{prop:3-Ti-Tj-comp}$T^i \cap V_k$ is complete to $T^j \cap V_\ell$,
\item \label{prop:3-Ti-Wj-comp}$T^i \cap V_k$ is complete to $W_j \cap V_\ell$ and
\item \label{prop:3-Wi-Tj-comp}$W_i \cap V_k$ is complete to $T^j \cap V_\ell$.
\end{enumerate1}
\item If $i+1<j$ and $k +1 \equiv \ell \; (\bmod\; 3)$ then:
\begin{enumerate1}
\item \label{prop:3-Wi-Wj-comp}$W_i \cap V_k$ is complete to $W_j \cap V_\ell$.
\end{enumerate1}
\item If $i+1=j$ and $k+1 \equiv \ell \; (\bmod\; 3)$ then:
\begin{enumerate1}
\item \label{prop:3-Wi-Wi+1-triv}$W_i \cap V_k$ is either complete or anti-complete to $W_j \cap V_\ell$.
\end{enumerate1}
\item If $i=j$ and $k +1 \equiv \ell \; (\bmod\; 3)$ then:
\begin{enumerate1}
\item \label{prop:3-Ti-Wi-comp}$T^i \cap V_k$ is complete to $W_j \cap V_\ell$.
\end{enumerate1}
\item If $i=j$ and $k +1 \not\equiv \ell \; (\bmod\; 3)$ then:
\begin{enumerate1}
\item \label{prop:3-Ti-Wi-anti}$T^i \cap V_k$ is anti-complete to $W_j\cap V_\ell$.
\end{enumerate1}
\end{enumeratei}

Next, we show that basic graphs have bounded clique-width.

\begin{lemma}\label{lem:basic-bdd-cw}
If~$G$ is a basic graph then it has clique-width at most~$9$.
\end{lemma}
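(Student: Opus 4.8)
The plan is to build a $k$-expression for a basic graph $G$ in a layered fashion, starting from the structural decomposition guaranteed by the definition and adding back the vertices of $\mathcal{U}$ at the end. First I would handle $\mathcal{U}$: by Property~\ref{prop:3-UT-nbhd}, every $u \in U^T$ is a false twin (in $G \setminus \mathcal{U}$, and in fact in $G$ after the associated vertex $x$ is accounted for) of the corresponding vertex $x$. More precisely, for each triangle $T$ the vertices of $U^T$ all have neighbourhood $N(x) \setminus (V(T)\setminus\{x\})$, so they are false twins of each other, and their neighbourhood is determined by $x$. Hence by Lemma~\ref{lem:false-twin} it suffices to bound the clique-width of $G \setminus \mathcal{U}$ and then observe that adding the vertices of $\mathcal{U}$ back costs nothing: whenever a vertex $x$ appearing in some triangle is given its label, we can reserve one extra label to create the false-twin copies in $\mathcal{U}$ attached to the same set, using the fact (Property~\ref{prop:3-no-U-triangle}) that these copies induce no triangle and so behave simply. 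This reduces the problem to $H := G \setminus \mathcal{U}$, which is partitioned both as $V_1 \cup V_2 \cup V_3$ (three independent sets, by Property~\ref{prop:3-Vi-indep}) and as $T^1 \cup W_1 \cup \cdots \cup T^p \cup W_p$.

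The core of the argument is then to construct $H$ by processing the ``strips'' $T^1, W_1, T^2, W_2, \ldots, T^p, W_p$ in order of increasing index, maintaining at each stage a bounded number of labels. Because $H$ is $3$-partite via $(V_1, V_2, V_3)$, I would use six ``permanent'' labels, two per part $V_k$: one label to mark vertices of $V_k$ already placed whose future connections are all ``complete'' (these arise from the $T^i$'s and from $W_i$'s with small index), and it turns out from Properties~\ref{prop:3-Ti-Tj-comp}--\ref{prop:3-Ti-Wj-comp}, \ref{prop:3-Wi-Tj-comp} and \ref{prop:3-Wi-Wj-comp}, \ref{prop:3-Ti-Wi-comp}, \ref{prop:3-Ti-Wi-anti} that after a strip $W_i$ has been fully built, its only remaining interactions with later strips are: $T^i$-type vertices connect completely to everything of appropriate colour in all later strips; $W_i$-vertices connect completely to $W_j$ for $j \geq i+2$ of the appropriate colour and are trivial (complete-or-anti-complete, Property~\ref{prop:3-Wi-Wi+1-triv}) to $W_{i+1}$. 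The key point is that only the immediately preceding strip $W_{i-1}$ can still have ``undecided'' (trivial but not yet complete) edges to the current strip $W_i$, so I only need to keep the $W_{i-1}$-vertices in separate working labels (three of them, one per $V_k$) while building $W_i$; all earlier $W_j$ and all $T^j$ collapse into the six permanent labels. Combined with building $T^i$ and $W_i$ themselves — each $T^i$ is a single triangle, and each $G[W_i]$ is $(P_1+2P_2)$-free and $3P_1$-transversal-free, which by Lemma~\ref{lem:123-imp} applied to the $3$-partition $(W_i \cap V_1, W_i \cap V_2, W_i \cap V_3)$ (its three pairs $W_i \cap V_k \cup W_i \cap V_\ell$ are $(P_7,S_{1,2,3})$-free since they are $(P_1+2P_2)$-free, hence certainly $P_7$- and $S_{1,2,3}$-free, and the $K_3$/$3P_1$ condition holds) is totally $3$-decomposable and so has clique-width at most $6$ by Lemma~\ref{lem:k-decomp} — gives a constant label budget.

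Assembling the counts: six permanent labels for the ``done'' vertices of $V_1, V_2, V_3$, three working labels for the $W_{i-1}$-vertices still needing trivial-edge resolution with $W_i$, and within the construction of the current $G[W_i]$ at most six labels by Lemma~\ref{lem:k-decomp} (which we can arrange to use a disjoint block of six labels, relabelled down to the three $V_k$-appropriate permanent labels once $W_i$ is complete), plus a couple of scratch labels for inserting the triangle $T^i$ and for the false-twin trick with $\mathcal{U}$. A careful bookkeeping of when to issue the join operations — each cross-strip join is a single ``join label $a$ to label $b$'' operation justified by one of the complete/anti-complete properties — shows the whole construction uses at most $9$ labels, giving $\mathrm{cw}(G) \leq 9$. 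The main obstacle is precisely this bookkeeping: verifying that at the moment a join is performed, the two labels involved contain \emph{exactly} the right sets of vertices (no vertex of the wrong strip or wrong colour has leaked into a label), which requires checking each of Properties~\ref{prop:3-Ti-Tj-anti}--\ref{prop:3-Ti-Wi-anti} at the correct stage; the potential pitfall is the interaction with $W_{i+1}$ via the merely-trivial Property~\ref{prop:3-Wi-Wi+1-triv}, which is why $W_{i-1}$ cannot be merged into the permanent labels until $W_i$ is built, and one must confirm that two consecutive strips' worth of working labels never need to coexist with a third.
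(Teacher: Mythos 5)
Your construction is structurally the same as the paper's: process the strips $T^1,W_1,\ldots,T^p,W_p$ in order, build each $G[W_i]$ via Lemmas~\ref{lem:123-imp} and~\ref{lem:k-decomp}, keep only the immediately preceding strip $W_{i-1}$ in separate working labels (because of the merely-trivial Property~\ref{prop:3-Wi-Wi+1-triv}), and insert the vertices of $U^{T^i}$ with the same label as the distinguished triangle vertex $x$ after the triangle's internal edges have been created. All of these structural ideas are correct and would yield bounded clique-width.

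The gap is in the one thing the lemma actually asserts, namely the constant~$9$. Your own inventory --- six ``permanent'' labels (two per part $V_k$), three working labels for $W_{i-1}$, a ``disjoint block'' of six labels for the current $G[W_i]$, plus ``a couple of scratch labels'' for $T^i$ and for $\mathcal{U}$ --- sums to well over nine, and the sentence ``a careful bookkeeping \ldots\ shows the whole construction uses at most $9$ labels'' is exactly the step you do not carry out. Two corrections make the count close. First, one permanent label per part suffices, not two: once a strip is fully processed (including the resolution of its trivial edges to the next $W$-strip), every one of its vertices lying in $V_k$ has the same future behaviour --- complete to everything of colour $k+1$ in all later strips, by Properties~\ref{prop:3-Ti-Tj-comp}--\ref{prop:3-Wi-Tj-comp} and~\ref{prop:3-Wi-Wj-comp}, and anti-complete to the rest --- so all finished strips can share three labels. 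Second, no extra label is needed for $\mathcal{U}$: the vertices of $U^{T^i}$ are not false twins of $x$ in $G$ (they miss the two other triangle vertices), so Lemma~\ref{lem:false-twin} does not dispose of them, but they can simply be created with $x$'s current label immediately after the three triangle edges have been joined, at zero cost. With three labels for finished material, three for $W_{i-1}$, and three for the incoming strip --- the six labels that Lemma~\ref{lem:k-decomp} needs for $G[W_i]$ being reused inside a disjoint sub-expression and collapsed to three before the union is taken --- the total is nine. As written, your argument establishes only ``clique-width at most some constant,'' which suffices for the downstream results but not for the statement of the lemma.
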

\begin{proof}
Let~$G$ be a graph with vertices partitioned into sets as above.
This means that we have sets of vertices $T^1,W_1,T^2,W_2,\ldots,T^p,W_p$ in order, such that if~$X$ and~$Y$ are sets in this order with~$X$ coming before~$Y$ then $X \cap V_k$ is complete to $Y \cap V_\ell$ if $k+1 \equiv \ell \; (\bmod\; 3)$ and anti-complete otherwise in all cases except where $X=W_i, Y=W_{i+1}$ for some~$i$, in which case $X \cap V_k$ may either be complete or anti-complete to $Y \cap V_{k+1}$.
Also recall that~$U^{T^i}$ is an independent set for every~$i$ and there is a vertex $x \in T^i$ such that every vertex of~$U^{T^i}$ has the same neighbourhood in $G \setminus T^i$ as~$x$.

Note that $W_i \subseteq V_1 \cup V_2 \cup V_3$.
Then~$G[W_i]$ is a $3$-partite graph with $3$-partition $(W_i\cap V_1,W_i \cap V_2,W_i \cap V_3)$.
Furthermore, $G[W_i]$ is $K_3$-free, and contains no induced~$3P_1$ with exactly one vertex in each~$V_j$.
Since~$G[W_i]$ is $(P_1+\nobreak 2P_2)$-free it must therefore be $(P_7,S_{1,2,3})$-free.
Therefore, by Lemma~\ref{lem:123-imp}, the graph~$G[W_i]$ is totally $3$-decomposable with respect to this partition.
By Lemma~\ref{lem:k-decomp}, we can construct~$G[W_i]$ using at most six labels such that the resulting labelled graph has all vertices in~$W_i$ labelled with label~$i$ for $i \in \{1,2,3\}$.

We are now ready to describe how to construct~$G$. We do this by constructing $G[T^i \cup U^{T^i}]$ then~$G[W_i]$ for each~$i \in \{1,\ldots,p\}$ in turn and adding it to the graph.
More formally, we start with the empty graph, then for $i=1,\ldots,p$ in turn, we do the following:

\begin{enumerate}
\item Let $\{x_1^i,x_2^i,x_3^i\}=T^i$, where $x_j^i \in V_j$ for $j \in \{1,2,3\}$.
Add vertices $x_1^i,x_2^i$ and~$x_3^i$ with labels $4,5$ and~$6$, respectively, then add edges between vertices labelled $4\&5, 5\&6$ and $4\&6$.

\item If~$U^{T^i}$ is non-empty then the vertices in this set have the same neighbourhood in $G\setminus T^i$ as $x_1^i,x_2^i$ or~$x_3^i$.
Add the vertices of~$U^{T^i}$ with label $4,5$ or~$6$, respectively.

\item Add edges between vertices labelled $1\&5, 2\&6$ and $3\&4$.

\item Relabel vertices labelled $4,5$ or~$6$ to have labels $1,2$ or~$3$, respectively.

\item Construct~$G[W_i]$ with vertices labelled $4,5$ or~$6$, if they are in $V_1,V_2$ or~$V_3$, respectively.

\item Add edges between vertices labelled $1\&5, 2\&6$ and $3\&4$.

\item If $i>1$ then add edges between vertices labelled $4\&9, 5\&7$ and $6\&8$ if $V_k \cap W_{i}$ is complete to $V_{k-1} \cap W_{i-1}$ for $k=1,2,3$, respectively.

\item Relabel vertices labelled $7,8$ or~$9$ to have labels $1,2$ or~$3$, respectively.

\item Relabel vertices labelled $4,5$ or~$6$ to have labels $7,8$ or~$9$, respectively.
\end{enumerate}

Note that at the end of any iteration of the above procedure, the vertices of~$W_i$ will have labels in $\{7,8,9\}$ and all other constructed vertices will have labels in $\{1,2,3\}$.

This construction builds a copy of~$G$ using at most nine labels.
Thus~$G$ has clique-width at most~$9$.
This concludes the proof of the lemma.
\qedllncs
\end{proof}

We are now ready to prove our main theorem of this section.
To do so, we show that if a graph~$G$ is $(\mbox{diamond},P_1+\nobreak 2P_2)$-free then either we can show that~$G$ has bounded clique-width directly (possibly by applying some graph operations that do not change the clique-width the graph by ``too much'') or else the (unmodified) graph~$G$ is itself basic (in which case it has clique-width at most~$9$).

\begin{theorem}\label{t-four}
The class of $(\mbox{diamond},P_1+\nobreak 2P_2)$-free graphs has bounded~clique-width.
\end{theorem}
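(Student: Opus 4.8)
The plan is to analyze the structure of a connected $(\mbox{diamond},P_1+\nobreak 2P_2)$-free graph~$G$ and show that, after possibly applying a bounded number of clique-width-preserving operations (vertex deletions, subgraph complementations and bipartite complementations), $G$ either lands in a class already known to have bounded clique-width or is itself basic, so that Lemma~\ref{lem:basic-bdd-cw} applies. First I would invoke Lemma~\ref{lem:discon} to assume~$G$ is connected and Lemma~\ref{lem:K4} to assume~$G$ contains no~$K_4$. If~$G$ is additionally $K_3$-free then it is $(K_3,P_1+\nobreak 2P_2)$-free, which has bounded clique-width by Theorem~\ref{thm:gen2} (this is exactly the $(K_3,P_1+\nobreak 2P_2)$-free case noted there). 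Hence I may assume that~$G$ contains a triangle. Since~$G$ is $(\mbox{diamond},K_4)$-free, every edge of~$G$ lies in at most one triangle, and a vertex outside a triangle has at most one neighbour in it (two neighbours would create a diamond or a~$K_4$).

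Next I would deal with the set~$\mathcal U$ of vertices having no neighbour in some triangle. Fixing a triangle~$T$ and a vertex $u\in U^T$, the vertex~$u$ together with any edge of~$T$ already forms an induced $P_1+\nobreak P_2$, so $(P_1+\nobreak 2P_2)$-freeness severely restricts the edges and vertices lying ``away'' from $T\cup\{u\}$. A careful case analysis of these constraints, combined with diamond-freeness, should yield Properties~\ref{prop:3-no-U-triangle} and~\ref{prop:3-UT-nbhd} of the definition of a basic graph: no vertex of~$\mathcal U$ lies in a triangle, each~$U^T$ is independent, and all vertices of~$U^T$ share the external neighbourhood of one distinguished vertex of~$T$, so they behave like ``near-twins'' of that vertex. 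Crucially, every vertex of $G\setminus\mathcal U$ then has a neighbour in every triangle of~$G$, and hence exactly one.

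The heart of the argument — and the step I expect to be the main obstacle — is to show that $G$ (with~$\mathcal U$ set aside) has the rigid ``stacked triangles'' structure demanded by the definition of a basic graph, or else reduces to a tractable case. Two disjoint triangles must be joined by a perfect matching of their vertex sets, while two triangles sharing a vertex immediately yield an induced~$2P_2$ formed by their two ``opposite'' edges; the latter then forces every remaining vertex to be adjacent to that four-vertex set, which I would use to derive either a contradiction or a bounded/reducible situation. Assuming all triangles are pairwise disjoint, one has to prove that the perfect matchings between them are mutually consistent, producing a partition of $V(G)\setminus\mathcal U$ into three independent sets $V_1,V_2,V_3$ with every triangle rainbow (Properties~\ref{prop:3-Vi-indep} and~\ref{prop:3-have-all-K3s}); establishing independence of the~$V_j$ and global consistency of this colouring is precisely where the delicate exploitation of $(P_1+\nobreak 2P_2)$-freeness is needed. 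One then orders the triangles $T^1,\ldots,T^p$, inserts the triangle-free ``layers'' $W_1,\ldots,W_p$ between them, and verifies the complete/anti-complete pattern (Properties~\ref{prop:3-Ti-Tj-anti}--\ref{prop:3-Ti-Wi-anti}) by repeated appeals to the two forbidden induced subgraphs. Note that $G[W_i]$ is $K_3$-free, $(P_1+\nobreak 2P_2)$-free (hence $(P_7,S_{1,2,3})$-free, since $P_1+\nobreak 2P_2$ is an induced subgraph of both~$P_7$ and~$S_{1,2,3}$) and has no rainbow~$3P_1$, which is Property~\ref{prop:3-GW_i-nice} and is exactly what makes Lemmas~\ref{lem:123-imp} and~\ref{lem:k-decomp} applicable inside the construction of Lemma~\ref{lem:basic-bdd-cw}. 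Whenever the structure fails — e.g.\ when some set is ``too large'', when triangles overlap, or when a would-be~$V_j$ is not independent — I would instead reduce to a class of bounded clique-width already at hand: $(\mbox{diamond},P_2+\nobreak P_3)$-free graphs (Lemma~\ref{lem:diamond-P_2+P_3}), $S_{1,2,3}$-free bipartite graphs (Lemma~\ref{lem:bipartite}), totally $k$-decomposable graphs (Lemmas~\ref{lem:123-imp} and~\ref{lem:k-decomp}), the already-settled triangle-free case, or the disconnected case (Lemma~\ref{lem:discon}), using Facts~\ref{fact:del-vert},~\ref{fact:comp} and~\ref{fact:bip} to keep the number of operations bounded.

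Putting everything together, every $(\mbox{diamond},P_1+\nobreak 2P_2)$-free graph is either reduced by a bounded number of clique-width-preserving operations to a graph of bounded clique-width, or is itself basic and hence has clique-width at most~$9$ by Lemma~\ref{lem:basic-bdd-cw}; in all cases the clique-width is bounded by an absolute constant, which proves the theorem. The genuinely hard part is the full verification that a graph in which all triangles are pairwise disjoint and each vertex meets every triangle exactly once actually admits the linear ordering of triangles together with the prescribed bipartite pattern between consecutive layers — a long, multi-case structural argument in which the $P_1+\nobreak 2P_2$ obstruction must be used in many different configurations.
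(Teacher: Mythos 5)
Your proposal follows essentially the same route as the paper's own proof: reduce via Lemma~\ref{lem:discon}, Lemma~\ref{lem:K4} and Theorem~\ref{thm:gen2} to a connected, $K_4$-free graph containing a triangle, then establish through a sequence of structural claims (independence of the $U^T$ and their near-twin behaviour, pairwise disjointness of triangles with perfect matchings between them, a consistent linear order on the triangles with the layers $W_i$ in between) that~$G$ is either reducible to an already-settled class or basic, and conclude with Lemma~\ref{lem:basic-bdd-cw}. The lemmas you invoke, the fallback classes you reduce to, and the point you flag as the hard core of the argument all match the paper's proof, so the plan is sound as stated.
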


\begin{proof}
Let~$G$ be a $(\mbox{diamond},P_1+\nobreak 2P_2)$-free graph.
By Lemma~\ref{lem:discon}, we may assume that~$G$ is connected.
By Theorem~\ref{thm:gen2}, we may assume that~$G$ contains an induced~$K_3$.
By Lemma~\ref{lem:K4}, we may assume that~$G$ is $K_4$-free.

Let~$T$ be an arbitrary induced triangle (i.e.~$K_3$) in~$G$ with vertices $v^T_1,v^T_2$ and~$v^T_3$.
Since~$G$ is $(\mbox{diamond},K_4)$-free, every vertex not in~$T$ has at most one neighbour in~$T$.
For $i \in \{1,2,3\}$ let~$V^T_i$ be the set of vertices not in~$T$ whose unique neighbour in~$T$ is~$v^T_i$ and let~$U^T$ be the set of vertices that have no neighbour in~$T$.
We will now prove a series of claims.
More formally, we will show that if the conditions of any of these claims are not satisfied, then either we obtain a contradiction or we can directly prove that~$G$ has bounded clique-width, in which case we are done.

\clm{\label{clm:3-Vi-large}For every triangle~$T$, the sets $V^T_1$, $V^T_2$ and~$V^T_3$ each contain at least three vertices.}
If for some~$i$ the set~$V^T_i$ contains at most two vertices then~$v^T_i$ has at most four neighbours in~$G$.
If we delete every vertex in~$N(v^T_i)$, then~$v^T_i$ has no neighbours in the resulting graph.
Therefore either~$G$ has at most five vertices (in which case it has clique-width at most~$5$), or $G \setminus N(v^T_i)$ is a disconnected $(\mbox{diamond},P_1+2P_2)$-free graph, so it has bounded clique-width by Lemma~\ref{lem:discon}.
By Fact~\ref{fact:del-vert}, it follows that~$G$ has bounded clique-width.
This completes the proof of the claim.

\clm{\label{clm:3-ViT-indep}For every triangle~$T$, the sets $V^T_1$, $V^T_2$ and~$V^T_3$ are independent.}
Suppose, for contradiction, that~$V^T_1$ is not an independent set.
Since~$G$ is $K_4$-free and every vertex of~$V^T_1$ is adjacent to~$v^T_1$, it follows that~$G[V^T_1]$ is $K_3$-free.
Since~$V^T_1$ contains at least three vertices by Claim~\ref{clm:3-Vi-large}, there must be vertices $x,y,z \in V^T_1$ such that~$x$ is adjacent to~$y$, but not to~$z$.
Then $G[v^T_1,y,x,z]$ is a diamond if~$y$ and~$z$ are adjacent and $G[z,x,y,v^T_2,v^T_3]$ is a $P_1+\nobreak 2P_2$ if they are not.
This contradiction implies that~$V^T_1$ is an independent set.
The claim follows by symmetry.

\clm{\label{clm:3-vertex-disjoint-triangles}Every pair of triangles in~$G$ is vertex-disjoint.}
Consider a triangle~$T$ with vertex~$v^T_1$.
The neighbourhood of~$v^T_1$ is $V^T_1 \cup \{v^T_2,v^T_3\}$.
Now~$V^T_1$ is independent by Claim~\ref{clm:3-ViT-indep} and anti-complete to $\{v^T_2,v^T_3\}$ by definition.
Therefore, if a triangle in~$G$ contains~$v^T_1$ then it must also contain~$v^T_2$ and~$v^T_3$.
In other words, $v^T_1$ is contained in only one triangle in~$G$, namely~$T$.
The claim follows by symmetry.

\clm{\label{clm:3-UT-indep}For every triangle~$T$, the set $U^T$ is independent.}
By Claim~\ref{clm:3-Vi-large}, we can choose $x,y \in V^T_1$ and by Claim~\ref{clm:3-ViT-indep}, $x$ must be non-adjacent to~$y$.
If a vertex $u \in U^T$ is adjacent to~$x$, but not to~$y$ then $G[y,u,x,v^T_2,v^T_3]$ is a $P_1+\nobreak 2P_2$, a contradiction.
Therefore every vertex of~$U^T$ is either complete or anti-complete to $\{x,y\}$.
Suppose $u,v \in U^T$.
First suppose~$u$ and~$v$ are non-adjacent.
If~$x$ is adjacent to~$u$ but~$v$ is not, then $G[v,u,x,v^T_2,v^T_3]$ is a $P_1+\nobreak 2P_2$, a contradiction.
Therefore if $u,v\in U^T$ are non-adjacent, then $\{u,v\}$ is either complete or anti-complete to $\{x,y\}$.
Now suppose~$u$ and~$v$ are adjacent.
Then $G[u,v,x,y]$ is a diamond if $\{u,v\}$ is complete to $\{x,y\}$ and $G[x,u,v,v^T_2,v^T_3]$ is a $P_1+\nobreak 2P_2$ if $\{u,v\}$ is anti-complete to $\{x,y\}$.
Therefore if $u,v \in U^T$ are adjacent then exactly one of them is complete to $\{x,y\}$ and the other is anti-complete to $\{x,y\}$.
This means that~$G[U^T]$ is a complete bipartite graph, with partition classes~$U^T_1$ and~$U^T_2$, say, and furthermore, one of~$U^T_1$ and~$U^T_2$ is complete to~$V^T_1$ and the other is anti-complete to~$V^T_1$.
Similarly, this holds with the same partition $(U^T_1,U^T_2)$ if we replace~$V^T_1$ by~$V^T_2$ or~$V^T_3$.
Thus every vertex of~$U^T_1$ (respectively~$U^T_2$) has the same neighbourhood in $V^T_1 \cup V^T_2 \cup V^T_3$.

Suppose that~$V^T_i$ and~$V^T_j$ are both complete to~$U^T_k$ for some $i,j \in \{1,2,3\}$ with $i \neq j$ and some $k \in \{1,2\}$ and that~$U^T_k$ contains at least two vertices, say~$u$ and~$v$.
If $x \in V^T_i$ and $y \in V^T_j$ are adjacent, then $G[x,y,u,v]$ is a diamond, a contradiction. Therefore~$V^T_i$ is anti-complete to~$V^T_j$.

Suppose that~$U^T_1$ and~$U^T_2$ each contain at least one vertex, say~$u$ and~$v$, respectively.
We will show that in this case the clique-width of~$G$ is bounded.
Suppose, for contradiction, that $G \setminus (T \cup \{u,v\})$ contains an induced~$K_3$, say with vertex set~$T'$.
Since~$G[U^T]$ is a complete bipartite graph with bipartition $(U^T_1,U^T_2)$ and no vertex of a set~$V^T_i$ can have neighbours in both~$U^T_1$ and~$U^T_2$, at most one vertex of~$T'$ can be in~$U^T$.
Suppose that~$U^T_1$ contains at least two vertices (so $U^T_1 \setminus \{u\}$ is non-empty) and that~$U^T_1$ is complete to~$V^T_i$ and~$V^T_j$ for some $i \neq j$ (in which case~$U^T_2$ is anti-complete to~$V^T_i$ and~$V^T_j$).
Then~$V^T_i$ and~$V^T_j$ must be anti-complete. We conclude that in this case no vertex of~$U^T_1$ can belong to~$T'$.
No vertex of~$U^T_2$ can belong to~$T'$ either, since vertices in~$U^T_2$ can only have neighbours in~$U^T_1$ and in~$V^T_k$ where $k \notin \{i,j\}$ (if~$U^T_1$ is anti-complete to~$V^T_k$).
Furthermore, since~$V^T_i$ is anti-complete to~$V^T_j$, and~$V^T_1$, $V^T_2$,~$V^T_3$ are independent (by Claim~\ref{clm:3-ViT-indep}), there is no induced~$K_3$ in $G[V^T_1 \cup V^T_2 \cup V^T_3]$.
Thus~$T'$ cannot exist, a contradiction.

The above means that if such a triangle~$T'$ does exist and a set~$U^T_i$ contains at least two vertices, then~$U^T_i$  must be anti-complete to at least two distinct sets~$V^T_j$ and~$V^T_k$ (in which case~$U^T_i$ cannot contain a vertex of~$T'$).
Since~$T'$ consists of vertices of $G \setminus (T \cup \{u,v\})$, this means that no vertex of~$U^T$ is in~$T'$ (if~$U^T_i$ contains a single vertex for some~$i$ then by definition~$T'$ does not include it).
By Claim~\ref{clm:3-ViT-indep}, it follows that~$T'$ must consist of vertices $x \in V^T_1, y \in V^T_2$ and~$z \in V^T_3$.
Since each set~$V^T_i$ is anti-complete to exactly one of~$U^T_1$ and~$U^T_2$, we may assume without loss of generality that~$U^T_1$ (and therefore~$u$) is complete to both~$V^T_1$ and~$V^T_2$.
Now $G[x,y,z,u]$ is a~$K_4$ or diamond if~$u$ and~$z$ are adjacent or non-adjacent, respectively.
This contradiction means that $G \setminus (T \cup \{u,v\})$ must in fact be~$K_3$-free.
Since $G \setminus (T \cup \{u,v\})$ is a $(K_3,P_1+\nobreak 2P_2)$-free graph, it has bounded clique-width by Theorem~\ref{thm:gen2}.
By Fact~\ref{fact:del-vert}, we conclude that~$G$ also has bounded clique-width.
We may therefore assume that either~$U^T_1$ or~$U^T_2$ is empty.
It follows that~$U^T$ is an independent set.
This completes the proof of the claim.

\clm{\label{clm:3-UT-nice-nbhd}For every triangle~$T$, there is a vertex~$x \in V(T)$ such that $N(x) = N(u) \cup (V(T) \setminus \{x\})$ for all $u \in U^T$.}
By the previous claim, we may assume that~$U^T$ is independent.
Note that by the same arguments as for the previous claim, for all $i \in \{1,2,3\}$, $U^T$ is trivial to $V^T_i$.
Suppose $u \in U^T$.
By the same arguments as for the previous claim, $U^T$ must be anti-complete to at least two distinct sets~$V^T_i$ and~$V^T_j$, otherwise $G \setminus (T \cup \{u\})$ would be $K_3$-free and the clique-width of~$G$ would be bounded as before.
Since~$G$ is connected, it follows that~$U^T$ must be complete to at least one set~$V^T_i$.
Therefore~$U^T$ must be complete to exactly one set~$V^T_i$.
It follows that $N(v^T_i) = V^T_i \cup (V(T) \setminus \{v^T_i\}) = N(u) \cup (V(T) \setminus \{v^T_i\})$ for all $u \in U^T$.
This completes the proof of the claim.

\clm{\label{clm:3-no-U-triangle}No triangle in~$G$ contains a vertex of~${\cal U}$.}
If $u \in {\cal U}$ then $u \in U_T$ for some triangle~$T$.
By the previous claim, the neighbourhood of every vertex of~$U^T$ is~$V^T_i$, for some~$i$.
Since~$V^T_i$ is an independent set, the claim follows immediately.

\clm{\label{clm:3-induced-mathings-between-triangles}If~$T$ and~$T'$ are distinct triangles in~$G$ then the edges between them form an induced matching.}
Suppose~$T$ and~$T'$ are distinct triangles in~$G$.
By Claim~\ref{clm:3-vertex-disjoint-triangles}, $T$ and~$T'$ must be vertex-disjoint.
By Claim~\ref{clm:3-no-U-triangle}, it follows that every vertex of~$T'$ is in $V^T_1 \cup V^T_2 \cup V^T_3$, so every vertex of~$T'$ has exactly one neighbour in~$T$.
By Claim~\ref{clm:3-ViT-indep}, for $i \in \{1,2,3\}$, the set~$V^T_i$ is an independent set, so it can contain at most one vertex of~$T'$.
Therefore~$T'$ has exactly one vertex in each of~$V^T_1$, $V^T_2$ and~$V^T_3$.
By definition of~$V^T_i$, this means that every vertex of~$T'$ has a different neighbour in~$T$.
The claim follows.

\clm{\label{clm:3-Vi-Vj-2P2-free}For every triangle~$T$ and for every pair of distinct $i,j \in \{1,2,3\}$, $G[V^T_i \cup V^T_j]$ is $2P_2$-free.}
Suppose, for contradiction, that $G[V^T_1 \cup V^T_2]$ contains an induced~$2P_2$.
Then this~$2P_2$, together with the vertex~$v^T_3$ would induce a $P_1+\nobreak 2P_2$ in~$G$.
The claim follows by symmetry.

\clm{\label{clm:3-no-3P_1}For every triangle~$T$, there is no induced~$3P_1$ in~$G$ with one vertex in each of~$V^T_1,V^T_2$ and~$V^T_3$.}
Suppose that there are three vertices $x \in V^T_1, y \in V^T_2$ and $z \in V^T_3$ that are pairwise non-adjacent.
We will show that in this case~$G$ has bounded clique-width.
Suppose $u \in U^T$.
By Claim~\ref{clm:3-UT-nice-nbhd}, $u$ has exactly one neighbour in $\{x,y,z\}$.
Without loss of generality, assume that~$u$ is adjacent to~$x$.
Then $G[z,u,x,y,v^T_2]$ is a $P_1+\nobreak 2P_2$, a contradiction.
We may therefore assume that~$U^T$ is empty.
If there is a vertex $x' \in V^T_1 \setminus \{x\}$ that is adjacent to~$y$, but not to~$z$ then $G[x,x',y,v^T_3,z]$ is a $P_1+\nobreak 2P_2$ in~$G$.
This contradiction means that every vertex of~$V^T_1$ is either complete or anti-complete to $\{y,z\}$.
Similarly, every vertex of~$V^T_2$ is either complete or anti-complete to $\{x,z\}$ and every vertex of~$V^T_3$ is either complete or anti-complete to $\{x,y\}$.
Note that the above holds for any three pairwise non-adjacent vertices in $V^T_1,V^T_2$ and~$V^T_3$, respectively.

Let~$V'^T_1$ and~$V''^T_1$ be the sets of vertices in~$V^T_1$ that are anti-complete or complete to $\{y,z\}$, respectively.
Let~$V'^T_2$ and~$V''^T_2$ be the sets of vertices in~$V^T_2$ that are anti-complete or complete to $\{x,z\}$, respectively.
Let~$V'^T_3$ and~$V''^T_3$ be the sets of vertices in~$V^T_3$ that are anti-complete or complete to $\{x,y\}$, respectively.
Note that $x \in V'^T_1,\allowbreak y \in\nobreak V'^T_2$ and $z \in V'^T_3$.

Suppose $x' \in V'^T_1$ and $y' \in V'^T_2$.
Since~$x'$ is non-adjacent to~$y$ and to~$z$, it follows that $G[x',y,z]$ is a~$3P_1$.
Since~$y'$ is non-adjacent to~$z$, it must therefore be anti-complete to $\{x',z\}$.
In particular, this means that if $i,j \in \{1,2,3\}$ are distinct then~$V'^T_i$ is anti-complete to~$V'^T_j$.

Suppose $x' \in V'^T_1$ and $y' \in V''^T_2$.
Since~$x'$ is non-adjacent to~$y$ and to~$z$, it follows that $G[x',y,z]$ is a~$3P_1$.
Since~$y'$ is adjacent to~$z$, it must therefore be complete to $\{x',z\}$.
In particular, this means that if $i,j \in \{1,2,3\}$ are distinct then~$V'^T_i$ is complete to~$V''^T_j$.

Note that for all $i \in \{1,2,3\}$, $V'^T_i$ is anti-complete to~$V''^T_i$, since~$V^T_i$ is an independent set.

Suppose $x' \in V''^T_1$ and $y' \in V''^T_2$.
If~$x'$ and~$y'$ are non-adjacent then $G[x',y,x,y']$ is a~$2P_2$ in $G[V^T_1 \cup V^T_2]$, which would contradict Claim~\ref{clm:3-Vi-Vj-2P2-free}.
This means that if $i,j \in \{1,2,3\}$ are distinct then~$V''^T_i$ is complete to~$V''^T_j$.

We now proceed as follows: from~$G$, we delete the three vertices of~$T$.
We then apply a bipartite complementation between every pair of sets~$V'^T_i$ and~$V''^T_j$ and every pair of distinct sets~$V''^T_i$ and~$V''^T_j$ (a total of nine bipartite complementations).
After doing this, we obtain an edge-less graph, which therefore has clique-width at most~$1$.
By Facts~\ref{fact:del-vert} and~$\ref{fact:bip}$, it follows that~$G$ must also have bounded clique-width.
This completes the proof of the claim.

\clm{\label{clm:3-p-geq-3}$G$ contains at least three vertex-disjoint triangles.}
Suppose, for contradiction, that the claim is false.
Then~$G$ contains at most two vertex-disjoint triangles, in which case, we can delete at most six vertices to obtain a $(K_3,P_1+\nobreak 2P_2)$-free graph, which has bounded clique-width by Theorem~\ref{thm:gen2}.
By Fact~\ref{fact:del-vert}, $G$ also has bounded clique-width.
This completes the proof of the claim.

\bigskip
We will now assume that the above claims are satisfied and show that this implies that~$G$ is basic.
We arbitrarily fix a triangle~$T^1$ with vertices~$v^{T^1}_1$, $v^{T^1}_2$ and~$v^{T^1}_3$.
To simplify notation, set $v_i=v^{T^1}_i$ for $i \in \{1,2,3\}$.
Recall that by Claim~\ref{clm:3-no-U-triangle}, no~$K_3$ in~$G$ has a vertex in~${\cal U}$.
By Claim~\ref{clm:3-ViT-indep}, it follows that every~$K_3$ in~$G$ apart from~$T^1$ has exactly one vertex in each of $V^{T^1}_1 \setminus {\cal U}$, $V^{T^1}_2 \setminus {\cal U}$ and~$V^{T^1}_3 \setminus {\cal U}$.
We now set $V_1=(V^{T^1}_1 \cup \{v_2\})\setminus {\cal U}$, $V_2=(V^{T^1}_2 \cup \{v_3\}) \setminus {\cal U}$ and $V_3=(V^{T^1}_3 \cup \{v_1\}) \setminus {\cal U}$.

\clm{\label{clm:3-Vi-indep}$V_1,V_2$ and~$V_3$ are independent.}
The vertices in~$V^{T^1}_i$ are exactly the vertices outside~$T^1$ whose unique neighbour in~$T^1$ is~$v_i$.
The claim follows by Claim~\ref{clm:3-ViT-indep}.

\medskip
By Claim~\ref{clm:3-vertex-disjoint-triangles} any two triangles in~$G$ must be vertex-disjoint.
By Claim~\ref{clm:3-induced-mathings-between-triangles}, the edges between any two triangles in~$G$ form a perfect matching.
Let $T^x = \{x_1,x_2,x_3\}$ and $T^y = \{y_1,y_2,y_3\}$ be two distinct triangles in~$G$ with $x_i,y_i \in V_i$ for $i \in \{1,2,3\}$.
By Claim~\ref{clm:3-Vi-indep}, $x_i$ is non-adjacent to~$y_i$ for $i \in \{1,2,3\}$.
This means that the set of edges between~$T^x$ and~$T^y$ is either $\{x_1y_2,x_2y_3,x_3y_1\}$ or $\{x_1y_3,x_2y_1,x_3y_2\}$.
We say that $T^x < T^y$ holds in the first case and $T^y < T^x$ holds in the second.
Note that exactly one of these statements holds for any two distinct triangles in $G$.
Furthermore, note that if~$T^x$ is a triangle other than~$T^1$ then the definition of the sets~$V_i$ implies that $T^1 < T^x$.

We show that the relation~$<$ is transitive.
Suppose, for contradiction, that this is not the case.
Then there must be three pairwise distinct triangles in~$G$, say $T^x = \{x_1,x_2,x_3\}, T^y = \{y_1,y_2,y_3\}$ and $T^z = \{z_1,z_2,z_3\}$, where $x_i,y_i,z_i \in V^T_i$ for $i \in \{1,2,3\}$, with $T^x < T^y$, $T^y < T^z$ and $T^z < T^x$.
Then~$x_1$ is adjacent to~$y_2$, $y_2$ is adjacent to~$z_3$ and~$z_3$ is adjacent to~$x_1$.
Therefore $G[x_1,y_2,z_3]$ is a~$K_3$ which shares exactly one vertex with~$T^x$, which would contradict Claim~\ref{clm:3-vertex-disjoint-triangles}.
Therefore~$<$ is a transitive, anti-symmetric relation on the triangles in~$G$.
We may now order the triangles in~$G$, say $T^1 < T^2 < \cdots < T^p$ for some~$p$.
By Claim~\ref{clm:3-p-geq-3}, it follows that $p \geq 3$.
We now conclude the following:

\clm{\label{clm:3-have-all-K3s}$\{G[T^1],\ldots,G[T^p]\}$ is the set of all induced triangles in~$G$ and each of them has exactly one vertex in each of $V_1,V_2$ and~$V_3$.}
\smallclm{\label{clm:3-Ti-Tj-anti}If $i<j$ and $k +1 \not\equiv \ell \; (\bmod\; 3)$ then $T^i \cap V_k$ is anti-complete to $T^j\cap V_\ell$.}
\lastsmallclm{\label{clm:3-Ti-Tj-comp}If $i<j$ and $k +1 \equiv \ell \; (\bmod\; 3)$ then $T^i \cap V_k$ is complete to $T^j \cap V_\ell$.}

\medskip
Consider a vertex~$x$ that is not in any induced triangle in~$G$.
If $x \notin {\cal U}$ then $x \in V_1 \cup V_2 \cup V_3$ and~$x$ must have exactly one neighbour in every triangle in~$G$.
Let~$W$ be the set of vertices that are not in any triangle in~$G$ and have exactly one neighbour in every induced triangle in~$G$.

We extend the relation~$<$ as follows:
suppose~$T=\{x_1,x_2,x_3\}$ is an induced triangle in~$G$ with $x_1 \in V_1, x_2 \in V_2$ and $x_3 \in V_3$ and suppose $w \in W$.
Then~$w$ is a vertex in~$V_i$ for some $i \in \{1,2,3\}$.
By Claim~\ref{clm:3-Vi-indep}, $w$ is not adjacent to~$x_i$.
Since $w \in W$, $w$ must be adjacent to exactly one vertex of~$T$.
We say that $x < T$ holds if~$x$ is adjacent to~$x_{i+1}$ and $T < x$ if~$x$ is adjacent to~$x_{i-1}$ (we interpret indices modulo~$3$).

Let $w \in W$ and let~$T$ and~$T'$ be triangles in~$G$ such that $w < T$ and $T < T'$.
We will show that $w < T'$.
Say $T=\{x_1,x_2,x_3\}$ and $T'=\{y_1,y_2,y_3\}$, where $x_i,y_i \in V_i$ for $i \in \{1,2,3\}$.
Without loss of generality, assume $w \in V_1$.
Since $w < T$, $w$ is adjacent to~$x_2$.
Since $T < T'$, $x_2$ is adjacent to~$y_3$.
Since $w \in V_1$, $w$ is non-adjacent to~$y_1$.
Now~$w$ cannot be adjacent to~$y_3$, otherwise~$G[w,x_2,y_3]$ would be a triangle that is not vertex-disjoint from~$T$, which would contradict Claim~\ref{clm:3-vertex-disjoint-triangles}.
Since $w \in W$, it must have a neighbour in~$T'$, so~$w$ must therefore be adjacent to~$y_2$.
It follows that $w < T'$.
Similarly, if $T < T'$ and $T' < w$ then $T < w$ and if $T < w$ and $w < T'$ then $T < T'$.

This means that we can now partition~$W$ into sets $W_1,\ldots,W_p$ where~$W_i$ contains the vertices $x \in W$ such that $T^j < x$ for $j \leq i$ and $x < T^j$ for $j >i$.
(Note that $T^1<w$ for all $w \in W$, by construction.)
We immediately conclude the following:

\clm{\label{clm:3-Ti-Wj-anti}If $i<j$ and $k +1 \not\equiv \ell \; (\bmod\; 3)$ then $T^i \cap V_k$ is anti-complete to $W_j\cap V_\ell$.}
\smallclm{\label{clm:3-Ti-Wi-anti}If $i=j$ and $k +1 \not\equiv \ell \; (\bmod\; 3)$ then $T^i \cap V_k$ is anti-complete to $W_j\cap V_\ell$.}
\smallclm{\label{clm:3-Wi-Tj-anti}If $i<j$ and $k +1 \not\equiv \ell \; (\bmod\; 3)$ then $W_i \cap V_k$ is anti-complete to $T^j\cap V_\ell$.}
\smallclm{\label{clm:3-Ti-Wj-comp}If $i<j$ and $k +1 \equiv \ell \; (\bmod\; 3)$ then $T^i \cap V_k$ is complete to $W_j \cap V_\ell$.}
\smallclm{\label{clm:3-Ti-Wi-comp}If $i=j$ and $k +1 \equiv \ell \; (\bmod\; 3)$ then $T^i \cap V_k$ is complete to $W_j \cap V_\ell$.}
\lastsmallclm{\label{clm:3-Wi-Tj-comp}If $i<j$ and $k +1 \equiv \ell \; (\bmod\; 3)$ then $W_i \cap V_k$ is complete to $T^j \cap V_\ell$.}

\medskip
We also prove the following claim:

\clm{\label{clm:3-GW_i-nice}$G[W_i]$ is $(P_1+\nobreak 2P_2)$-free and does not contain an induced~$3P_1$ with one vertex in each of $V_1,V_2$ and~$V_3$.}
Since~$G$ is $(P_1+\nobreak 2P_2)$-free, it follows that~$G[W_i]$ is also $(P_1+\nobreak 2P_2)$-free.
Since the vertices of~$W_i$ do not belong to any triangle of~$G$ and do not belong to ${\cal U}$, it follows that $W_i \subseteq V^{T^1}_1 \cup V^{T^1}_2 \cup V^{T^1}_3$.
The claim then follows by Claim~\ref{clm:3-no-3P_1}.

\medskip
It remains to analyse the edges between the sets $W_1,\ldots,W_p$.

\clm{\label{clm:3-Wi-Wj-anti}If $i<j$ and $k +1 \not\equiv \ell \; (\bmod\; 3)$ then $W_i \cap V_k$ is anti-complete to $W_j\cap V_\ell$.}
Let $i,j \in \{1,\ldots,p\}$ be such that $i<j$.
Let $T^j=\{x_1,x_2,x_3\}$ with $x_k \in V_k$ for $k \in \{1,2,3\}$.
Note that if $x \in W_i$ and $y \in W_j$ then $x < T^j$ and $T^j < y$.
Now $W_i \cap V_k$ is anti-complete to $W_j \cap V_k$ for $k \in \{1,2,3\}$, since~$V_k$ is an independent set by Claim~\ref{clm:3-Vi-indep}.
Suppose $x \in W_i \cap V_1$ and $y \in W_j \cap V_3$.
Then~$x$ and~$y$ are both adjacent to~$x_2$.
Therefore~$x$ and~$y$ cannot be adjacent, otherwise $G[x_2,x,y]$ would be a triangle which is not vertex-disjoint from~$T^j$, which would contradict Claim~\ref{clm:3-vertex-disjoint-triangles}.
By symmetry we conclude that $W_i \cap V_k$ is anti-complete to $W_j \cap V_{k+2}$ for $k \in \{1,2,3\}$ (interpreting subscripts modulo~$3$).
This completes the proof of the claim.

\medskip
The edges between $W_i \cap V_k$ and $W_j \cap V_{k+1}$ for $k \in \{1,2,3\}$ are more complicated, as shown in the following two claims:

\clm{\label{clm:3-Wi-Wj-comp}If $i+1<j$ and $k +1 \equiv \ell \; (\bmod\; 3)$ then $W_i \cap V_k$ is complete to $W_j \cap V_\ell$.}
Let $i,j \in \{1,\ldots,p\}$ be such that $i+1<j$.
Suppose, for contradiction, that $x \in W_i \cap V_1$ and $y \in W_j \cap V_2$ are non-adjacent.
Since $i+2\leq j$ we find that $x < T^{j-1},\allowbreak x<\nobreak T^j,\allowbreak T^{j-1}<y$ and $T^j <y$.
Let $T^j=\{x_1,x_2,x_3\}$ with $x_k \in V_k$ for $k \in \{1,2,3\}$.
Let $T^{j-1}=\{y_1,y_2,y_3\}$, where $y_k \in V_k$ for $k \in \{1,2,3\}$.
Then~$x$ is adjacent to~$y_2$, but non-adjacent to~$x_1$, while~$y$ is adjacent to~$x_1$, but non-adjacent to~$y_2$.
Since $T^{j-1} < T^j$ it follows that~$y_2$ is non-adjacent to~$x_1$.
Since $T^1 < x,y$, the vertex~$v_3$ must be non-adjacent to~$x$ and~$y$ (recall that $v_3=v^{T^1}_3$ and that this vertex has no neighbours in~$V_1$ or~$V_2$ apart from~$v_1$ and~$v_2$).
Now $G[v_3,x,y_2,x_1,y]$ is a $P_1+\nobreak 2P_2$, a contradiction.
By symmetry this completes the proof of the claim.

\clm{\label{clm:3-Wi-Wi+1-triv}If $i+1=j$ and $k+1 \equiv \ell \; (\bmod\; 3)$ then $W_i \cap V_k$ is either complete or anti-complete to $W_j \cap V_\ell$.}
Let $i,j \in \{1,\ldots,p\}$ with $i+1=j$.
Let $T^j=\{x_1,x_2,x_3\}$ with $x_k \in V_k$ for $k \in \{1,2,3\}$.
Assume, for contradiction, that the vertex sets $W_i \cap V_k$ and $W_j \cap V_{k+1}$ are not trivial to each-other for some $k \in \{1,2,3\}$.
Without loss of generality, we may assume that there is a vertex~$x$ with a neighbour~$y$ and a non-neighbour~$y'$ such that either $x \in W_i \cap V_1$ and $y,y' \in W_j \cap V_2$ or $y,y' \in W_i \cap V_1$ and $x \in W_j \cap V_2$.
Note that~$x_3$ is non-adjacent to $x,y$ and~$y'$.
Since $T^1 < x,y,y'$, the vertex~$v_3$ must be non-adjacent to~$x$ and~$y$ (recall that $v_3=v^T_3$ and that this vertex has no neighbours in~$V_1$ or~$V_2$ apart from~$v_1$ and~$v_2$).
Now $G[y',x,y,v_3,x_3]$ is a $P_1+\nobreak 2P_2$, a contradiction.
By symmetry this completes the proof of the claim.

\bigskip
The claims proved above imply all the necessary properties for~$G$ to be basic.
Indeed, Claim~\ref{clm:3-no-U-triangle} implies Property~\ref{prop:3-no-U-triangle} and Claims~\ref{clm:3-UT-indep} and~\ref{clm:3-UT-nice-nbhd} imply Property~\ref{prop:3-UT-nbhd}.
Claims \ref{clm:3-Vi-indep},
\ref{clm:3-have-all-K3s},
\ref{clm:3-GW_i-nice},
\ref{clm:3-Ti-Tj-anti},
\ref{clm:3-Ti-Wj-anti},
\ref{clm:3-Wi-Tj-anti},
\ref{clm:3-Wi-Wj-anti},
\ref{clm:3-Ti-Tj-comp},
\ref{clm:3-Ti-Wj-comp},
\ref{clm:3-Wi-Tj-comp},
\ref{clm:3-Wi-Wj-comp},
\ref{clm:3-Wi-Wi+1-triv},
\ref{clm:3-Ti-Wi-comp} and
\ref{clm:3-Ti-Wi-anti}
imply Properties 
\ref{prop:3-Vi-indep},
\ref{prop:3-have-all-K3s},
\ref{prop:3-GW_i-nice},
\ref{prop:3-Ti-Tj-anti},
\ref{prop:3-Ti-Wj-anti},
\ref{prop:3-Wi-Tj-anti},
\ref{prop:3-Wi-Wj-anti},
\ref{prop:3-Ti-Tj-comp},
\ref{prop:3-Ti-Wj-comp},
\ref{prop:3-Wi-Tj-comp},
\ref{prop:3-Wi-Wj-comp},
\ref{prop:3-Wi-Wi+1-triv},
\ref{prop:3-Ti-Wi-comp},
\ref{prop:3-Ti-Wi-anti} and
 respectively.
Therefore~$G$ is basic, so it has bounded clique-width by Lemma~\ref{lem:basic-bdd-cw}.
This completes the proof.
\qedllncs
\end{proof}

\bibliography{mybib}

\end{document}